\tikzstyle{bwSpider}=[
 \tikzstyle{wbSpider}=[
\tikzstyle{cWire}=[densely dotted, thick]
\tikzstyle{env}=[copoint,regular polygon rotate=0,minimum width=0.2cm, fill=black]
\tikzstyle{probs}=[shape=semicircle,fill=white,draw=black,shape border rotate=180,minimum width=1.2cm]
\tikzstyle{every picture}=[baseline=-0.25em,scale=0.5]
\tikzstyle{dotpic}=[] 
\tikzstyle{diredges}=[every to/.style={diredge}]
\tikzstyle{math matrix}=[matrix of math nodes,left delimiter=(,right delimiter=),inner sep=2pt,column sep=1em,row sep=0.5em,nodes={inner sep=0pt},text height=1.5ex, text depth=0.25ex]
\tikzstyle{inline text}=[text height=1.5ex, text depth=0.25ex,yshift=0.5mm]
\tikzstyle{label}=[font=\footnotesize,text height=1.5ex, text depth=0.25ex,yshift=0.5mm]
\tikzstyle{left label}=[label,anchor=east,xshift=1.5mm]
\tikzstyle{right label}=[label,anchor=west,xshift=-1.5mm]
\tikzstyle{braceedge}=[decorate,decoration={brace,amplitude=2mm,raise=-1mm}]
\tikzstyle{small braceedge}=[decorate,decoration={brace,amplitude=1mm,raise=-1mm}]
\tikzstyle{doubled}=[line width=1.6pt] 
\tikzstyle{boldedge}=[doubled,shorten <=-0.17mm,shorten >=-0.17mm]
\tikzstyle{boldedgegray}=[doubled,gray,shorten <=-0.17mm,shorten >=-0.17mm]
\tikzstyle{singleedgegray}=[gray]
\tikzstyle{semidoubled}=[line width=1.4pt] 
\tikzstyle{semiboldedgegray}=[semidoubled,gray,shorten <=-0.17mm,shorten >=-0.17mm]
\tikzstyle{boxedge}=[semiboldedgegray]
\tikzstyle{dottededge}=[dashed,shorten <=-0.17mm,shorten >=-0.17mm]
\tikzstyle{bolddotteddashed}=[very thick,dashed,shorten <=-0.17mm,shorten >=-0.17mm]
\tikzstyle{vboldedgedashed}=[doubled,dashed,shorten <=-0.17mm,shorten >=-0.17mm]
\tikzstyle{left hook arrow}=[left hook-latex]
\tikzstyle{right hook arrow}=[right hook-latex]
\tikzstyle{sembracket}=[line width=0.5pt,shorten <=-0.07mm,shorten >=-0.07mm]
\tikzstyle{causal edge}=[->,thick,gray]
\tikzstyle{causal nondir}=[thick,gray]
\tikzstyle{timeline}=[thick,gray, dashed]
\tikzstyle{cedge}=[<->,thick,gray!70!white]
\tikzstyle{empty diagram}=[draw=gray!40!white,dashed,shape=rectangle,minimum width=1cm,minimum height=1cm]
\tikzstyle{empty diagram small}=[draw=gray!50!white,dashed,shape=rectangle,minimum width=0.6cm,minimum height=0.5cm]
\tikzstyle{dot}=[inner sep=0mm,minimum width=2mm,minimum height=2mm,draw,shape=circle]
\tikzstyle{phase dot}=[pdot,phase dimensions]
\tikzstyle{wphase dot}=[dot, phase dimensions]
\tikzstyle{leak}=[white dot, shape=regular polygon, minimum size=300mm, regular polygon sides=3, outer sep=-0.2mm, regular polygon rotate=270]
\tikzstyle{preleak}=[trapezium, trapezium angle=67.5, draw, inner sep=0.1pt, outer sep=0pt, minimum height=2mm, minimum width=4pt,rotate=270]
\tikzstyle{proj}=[white dot, shape=regular polygon, minimum size=3.3 mm, regular polygon sides=4, outer sep=-0.2mm]
\tikzstyle{Vleak}=[white dot, shape=regular polygon, minimum size=3.3 mm, regular polygon sides=3, outer sep=-0.2mm, regular polygon rotate=90]
\tikzstyle{dleak}=[white dot, line width=1.6pt, shape=regular polygon, minimum size=3.3 mm, regular polygon sides=3, outer sep=-0.2mm, regular polygon rotate=270]
\tikzstyle{Wsquare}=[white dot, shape=regular polygon, rounded corners=0.8 mm, minimum size=3.3 mm, regular polygon sides=3, outer sep=-0.2mm]
\tikzstyle{Wsquareadj}=[white dot, shape=regular polygon, rounded corners=0.8 mm, minimum size=3.3 mm, regular polygon sides=3, outer sep=-0.2mm, regular polygon rotate=180]
\tikzstyle{ddot}=[inner sep=0mm, doubled, minimum width=2.5mm,minimum height=2.5mm,draw,shape=circle]
\tikzstyle{black dot}=[dot,fill=black]
\tikzstyle{white dot}=[dot,fill=white,,text depth=-0.2mm]
\tikzstyle{white Wsquare}=[Wsquare,fill=gray,,text depth=-0.2mm]
\tikzstyle{white Wsquareadj}=[Wsquareadj,fill=white,,text depth=-0.2mm]
\tikzstyle{green dot}=[white dot] 
\tikzstyle{gray dot}=[dot,fill=gray!40!white,,text depth=-0.2mm]
\tikzstyle{red dot}=[gray dot] 
\tikzstyle{black ddot}=[ddot,fill=black]
\tikzstyle{white ddot}=[ddot,fill=white]
\tikzstyle{gray ddot}=[ddot,fill=gray!40!white]
\tikzstyle{gray edge}=[gray!60!white]
\tikzstyle{small dot}=[inner sep=0.5mm,minimum width=0pt,minimum height=0pt,draw,shape=circle]
\tikzstyle{small black dot}=[small dot,fill=black]
\tikzstyle{small white dot}=[small dot,fill=white]
\tikzstyle{small gray dot}=[small dot,fill=gray!40!white]
\tikzstyle{causal dot}=[inner sep=0.4mm,minimum width=0pt,minimum height=0pt,draw=white,shape=circle,fill=gray!40!white]
\tikzstyle{phase dimensions}=[minimum size=5mm,font=\footnotesize,rectangle,rounded corners=2.5mm,inner sep=0.2mm,outer sep=-2mm]
\tikzstyle{dphase dimensions}=[minimum size=5mm,font=\footnotesize,rectangle,rounded corners=2.5mm,inner sep=0.2mm,outer sep=-2mm]
\tikzstyle{white phase dot}=[dot,fill=white,phase dimensions]
\tikzstyle{white phase ddot}=[ddot,fill=white,dphase dimensions]
\tikzstyle{white rect ddot}=[draw=black,fill=white,doubled,minimum size=5mm,font=\footnotesize,rectangle,rounded corners=2.5mm,inner sep=0.2mm]
\tikzstyle{gray rect ddot}=[draw=black,fill=gray!40!white,doubled,minimum size=6mm,font=\footnotesize,rectangle,rounded corners=3mm]
\tikzstyle{gray phase dot}=[dot,fill=gray!40!white,phase dimensions]
\tikzstyle{gray phase ddot}=[ddot,fill=gray!40!white,dphase dimensions]
\tikzstyle{grey phase dot}=[gray phase dot]
\tikzstyle{grey phase ddot}=[gray phase ddot]
\tikzstyle{small phase dimensions}=[minimum size=4mm,font=\tiny,rectangle,rounded corners=2mm,inner sep=0.2mm,outer sep=-2mm]
\tikzstyle{small dphase dimensions}=[minimum size=4mm,font=\tiny,rectangle,rounded corners=2mm,inner sep=0.2mm,outer sep=-2mm]
\tikzstyle{small gray phase dot}=[dot,fill=gray!40!white,small phase dimensions]
\tikzstyle{small gray phase ddot}=[ddot,fill=gray!40!white,small dphase dimensions]
\tikzstyle{small map}=[draw,shape=rectangle,minimum height=4mm,minimum width=4mm,fill=white]
\tikzstyle{cnot}=[fill=white,shape=circle,inner sep=-1.4pt]
\tikzstyle{asym hadamard}=[fill=white,draw,shape=NEbox,inner sep=0.6mm,font=\footnotesize,minimum height=4mm]
\tikzstyle{asym hadamard conj}=[fill=white,draw,shape=NWbox,inner sep=0.6mm,font=\footnotesize,minimum height=4mm]
\tikzstyle{asym hadamard dag}=[fill=white,draw,shape=SEbox,inner sep=0.6mm,font=\footnotesize,minimum height=4mm]
\tikzstyle{hadamard}=[fill=white,draw,inner sep=0.6mm,font=\footnotesize,minimum height=4mm,minimum width=4mm]
\tikzstyle{small hadamard}=[fill=white,draw,inner sep=0.6mm,minimum height=1.5mm,minimum width=1.5mm]
\tikzstyle{small hadamard rotate}=[small hadamard,rotate=45]
\tikzstyle{dhadamard}=[hadamard,doubled]
\tikzstyle{small dhadamard}=[small hadamard,doubled]
\tikzstyle{small dhadamard rotate}=[small hadamard rotate,doubled]
\tikzstyle{antipode}=[white dot,inner sep=0.3mm,font=\footnotesize]
\tikzstyle{scalar}=[diamond,draw,inner sep=0.5pt,font=\small]
\tikzstyle{dscalar}=[diamond,doubled, draw,inner sep=0.5pt,font=\small]
\tikzstyle{small box}=[rectangle,inline text,fill=white,draw,minimum height=5mm,yshift=-0.5mm,minimum width=5mm,font=\small]
\tikzstyle{small gray box}=[small box,fill=gray!30]
\tikzstyle{medium box}=[rectangle,inline text,fill=white,draw,minimum height=5mm,yshift=-0.5mm,minimum width=10mm,font=\small]
\tikzstyle{square box}=[small box] 
\tikzstyle{medium gray box}=[small box,fill=gray!30]
\tikzstyle{semilarge box}=[rectangle,inline text,fill=white,draw,minimum height=5mm,yshift=-0.5mm,minimum width=12.5mm,font=\small]
\tikzstyle{large box}=[rectangle,inline text,fill=white,draw,minimum height=5mm,yshift=-0.5mm,minimum width=15mm,font=\small]
\tikzstyle{large gray box}=[small box,fill=gray!30]
\tikzstyle{Bayes box}=[rectangle,fill=black,draw, minimum height=3mm, minimum width=3mm]
\tikzstyle{gray square point}=[small box,fill=gray!50]
\tikzstyle{dphase box white}=[dhadamard]
\tikzstyle{dphase box gray}=[dhadamard,fill=gray!50!white]
\tikzstyle{phase box white}=[hadamard]
\tikzstyle{phase box gray}=[hadamard,fill=gray!50!white]
\tikzstyle{point}=[regular polygon,regular polygon sides=3,draw,scale=0.75,inner sep=-0.5pt,minimum width=9mm,fill=white,regular polygon rotate=180]
\tikzstyle{point nosep}=[regular polygon,regular polygon sides=3,draw,scale=0.75,inner sep=-2pt,minimum width=9mm,fill=white,regular polygon rotate=180]
\tikzstyle{copoint}=[regular polygon,regular polygon sides=3,draw,scale=0.75,inner sep=-0.5pt,minimum width=9mm,fill=white]
\tikzstyle{dpoint}=[point,doubled]
\tikzstyle{dcopoint}=[copoint,doubled]
\tikzstyle{pointgrow}=[shape=cornerpoint,kpoint common,scale=0.75,inner sep=3pt]
\tikzstyle{pointgrow dag}=[shape=cornercopoint,kpoint common,scale=0.75,inner sep=3pt]
\tikzstyle{wide copoint}=[fill=white,draw,shape=isosceles triangle,shape border rotate=90,isosceles triangle stretches=true,inner sep=0pt,minimum width=1.5cm,minimum height=6.12mm]
\tikzstyle{wide point}=[fill=white,draw,shape=isosceles triangle,shape border rotate=-90,isosceles triangle stretches=true,inner sep=0pt,minimum width=1.5cm,minimum height=6.12mm,yshift=-0.0mm]
\tikzstyle{wide point plus}=[fill=white,draw,shape=isosceles triangle,shape border rotate=-90,isosceles triangle stretches=true,inner sep=0pt,minimum width=1.74cm,minimum height=7mm,yshift=-0.0mm]
\tikzstyle{wide dpoint}=[fill=white,doubled,draw,shape=isosceles triangle,shape border rotate=-90,isosceles triangle stretches=true,inner sep=0pt,minimum width=1.5cm,minimum height=6.12mm,yshift=-0.0mm]
\tikzstyle{tinypoint}=[regular polygon,regular polygon sides=3,draw,scale=0.55,inner sep=-0.15pt,minimum width=6mm,fill=white,regular polygon rotate=180]
\tikzstyle{white point}=[point]
\tikzstyle{white dpoint}=[dpoint]
\tikzstyle{green point}=[white point] 
\tikzstyle{white copoint}=[copoint]
\tikzstyle{gray point}=[point,fill=gray!40!white]
\tikzstyle{gray dpoint}=[gray point,doubled]
\tikzstyle{red point}=[gray point] 
\tikzstyle{gray copoint}=[copoint,fill=gray!40!white]
\tikzstyle{gray dcopoint}=[gray copoint,doubled]
\tikzstyle{white point guide}=[regular polygon,regular polygon sides=3,font=\scriptsize,draw,scale=0.65,inner sep=-0.5pt,minimum width=9mm,fill=white,regular polygon rotate=180]
\tikzstyle{black point}=[point,fill=black,font=\color{white}]
\tikzstyle{black copoint}=[copoint,fill=black,font=\color{white}]
\tikzstyle{tiny gray point}=[tinypoint,fill=gray!40!white]
\tikzstyle{diredge}=[->]
\tikzstyle{ddiredge}=[<->]
\tikzstyle{rdiredge}=[<-]
\tikzstyle{thickdiredge}=[->, very thick]
\tikzstyle{pointer edge}=[->,very thick,gray]
\tikzstyle{pointer edge part}=[very thick,gray]
\tikzstyle{dashed edge}=[dashed]
\tikzstyle{thick dashed edge}=[very thick,dashed]
\tikzstyle{thick gray dashed edge}=[thick dashed edge,gray!40]
\tikzstyle{thick map edge}=[very thick,|->]
\newcommand{\boxshape}[3]{%
\pgfdeclareshape{#1}{
\inheritsavedanchors[from=rectangle] 
\inheritanchorborder[from=rectangle]
\inheritanchor[from=rectangle]{center}
\inheritanchor[from=rectangle]{north}
\inheritanchor[from=rectangle]{south}
\inheritanchor[from=rectangle]{west}
\inheritanchor[from=rectangle]{east}
\backgroundpath{
\southwest \pgf@xa=\pgf@x \pgf@ya=\pgf@y
\northeast \pgf@xb=\pgf@x \pgf@yb=\pgf@y

\@tempdima=#2
\@tempdimb=#3

\pgfpathmoveto{\pgfpoint{\pgf@xa - 5pt + \@tempdima}{\pgf@ya}}
\pgfpathlineto{\pgfpoint{\pgf@xa - 5pt - \@tempdima}{\pgf@yb}}
\pgfpathlineto{\pgfpoint{\pgf@xb + 5pt + \@tempdimb}{\pgf@yb}}
\pgfpathlineto{\pgfpoint{\pgf@xb + 5pt - \@tempdimb}{\pgf@ya}}
\pgfpathlineto{\pgfpoint{\pgf@xa - 5pt + \@tempdima}{\pgf@ya}}
\pgfpathclose
}
}}
\tikzstyle{cloud}=[shape=cloud,draw,minimum width=1.5cm,minimum height=1.5cm]
\tikzstyle{map}=[draw,shape=NEbox,inner sep=2pt,minimum height=6mm,fill=white]
\tikzstyle{dashedmap}=[draw,dashed,shape=NEbox,inner sep=2pt,minimum height=6mm,fill=white]
\tikzstyle{mapdag}=[draw,shape=SEbox,inner sep=2pt,minimum height=6mm,fill=white]
\tikzstyle{mapadj}=[draw,shape=SEbox,inner sep=2pt,minimum height=6mm,fill=white]
\tikzstyle{maptrans}=[draw,shape=SWbox,inner sep=2pt,minimum height=6mm,fill=white]
\tikzstyle{mapconj}=[draw,shape=NWbox,inner sep=2pt,minimum height=6mm,fill=white]
\tikzstyle{medium map}=[draw,shape=NEbox,inner sep=2pt,minimum height=6mm,fill=white,minimum width=7mm]
\tikzstyle{medium map dag}=[draw,shape=SEbox,inner sep=2pt,minimum height=6mm,fill=white,minimum width=7mm]
\tikzstyle{medium map adj}=[draw,shape=SEbox,inner sep=2pt,minimum height=6mm,fill=white,minimum width=7mm]
\tikzstyle{medium map trans}=[draw,shape=SWbox,inner sep=2pt,minimum height=6mm,fill=white,minimum width=7mm]
\tikzstyle{medium map conj}=[draw,shape=NWbox,inner sep=2pt,minimum height=6mm,fill=white,minimum width=7mm]
\tikzstyle{semilarge map}=[draw,shape=NEbox,inner sep=2pt,minimum height=6mm,fill=white,minimum width=9.5mm]
\tikzstyle{semilarge map trans}=[draw,shape=SWbox,inner sep=2pt,minimum height=6mm,fill=white,minimum width=9.5mm]
\tikzstyle{semilarge map adj}=[draw,shape=SEbox,inner sep=2pt,minimum height=6mm,fill=white,minimum width=9.5mm]
\tikzstyle{semilarge map dag}=[draw,shape=SEbox,inner sep=2pt,minimum height=6mm,fill=white,minimum width=9.5mm]
\tikzstyle{semilarge map conj}=[draw,shape=NWbox,inner sep=2pt,minimum height=6mm,fill=white,minimum width=9.5mm]
\tikzstyle{large map}=[draw,shape=NEbox,inner sep=2pt,minimum height=6mm,fill=white,minimum width=12mm]
\tikzstyle{large map conj}=[draw,shape=NWbox,inner sep=2pt,minimum height=6mm,fill=white,minimum width=12mm]
\tikzstyle{very large map}=[draw,shape=NEbox,inner sep=2pt,minimum height=6mm,fill=white,minimum width=17mm]
\tikzstyle{medium dmap}=[draw,doubled,shape=NEbox,inner sep=2pt,minimum height=6mm,fill=white,minimum width=7mm]
\tikzstyle{medium dmap dag}=[draw,doubled,shape=SEbox,inner sep=2pt,minimum height=6mm,fill=white,minimum width=7mm]
\tikzstyle{medium dmap adj}=[draw,doubled,shape=SEbox,inner sep=2pt,minimum height=6mm,fill=white,minimum width=7mm]
\tikzstyle{medium dmap trans}=[draw,doubled,shape=SWbox,inner sep=2pt,minimum height=6mm,fill=white,minimum width=7mm]
\tikzstyle{medium dmap conj}=[draw,doubled,shape=NWbox,inner sep=2pt,minimum height=6mm,fill=white,minimum width=7mm]
\tikzstyle{semilarge dmap}=[draw,doubled,shape=NEbox,inner sep=2pt,minimum height=6mm,fill=white,minimum width=9.5mm]
\tikzstyle{semilarge dmap trans}=[draw,doubled,shape=SWbox,inner sep=2pt,minimum height=6mm,fill=white,minimum width=9.5mm]
\tikzstyle{semilarge dmap adj}=[draw,doubled,shape=SEbox,inner sep=2pt,minimum height=6mm,fill=white,minimum width=9.5mm]
\tikzstyle{semilarge dmap dag}=[draw,doubled,shape=SEbox,inner sep=2pt,minimum height=6mm,fill=white,minimum width=9.5mm]
\tikzstyle{semilarge dmap conj}=[draw,doubled,shape=NWbox,inner sep=2pt,minimum height=6mm,fill=white,minimum width=9.5mm]
\tikzstyle{large dmap}=[draw,doubled,shape=NEbox,inner sep=2pt,minimum height=6mm,fill=white,minimum width=12mm]
\tikzstyle{large dmap conj}=[draw,doubled,shape=NWbox,inner sep=2pt,minimum height=6mm,fill=white,minimum width=12mm]
\tikzstyle{large dmap trans}=[draw,doubled,shape=SWbox,inner sep=2pt,minimum height=6mm,fill=white,minimum width=12mm]
\tikzstyle{large dmap adj}=[draw,doubled,shape=SEbox,inner sep=2pt,minimum height=6mm,fill=white,minimum width=12mm]
\tikzstyle{large dmap dag}=[draw,doubled,shape=SEbox,inner sep=2pt,minimum height=6mm,fill=white,minimum width=12mm]
\tikzstyle{very large dmap}=[draw,doubled,shape=NEbox,inner sep=2pt,minimum height=6mm,fill=white,minimum width=19.5mm]
\tikzstyle{muxbox}=[draw,shape=rectangle,minimum height=3mm,minimum width=3mm,fill=white]
\tikzstyle{dmuxbox}=[muxbox,doubled]
\tikzstyle{box}=[draw,shape=rectangle,inner sep=2pt,minimum height=6mm,minimum width=6mm,fill=white]
\tikzstyle{dbox}=[draw,doubled,shape=rectangle,inner sep=2pt,minimum height=6mm,minimum width=6mm,fill=white]
\tikzstyle{dmap}=[draw,doubled,shape=NEbox,inner sep=2pt,minimum height=6mm,fill=white]
\tikzstyle{dmapdag}=[draw,doubled,shape=SEbox,inner sep=2pt,minimum height=6mm,fill=white]
\tikzstyle{dmapadj}=[draw,doubled,shape=SEbox,inner sep=2pt,minimum height=6mm,fill=white]
\tikzstyle{dmaptrans}=[draw,doubled,shape=SWbox,inner sep=2pt,minimum height=6mm,fill=white]
\tikzstyle{dmapconj}=[draw,doubled,shape=NWbox,inner sep=2pt,minimum height=6mm,fill=white]
\tikzstyle{ddmap}=[draw,doubled,dashed,shape=NEbox,inner sep=2pt,minimum height=6mm,fill=white]
\tikzstyle{ddmapdag}=[draw,doubled,dashed,shape=SEbox,inner sep=2pt,minimum height=6mm,fill=white]
\tikzstyle{ddmapadj}=[draw,doubled,dashed,shape=SEbox,inner sep=2pt,minimum height=6mm,fill=white]
\tikzstyle{ddmaptrans}=[draw,doubled,dashed,shape=SWbox,inner sep=2pt,minimum height=6mm,fill=white]
\tikzstyle{ddmapconj}=[draw,doubled,dashed,shape=NWbox,inner sep=2pt,minimum height=6mm,fill=white]
\tikzstyle{smap}=[draw,shape=sNEbox,fill=white]
\tikzstyle{smapdag}=[draw,shape=sSEbox,fill=white]
\tikzstyle{smapadj}=[draw,shape=sSEbox,fill=white]
\tikzstyle{smaptrans}=[draw,shape=sSWbox,fill=white]
\tikzstyle{smapconj}=[draw,shape=sNWbox,fill=white]
\tikzstyle{dsmap}=[draw,dashed,shape=sNEbox,fill=white]
\tikzstyle{dsmapdag}=[draw,dashed,shape=sSEbox,fill=white]
\tikzstyle{dsmaptrans}=[draw,dashed,shape=sSWbox,fill=white]
\tikzstyle{dsmapconj}=[draw,dashed,shape=sNWbox,fill=white]
\tikzstyle{mmap}=[draw,shape=mNEbox]
\tikzstyle{mmapdag}=[draw,shape=mSEbox]
\tikzstyle{mmaptrans}=[draw,shape=mSWbox]
\tikzstyle{mmapconj}=[draw,shape=mNWbox]
\tikzstyle{mmapgray}=[draw,fill=gray!40!white,shape=mNEbox]
\tikzstyle{smapgray}=[draw,fill=gray!40!white,shape=sNEbox]
\pgfmathsetmacro{\pgf@shorten@left}{\pgfkeysvalueof{/tikz/shorten left}}
\pgfmathsetmacro{\pgf@shorten@right}{\pgfkeysvalueof{/tikz/shorten right}}
\pgfmathsetmacro{\pgf@shorten@left}{\pgfkeysvalueof{/tikz/shorten left}}
\pgfmathsetmacro{\pgf@shorten@right}{\pgfkeysvalueof{/tikz/shorten right}}
\tikzstyle{kpoint common}=[draw,fill=white,inner sep=1pt,minimum height=4mm]
\tikzstyle{kpoint sc}=[shape=cornerpoint,kpoint common]
\tikzstyle{kpoint adjoint sc}=[shape=cornercopoint,kpoint common]
\tikzstyle{kpoint}=[shape=cornerpoint,shorten left=5pt,kpoint common]
\tikzstyle{kpoint adjoint}=[shape=cornercopoint,shorten left=5pt,kpoint common]
\tikzstyle{kpoint conjugate}=[shape=cornerpoint,shorten right=5pt,kpoint common]
\tikzstyle{kpoint transpose}=[shape=cornercopoint,shorten right=5pt,kpoint common]
\tikzstyle{kpoint symm}=[shape=cornerpoint,shorten left=5pt,shorten right=5pt,kpoint common]
\tikzstyle{wide kpoint sc}=[shape=cornerpoint,kpoint common, minimum width=1 cm]
\tikzstyle{wide kpointdag sc}=[shape=cornercopoint,kpoint common, minimum width=1 cm]
\tikzstyle{black kpoint}=[shape=cornerpoint,shorten left=5pt,kpoint common,fill=black,font=\color{white}]
\tikzstyle{black kpoint sm}=[shape=cornerpoint,shorten left=5pt,kpoint common,fill=black,font=\color{white},scale=0.75]
\tikzstyle{black kpoint adjoint}=[shape=cornercopoint,shorten left=5pt,kpoint common,fill=black,font=\color{white}]
\tikzstyle{black kpointadj}=[shape=cornercopoint,shorten left=5pt,kpoint common,fill=black,font=\color{white}]
\tikzstyle{black kpointadj sm}=[shape=cornercopoint,shorten left=5pt,kpoint common,fill=black,font=\color{white},scale=0.75]
\tikzstyle{black dkpoint}=[shape=cornerpoint,shorten left=5pt,kpoint common,fill=black, doubled,font=\color{white}]
\tikzstyle{black dkpoint adjoint}=[shape=cornercopoint,shorten left=5pt,kpoint common,fill=black, doubled,font=\color{white}]
\tikzstyle{black dkpointadj}=[shape=cornercopoint,shorten left=5pt,kpoint common,fill=black, doubled,font=\color{white}]
\tikzstyle{black dkpoint sm}=[shape=cornerpoint,shorten left=5pt,kpoint common,fill=black, doubled,font=\color{white},scale=0.75]
\tikzstyle{black dkpointadj sm}=[shape=cornercopoint,shorten left=5pt,kpoint common,fill=black, doubled,font=\color{white},scale=0.75]
\tikzstyle{kpointdag}=[kpoint adjoint]
\tikzstyle{kpointadj}=[kpoint adjoint]
\tikzstyle{kpointconj}=[kpoint conjugate]
\tikzstyle{kpointtrans}=[kpoint transpose]
\tikzstyle{big kpoint}=[kpoint, minimum width=1.2 cm, minimum height=8mm, inner sep=4pt, text depth=3mm]
\tikzstyle{wide kpoint}=[kpoint, minimum width=1 cm, inner sep=2pt]
\tikzstyle{wide kpointdag}=[kpointdag, minimum width=1 cm, inner sep=2pt]
\tikzstyle{wide kpointconj}=[kpointconj, minimum width=1 cm, inner sep=2pt]
\tikzstyle{wide kpointtrans}=[kpointtrans, minimum width=1 cm, inner sep=2pt]
\tikzstyle{wider kpoint}=[kpoint, minimum width=1.25 cm, inner sep=2pt]
\tikzstyle{wider kpointdag}=[kpointdag, minimum width=1.25 cm, inner sep=2pt]
\tikzstyle{wider kpointconj}=[kpointconj, minimum width=1.25 cm, inner sep=2pt]
\tikzstyle{wider kpointtrans}=[kpointtrans, minimum width=1.25 cm, inner sep=2pt]
\tikzstyle{gray kpoint}=[kpoint,fill=gray!50!white]
\tikzstyle{gray kpointdag}=[kpointdag,fill=gray!50!white]
\tikzstyle{gray kpointadj}=[kpointadj,fill=gray!50!white]
\tikzstyle{gray kpointconj}=[kpointconj,fill=gray!50!white]
\tikzstyle{gray kpointtrans}=[kpointtrans,fill=gray!50!white]
\tikzstyle{gray dkpoint}=[kpoint,fill=gray!50!white,doubled]
\tikzstyle{gray dkpointdag}=[kpointdag,fill=gray!50!white,doubled]
\tikzstyle{gray dkpointadj}=[kpointadj,fill=gray!50!white,doubled]
\tikzstyle{gray dkpointconj}=[kpointconj,fill=gray!50!white,doubled]
\tikzstyle{gray dkpointtrans}=[kpointtrans,fill=gray!50!white,doubled]
\tikzstyle{white label}=[draw,fill=white,rectangle,inner sep=0.7 mm]
\tikzstyle{gray label}=[draw,fill=gray!50!white,rectangle,inner sep=0.7 mm]
\tikzstyle{black label}=[draw,fill=black,rectangle,inner sep=0.7 mm]
\tikzstyle{dkpoint}=[kpoint,doubled]
\tikzstyle{wide dkpoint}=[wide kpoint,doubled]
\tikzstyle{dkpointdag}=[kpoint adjoint,doubled]
\tikzstyle{wide dkpointdag}=[wide kpointdag,doubled]
\tikzstyle{dkcopoint}=[kpoint adjoint,doubled]
\tikzstyle{dkpointadj}=[kpoint adjoint,doubled]
\tikzstyle{dkpointconj}=[kpoint conjugate,doubled]
\tikzstyle{dkpointtrans}=[kpoint transpose,doubled]
\tikzstyle{kscalar}=[kpoint common, shape=EBox, inner xsep=-1pt, inner ysep=3pt,font=\small]
\tikzstyle{kscalarconj}=[kpoint common, shape=WBox, inner xsep=-1pt, inner ysep=3pt,font=\small]
\tikzstyle{spekpoint}=[kpoint sc,minimum height=5mm,inner sep=3pt]
\tikzstyle{spekcopoint}=[kpoint adjoint sc,minimum height=5mm,inner sep=3pt]
\tikzstyle{dspekpoint}=[spekpoint,doubled]
\tikzstyle{dspekcopoint}=[spekcopoint,doubled]
 \tikzstyle{upground}=[circuit ee IEC,thick,ground,rotate=90,scale=2.5]
 \tikzstyle{downground}=[circuit ee IEC,thick,ground,rotate=-90,scale=2.5]
 \tikzstyle{bigground}=[regular polygon,regular polygon sides=3,draw=gray,scale=0.50,inner sep=-0.5pt,minimum width=10mm,fill=gray]
\tikzstyle{arrs}=[-latex,font=\small,auto]
\tikzstyle{arrow plain}=[arrs]
\tikzstyle{arrow dashed}=[dashed,arrs]
\tikzstyle{arrow bold}=[very thick,arrs]
\tikzstyle{arrow hide}=[draw=white!0,-]
\tikzstyle{arrow reverse}=[latex-]
\tikzstyle{cdnode}=[]
\tikzstyle{discarding}=[fill=white, draw=black, shape=circle, style=upground]
\tikzstyle{smalldiscarding}=[fill=white, draw=black, style=upground, scale=0.5]
\tikzstyle{backdiscard}=[fill=white, draw=black, shape=circle, style=downground, scale=0.5]
\tikzstyle{smallbackdiscard}=[fill=white, draw=black, shape=circle, style=downground, scale=0.5]
\tikzstyle{state}=[fill=white, draw=black, style=triang, tikzit shape=rectangle]
\tikzstyle{kstate}=[fill=white, draw=black, style=kpoint, tikzit shape=rectangle]
\tikzstyle{kstateconj}=[fill=white, draw=black, style=kpoint conjugate, tikzit shape=rectangle]
\tikzstyle{kstateBIG}=[fill=white, draw=black, style=big kpoint, tikzit shape=rectangle]
\tikzstyle{effect}=[fill=white, draw=black, style=triangdag]
\tikzstyle{keffect}=[fill=white, draw=black, style=kpoint adjoint]
\tikzstyle{keffectconj}=[fill=white, draw=black, style=kpoint transpose]
\tikzstyle{morphdag}=[style=mapdag]
\tikzstyle{morph}=[style=hadamard]
\tikzstyle{WIDEmorph}=[style=hadamard, minimum width=14mm]
\tikzstyle{morphtrans}=[style=maptrans]
\tikzstyle{morphconj}=[style=mapconj]
\tikzstyle{CPMmorph}=[style=dmap]
\tikzstyle{CPMmorphconj}=[style=dmapconj]
\tikzstyle{CPMmorphdag}=[style=dmapdag]
\tikzstyle{CPMmorphtrans}=[style=dmaptrans]
\tikzstyle{CPMstate}=[fill=white, draw=black, style=triang, doubled]
\tikzstyle{CPMstateBIG}=[fill=white, draw=black, style={triang_lesssep}, doubled]
\tikzstyle{CPMkstate}=[fill=white, draw=black, style=kpoint, tikzit shape=rectangle, doubled]
\tikzstyle{CPMkstateconj}=[fill=white, draw=black, style=kpoint conjugate, tikzit shape=rectangle, doubled]
\tikzstyle{CPMkstateBIG}=[fill=white, draw=black, style=big kpoint, tikzit shape=rectangle, doubled]
\tikzstyle{CPMkeffect}=[fill=white, draw=black, style=kpoint adjoint, doubled]
\tikzstyle{CPMkeffectconj}=[fill=white, draw=black, style=kpoint transpose, doubled]
\tikzstyle{UHfB}=[fill=white, draw=black, style=triangdag, doubled, inner sep=-2pt]
\tikzstyle{leak}=[style=tinypoint, regular polygon rotate=-90]
\tikzstyle{leakfill}=[style=tinypoint, regular polygon rotate=-90, fill=black]
\tikzstyle{Z}=[style=dot, fill=green]
\tikzstyle{X}=[style=dot, fill=red]
\tikzstyle{black_dot}=[style=dot, fill=black]
\tikzstyle{white_dot}=[style=dot, fill=white]
\tikzstyle{qblack_dot}=[style=ddot, fill=black]
\tikzstyle{qwhite_dot}=[style=ddot, fill=white]
\tikzstyle{whitephase}=[style=wphase dot, fill=white]
\tikzstyle{qredphase}=[style=phase dot, fill=red]
\tikzstyle{qgreenphase}=[style=phase dot, fill=green]
\tikzstyle{had}=[style=hadamard, doubled]
\tikzstyle{box}=[style=hadamard]
\tikzstyle{classhad}=[style=hadamard]
\tikzstyle{antipode}=[style=anti]
\tikzstyle{dottededge}=[-, dotted]
\tikzstyle{double_edge}=[-, double]
\tikzstyle{double_dot}=[-, double, dotted]
\tikzstyle{double_bold}=[very thick, double]
\tikzstyle{double edge}=[-, very thick]
\tikzstyle{new edge style 0}=[<-]
\tikzstyle{new edge style 1}=[-, draw={rgb,255: red,255; green,204; blue,204}, fill={rgb,255: red,255; green,204; blue,204}]
\tikzstyle{new edge style 2}=[-, draw={rgb,255: red,14; green,188; blue,83}]
\tikzstyle{new edge style 3}=[<-, draw={rgb,255: red,255; green,204; blue,204}]
\tikzstyle{new edge style 4}=[<-, draw={rgb,255: red,0; green,128; blue,128}]
\tikzstyle{new edge style 5}=[-, draw={rgb,255: red,214; green,110; blue,62}]
\tikzstyle{new edge style 6}=[-, draw={rgb,255: red,174; green,20; blue,174}]
\newcommand{\tikzfigscale}[2]{\scalebox{#1}{\tikzfig{#2}}}
\def\be{\begin{equation}}
\def\ee{\end{equation}}
\def\ba{\begin{align}}
\def\ea{\end{align}}
\newtheorem{definition}{Definition}
\newtheorem{theorem}{Theorem}
\newtheorem{lemma}{Lemma}
\newtheorem{example}{Example}
\tikzstyle{every picture}=[baseline=-0.25em,shorten <=-0.1pt]
\tikzstyle{dotpic}=[scale=0.5]
\tikzstyle{braceedge}=[decorate,decoration={brace,amplitude=1mm,raise=-1mm}]
\tikzstyle{dot}=[inner sep=0.7mm,minimum width=0pt,minimum height=0pt,fill=black,draw=black,shape=circle]
\tikzstyle{small dot}=[inner sep=0.1mm,minimum width=0pt,minimum height=0pt,fill=black,draw=black,shape=circle]
\tikzstyle{black dot}=[dot]
\tikzstyle{white dot}=[dot,fill=white]
\tikzstyle{gray dot}=[dot,fill=gray!40!white]
\tikzstyle{alt white dot}=[white dot,label={[xshift=3mm,yshift=-0.05mm,font=\tiny]left:$*$}]
\tikzstyle{alt gray dot}=[gray dot,label={[xshift=3mm,yshift=-0.05mm,font=\tiny]left:$*$}]
\tikzstyle{white norm}=[rectangle,fill=white,draw=black,minimum height=2mm,minimum width=2mm,inner sep=0pt,font=\small]
\tikzstyle{gray norm}=[white norm,fill=gray!40!white]
\tikzstyle{square box}=[rectangle,fill=white,draw=black,minimum height=5mm,minimum width=5mm,font=\small]
\tikzstyle{square gray box}=[rectangle,fill=gray!30,draw=black,minimum height=6mm,minimum width=6mm]
\tikzstyle{diredge}=[->]
\tikzstyle{rdiredge}=[<-]
\tikzstyle{dashed edge}=[dashed]
\tikzstyle{cross}=[preaction={draw=white, -, line width=3pt}]
\newcommand{\dotdualmult}[1]{%
\!\begin{tikzpicture}[dotpic]
    \node [style=white dot] (0) at (0, 0.3) {};
    \node [style=none] (1) at (-0.5, -0.4) {};
    \node [style=none] (2) at (0.5, -0.4) {};
    \node [style=none] (3) at (0, 0.8) {};
    \draw [style=diredge] (3.center) to (0);
    \draw [style=diredge, in=15, out=-30, looseness=1.50] (0) to (1.center);
    \draw [style=diredge, in=165, out=-150, looseness=1.50] (0) to (2.center);
\end{tikzpicture}\!}
\newcommand{\dotconorm}[1]{%
\,\begin{tikzpicture}[dotpic,yshift=0.4mm]
    \node [style=none] (0) at (0, -0.4) {};
    \node [style=white norm] (1) at (0, 0.1) {};
    \node [style=none] (2) at (0, 0.5) {};
    \draw [style=diredge] (1) to (0.center);
    \draw (2.center) to (1);
\end{tikzpicture}\,}
\newcommand{\astfootnote}[1]{
\let\oldthefootnote=\thefootnote
\setcounter{footnote}{0}
\renewcommand{\thefootnote}{\fnsymbol{footnote}}
\footnote{#1}
\let\thefootnote=\oldthefootnote
}
\title{Polycategorical Constructions for Unitary Supermaps of Arbitrary Dimension}
\author{Matt Wilson}
\email{matthew.wilson@cs.ox.ac.uk}
\affiliation{Quantum Group, Department of Computer Science, University of Oxford}
\affiliation{HKU-Oxford Joint Laboratory for Quantum Information and Computation}
\affiliation{Programming Principles Logic and Verification Group,
University College London
London, UK}
\affiliation{Universit{e} Paris-Saclay, CNRS, ENS Paris-Saclay, Inria, CentraleSup\'{e}lec, Laboratoire M{e}thodes Formelles}
\author{Giulio Chiribella}
\email{giulio.chiribella@cs.ox.ac.uk}
\affiliation{Quantum Group, Department of Computer Science, University of Oxford}
\affiliation{HKU-Oxford Joint Laboratory for Quantum Information and Computation}
\affiliation{QICI Quantum Information and Computation Initiative, Department of Computer Science}
\affiliation{Perimeter Institute for Theoretical Physics, 31 Caroline Street North, Waterloo, Ontario, Canada}
\begin{document} \emergencystretch 3em

\maketitle

\begin{abstract}
We provide a construction for holes into which morphisms of abstract symmetric monoidal categories can be inserted, termed the \textit{polyslot} construction $\mathbf{pslot}[\mathbf{C}]$, and identify a sub-class $\mathbf{srep}[\mathbf{C}]$ of polyslots which are \textit{single-party representable}. These constructions strengthen a previously introduced notion of locally-applicable transformation used to characterize quantum supermaps in a way that is sufficient to reconstruct unitary supermaps directly from the monoidal structure of the category of unitaries. Both constructions furthermore freely reconstruct the enriched polycategorical semantics for quantum supermaps which allows to compose supermaps in sequence and in parallel whilst forbidding the creation of time-loops. By doing so supermaps and their polycategorical semantics are generalized to infinite dimensions, in such a way as to include canonical examples such as the quantum switch. Beyond specific applications to quantum-relevant categories, a general class of categorical monoidal categories termed path-contraction groupoids are defined on which the $\mathbf{srep}[\mathbf{C}]$ and $\mathbf{pslot}[\mathbf{C}]$ constructions are shown to coincide.
\end{abstract}

\section{Introduction}
A key concept in a variety of scientific and mathematical disciplines is the specification of two classes of data, a collection of systems, and a specification processes which act upon those systems. A common emergent theme within some such fields has been the development of the concept of a hole into-which a process could be inserted, such instances can be seen within the study of higher-order quantum computation \cite{Chiribella2008QuantumArchitectureb, Chiribella2008TransformingSupermaps, Chiribella2009QuantumStructure, Chiribella2010NormalOperations, Chiribella2009TheoreticalNetworks}, quantum causality \cite{Chiribella2008QuantumArchitectureb, Chiribella2008TransformingSupermaps, Chiribella2009QuantumStructure, Chiribella2010NormalOperations, Chiribella2009TheoreticalNetworks, Oreshkov2012QuantumOrder}, bidirectional programming \cite{Riley2018CategoriesOptics, Boisseau2020StringOptics, Boisseau2022CorneringOptics}, game-theory \cite{Hedges2017CoherenceGames, Hedges2019TheTheory, Bolt2019BayesianGames, Ghani2016CompositionalTheory}, machine learning \cite{Cruttwell2021CategoricalLearning}, open systems dynamics \cite{vanderMeulen2020AutomaticModels}, quantum open systems dynamics \cite{Pollock2015Non-MarkovianCharacterisation}, categorical cybernetics \cite{Hedges2024YogaContexts}, and even financial trading \cite{Genovese2021EscrowsOptics}. 

A natural primitive notion of diagram-with-hole for an arbitrary symmetric monoidal category can be given by taking a circuit diagram term, and puncturing a series of holes into it: \[   \tikzfigscale{0.75}{figs/introcomb}  \]
Such diagrams have been studied in quantum theory under the name of quantum-combs \cite{Chiribella2008QuantumArchitectureb}, and in bidirectional programming as profunctor optics \cite{Riley2018CategoriesOptics, Roman2020CombFeedback, Roman2020OpenCalculus} with the two approaches connected in the unitary case by \cite{Hefford2022CoendCombs}. However, in quantum contexts considerable attention is given to a generalisation of the above picture to black-box holes called quantum supermaps \cite{Chiribella2008TransformingSupermaps}, which are not assumed to be expressed as circuit-diagrams of the above form. The canonical example of such a supermap is the quantum switch \cite{Chiribella2013QuantumStructure, Chiribella2009QuantumStructure} which represents a quantum superposition of two possible diagrams with open holes \[   \tikzfigscale{1.3}{figs/introswitch} \quad = \quad \tikzfigscale{1.3}{figs/introswitch0} \quad + \quad \tikzfigscale{1.3}{figs/introswitch1}.  \]
The concept of a black-box hole, which processes may be plugged into, is at the intuitive level easy enough to imagine, and yet, it has been unclear how to generalize quantum supermaps appropriately to arbitrary operational probabilistic theories \cite{Chiribella2009ProbabilisticPurification, Chiribella2016QuantumPrinciples}, including to infinite-dimensional quantum theory. A proposal of \cite{Chiribella2010NormalOperations} refers only to single inputs, and it is unclear whether the proposal of \cite{Giacomini2015IndefiniteSystems} produces maps that can be suitably extended to be applied on part of any bipartite process. A proposal of \cite{Wilson2022QuantumLocality} is to use $*\mathbf{Hilb}$ \cite{Gogioso2019QuantumMechanics, Gogioso2018TowardsMechanics} which produces fairly well-behaved results but however requires understanding of the use of non-standard analysis or $2$-category theory and as currently defined is only appropriate for the unitary (non-mixed) setting. The issue, in short, is that whilst the spirit of the definition of quantum supermaps is intended to be abstract and black-box, in practice the definition of supermap on a physical theory requires knowledge of mathematical structure beyond the circuit-theoretic structure of that theory, such as the existence of an appropriate raw-material category into which the quantum channels embed \cite{Kissinger2019AStructure, Bisio2019TheoreticalTheory, Simmons2022Higher-orderBV-logic, ApadulaNo-signallingStructure}. 

This article is written with the aim of suggesting appropriate definitions for supermaps that require \textit{only} the circuit-theoretic structure of the categories they act on in their definition. An exploration of the available definitions of supermaps in general symmetric monoidal categories is expected to have two main applications, first, a satisfactory generalization to infinite dimensions would allow to make a connection between the supermap program and the program of unification of quantum theory with gravitational physics, where quantum causal structures such as those present in the quantum switch are predicted by some to play a key conceptual role \cite{Hardy2006TowardsStructure, Oreshkov2012QuantumOrder}. Beyond applications to quantum gravity, a principled definition of black-box hole, and exploration of the landscape of possible definitions may be of use to those other fields in which circuits with open holes are currently studied.

In previous work, a definition of locally-applicable transformation was proposed for modeling black-box holes in general symmetric monoidal categories, and shown to recover the quantum supermaps when applied to the symmetric monoidal category of quantum channels \cite{Wilson2022QuantumLocality}. The key principle was to capture the following expected behaviour of a hole, the possibility to apply to part of any bipartite process whilst commuting with local actions on the environment: \[ \tikzfig{figs/informal_cp_4aa} \quad = \quad \tikzfig{figs/informal_cp_4ab}.  \]
Whilst in this work locally-applicable transformations on quantum channels were shown to be in one-to-one correspondence with quantum supermaps, there are two properties we desire for a construction of supermaps on arbitrary symmetric monodial categories which are not exhibited by the definition of a locally-applicable transformation.
\begin{itemize}
    \item First, we aim to find a construction on symmetric monoidal categories which when applied to the category $\mathbf{fU}$ of finite-dimensional unitary processes, recovers the unitary-preserving quantum supermaps.
    \item  Secondly, we aim to find a construction that allows to unambiguously give formal meaning to the following intuitive pictures that one would like to safely imagine when thinking about such holes: \[   \tikzfig{figs/informal_par}  \quad \quad \quad \quad  \tikzfig{figs/informal_poly1}   \] In short we desire a construction strong enough freely give a monoidal \cite{Lane1971CategoriesMathematician} and polycategorical \cite{szabo_polycats} semantics for holes in symmetric monoidal categories, capturing the heart of the linear distributivity of previous approaches to constructing categories of quantum supermaps \cite{Kissinger2019AStructure, Simmons2022Higher-orderBV-logic}.
\end{itemize}
In this paper, we provide two stronger constructions that satisfy these requirements. The first construction, termed the $\mathbf{srep}[\mathbf{C}]$ construction, reconstructs supermaps by assuming a powerful structural theorem, that as viewed by single parties, they act as combs \cite{Chiribella2008TransformingSupermaps}. By developing a second construction of a polycategory of \textit{polyslots} termed $\mathbf{pslot}[\mathbf{C}]$ we show that the decomposition of supermaps at the single-party level as combs is a consequence in unitary quantum theory of a strong-locality principle. This strong-locality can be interpreted as taking the bi-commutant of the family of combs, and so connects the definition of supermaps to the definition of subsystems as bi-commutant families of operations \cite{Gogioso2019ASpace}. In our first class of results, we show that the above constructions indeed return polycategories.  
\begin{theorem}
$\mathbf{pslot}[\mathbf{C}]$ and $\mathbf{srep}[\mathbf{C}]$ are symmetric polycategories
\end{theorem}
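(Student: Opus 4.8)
The plan is to verify, directly and for both constructions, the defining data and axioms of a symmetric polycategory in the sense of \cite{szabo_polycats}. So for every pair of finite lists $\Gamma,\Delta$ of objects (here a single object is a ``wire-type'' $(A,B)$, encoding a hole of profile $A\to B$) we must produce: a set of poly-morphisms $\Gamma\to\Delta$; identity poly-morphisms $1_{(A,B)}\colon (A,B)\to (A,B)$; a family of \emph{cut} operations composing one chosen output wire of one poly-morphism against one chosen input wire of another; and left/right actions of the symmetric groups on the lists of input and output wires. We then check the two unit laws, equivariance of the cut under these actions, and the associativity/interchange laws for iterated cuts --- i.e.\ the two ``consecutive'' cases (where the second cut acts on a wire produced by the first) and the ``independent'' case (two cuts along two distinct wires, which must give the same result in either order).

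For $\mathbf{pslot}[\mathbf{C}]$ I would proceed as follows. Recall that a polyslot models a (family of) black-box hole(s): it assigns, naturally in the morphisms plugged into its input wires and subject to the no-time-loop condition, an output family of morphisms of $\mathbf{C}$. \textbf{Identities:} the identity polyslot on $(A,B)$ returns precisely the morphism plugged into its single hole; both unit laws are then immediate from this and from naturality. \textbf{Cut:} given polyslots $S$ (with $(A,B)$ among its inputs) and $T$ (with $(A,B)$ among its outputs), define $S\circ_{(A,B)} T$ by plugging --- an assignment of test morphisms to the holes of $S\circ_{(A,B)} T$ determines via $T$ a morphism of the right profile $A\to B$, which is fed into the distinguished hole of $S$, whose output is declared the output of the composite. \textbf{Symmetry:} the group actions are just relabellings of the lists of input and output wires. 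The first genuine task is \textbf{well-definedness of the cut}: one must show the composite is again a polyslot --- naturality follows by pasting the naturality data of $S$ and of $T$ --- and, crucially, that its wiring stays acyclic, so that the no-time-loop constraint is preserved and the class is closed under cut; if polyslots are taken up to an equivalence, independence from the chosen representatives must also be checked at this point.

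With the cut established, the remaining axioms reduce to bookkeeping plus naturality. Equivariance and the unit laws are routine relabelling arguments. The two consecutive-associativity cases follow from associativity of composition in $\mathbf{C}$ together with naturality of the polyslots involved. \textbf{The main obstacle} is the independent/interchange case: two cuts along two distinct wires must be performable in either order with the same result. This should hold because the two test morphisms being substituted sit in disjoint parts of the diagram, so the naturality of the ambient polyslot lets one slide them past one another; but making this precise requires carefully tracking the index sets of all holes and the causal/acyclicity data through both orders of composition, and this is the step I expect to be most delicate --- especially keeping the argument uniform in the (possibly infinite-dimensional) ambient category, where no dimension-counting shortcuts are available.

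For $\mathbf{srep}[\mathbf{C}]$ I would exploit that it is a sub-collection of $\mathbf{pslot}[\mathbf{C}]$, so that the polycategory axioms are inherited by restriction once three closure properties are shown: (i) identity polyslots are single-party representable; (ii) single-party representability is preserved by the cut; (iii) and by the symmetric-group actions. Items (i) and (iii) are straightforward --- an identity polyslot is represented by an identity comb as seen by every party, and relabelling does not affect representability. Item (ii) is the crux: from comb representations of $S$ and of $T$ as seen by each party one must assemble a comb representation of $S\circ_{(A,B)} T$; concretely this amounts to plugging one comb into another and re-associating, i.e.\ to the (poly)categorical composition of combs, which --- where that comb-composition is already available from the earlier sections --- reduces to matching up the comb-data, and otherwise must be proved as a lemma here. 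This closure step, together with the interchange case above, is where essentially all the content of the theorem sits.
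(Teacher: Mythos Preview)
Your outline for $\mathbf{srep}[\mathbf{C}]$ is essentially the paper's argument: single-party representable supermaps are polyslots, so all polycategory axioms are inherited by restriction, and the only thing to check is closure under the cut, which follows because combs compose to combs.

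For $\mathbf{pslot}[\mathbf{C}]$, however, there is a genuine gap. You repeatedly invoke ``naturality'' of the polyslot as the mechanism for the interchange law, and you speak of checking that ``its wiring stays acyclic, so that the no-time-loop constraint is preserved''. Neither of these is the right lever. A polyslot is \emph{not} defined by any acyclicity or no-time-loop condition; it is defined as a multi-input locally-applicable transformation whose induced single-input transformations are \emph{slots}, i.e.\ commute with every other locally-applicable transformation. The paper devotes an entire section to showing that naturality alone (the locally-applicable-transformation property) is \emph{insufficient}: the transformations $S^{loop}$ and $S^{V}$ are natural in exactly your sense yet fail the interchange law when applied to the swap. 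So an argument that derives interchange from ``naturality lets one slide the two substituted morphisms past one another'' would, if it worked, also prove interchange for $\mathbf{lot}[\mathbf{C}]$ --- which is false.

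What the paper actually does in the interchange/independent-cut case is to rewrite the two orders of composition until one side exhibits precisely the slot commutation square --- a slot $\hat{Q}$ acting on one side of a bipartite process and a locally-applicable transformation $\hat{T}$ acting on the other --- and then invokes the slot equation to swap them. That single use of the slot condition is the entire content of the axiom; everything else is bookkeeping with braidings. Similarly, ``closure under cut'' is not about preserving acyclicity but about showing the composite again satisfies the slot condition on each input, which again uses the slot property of the constituents rather than mere naturality. Your proof sketch would go through once you replace the appeals to naturality and acyclicity by explicit invocations of the slot commutation condition at those two points.
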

In our second class of results, we prove that in a broad class of categories termed ``path contraction groupoids" (which include all symmetric monoidal groupoids which are subcategories of compact closed categories) the above constructions coincide.
\begin{theorem}
Let $\mathbf{G}$ be a path-contraction groupoid, then $\mathbf{pslot}[\mathbf{G}]  =\mathbf{srep}[\mathbf{G}]$
\end{theorem}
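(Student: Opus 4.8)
The plan is to prove the two inclusions $\mathbf{srep}[\mathbf{G}] \subseteq \mathbf{pslot}[\mathbf{G}]$ and $\mathbf{pslot}[\mathbf{G}] \subseteq \mathbf{srep}[\mathbf{G}]$ separately, with the second being the substantive direction. The first inclusion should be essentially formal: a single-party representable polyslot is, by definition, one that arises from a comb (a circuit-with-holes), and one checks directly that such a concrete family of insertions satisfies the defining axioms of a polyslot (compatibility with bipartite extensions and commutation with local actions on the environment). This amounts to unfolding both definitions and verifying that the comb data induces a legitimate polyslot, which I expect to be a diagram chase using only the symmetric monoidal axioms of $\mathbf{G}$; no groupoid hypothesis is needed here.

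For the converse, I would take an arbitrary polyslot $S \in \mathbf{pslot}[\mathbf{G}]$ and show it must be representable by a comb. The strategy is the one hinted at in the introduction: reconstruct the comb from the polyslot's action on well-chosen test inputs. Concretely, for a polyslot with a single hole $A \to B$ and global type $X \to Y$, one applies $S$ to a ``maximally entangled''-style probe (in the groupoid setting, to identities or to generators witnessing the path structure), extracting candidate Kraus-like data; one then uses the \emph{path-contraction} property of $\mathbf{G}$ to show this data assembles into a genuine two-slot circuit $X \to (A \otimes E)$ followed by $(B \otimes E) \to Y$ for some environment object $E$. The path-contraction axiom is precisely what guarantees that the ``wire'' threading through the hole can be contracted consistently — i.e. that the environment system $E$ and the two halves of the comb are well-defined and independent of the choices made in probing. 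Finally one verifies that the reconstructed comb, viewed back as a polyslot via the first inclusion, agrees with $S$ on \emph{all} inputs, not merely on the probes; here one invokes the strong-locality/bi-commutant characterization so that agreement on a generating family forces agreement everywhere.

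The main obstacle I anticipate is the reconstruction step in infinite dimensions / in an abstract groupoid where one cannot simply pick a basis or a maximally entangled state. In finite-dimensional quantum theory the comb decomposition of \cite{Chiribella2008TransformingSupermaps} leans on choosing bases and Choi–Jamio\l kowski duals; the point of isolating ``path-contraction groupoids'' as an axiomatic class is to replace those concrete moves with a purely structural contraction operation. So the crux is to show that the abstract path-contraction data does everything the Choi isomorphism did: that it produces an environment object $E$ canonically, that the two halves of the comb are morphisms in $\mathbf{G}$ (not just in some ambient completion), and that associativity/coherence of repeated contractions matches the polycomposition in $\mathbf{pslot}[\mathbf{G}]$. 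I would organize this as a sequence of lemmas: (i) path-contraction yields a well-defined ``transpose across the hole'' operation; (ii) this operation applied to a polyslot lands in $\mathbf{G}$ and is natural; (iii) the resulting comb, re-interpreted, is a two-sided inverse to the representability map on $\mathbf{srep}[\mathbf{G}]$. Once (i)--(iii) are in place, combined with Theorem~1 ensuring both sides are symmetric polycategories, the equality $\mathbf{pslot}[\mathbf{G}] = \mathbf{srep}[\mathbf{G}]$ follows as an identity of polycategories, since the representability map is then a bijection on hom-sets that preserves and reflects all compositions and symmetries.
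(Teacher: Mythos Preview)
Your proposal has the right overall shape for the easy inclusion $\mathbf{srep}[\mathbf{G}] \subseteq \mathbf{pslot}[\mathbf{G}]$, but it misses the key mechanism the paper uses for the hard direction $\mathbf{pslot}[\mathbf{G}] \subseteq \mathbf{srep}[\mathbf{G}]$, and your proposed replacement does not obviously work.

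The paper does \emph{not} build a Choi-style ``transpose across the hole'' from path-contraction alone. The probe is indeed the swap morphism, but the crucial point is how one shows that $S(\text{swap})$ has the no-pathing decomposition needed to apply path-contraction at all. This is where the bi-commutant condition enters, and not merely as a rigidity tool at the end: the paper constructs two specific locally-applicable transformations $S^{V}$ and $S^{loop}$ on $\mathbf{G}$ (conditioning on the signalling structure of the input, which is stable under local isomorphisms precisely because $\mathbf{G}$ is a groupoid). These are \emph{not} slots, but a slot $S$ must commute with them. Commutation with $S^{V}$ is used in a contradiction argument to force $S(\text{swap})$ into the required no-pathing class; only then can one invoke path-contraction. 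Commutation with $S^{loop}$ is then what actually produces the comb: one writes $S(\phi)$ via naturality as $S$ acting on swap with $\phi$ on the auxiliary wire, inserts $S^{loop}$, commutes it past $S$, and evaluates --- the loop created by $S^{loop}$ contracts the swap and leaves $\phi$ sandwiched between the two witnesses $U_1,U_2$ of the no-pathing decomposition of $S(\text{swap})$.

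Your plan, by contrast, hopes that path-contraction by itself supplies a well-defined ``transpose'' landing in $\mathbf{G}$, and defers the slot condition to a final uniqueness step. But path-contraction is only defined on morphisms already known to satisfy a no-pathing constraint; you have no argument that $S$ applied to your probe lands in that class. Without the $S^{V}$ lemma (slots preserve signalling constraints) there is no reason the contraction is even defined, and without $S^{loop}$ there is no mechanism to identify the result with $S(\phi)$ for arbitrary $\phi$. The groupoid hypothesis is used precisely to make $S^{V}$ and $S^{loop}$ well-defined locally-applicable transformations; your sketch never uses it.
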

As a corollary of this theorem, we find that either construction characterizes the finite dimensional quantum supermaps in both the mixed and unitary cases.
\begin{theorem}Polyslots generalize quantum supermaps on the quantum channels and on the unitaries to arbitrary symmetric monoidal categories. Formally, there is an equivalence \[ \mathbf{pslot}[\mathbf{fU}] \cong \mathbf{uQS} \] of polycategories where $\mathbf{fU}$ is the catetegory of unitaries and $\mathbf{uQS}$ the polycategory of unitary-preserving quantum supermaps along with an equivalence \[ \mathbf{pslot}[\mathbf{fQC}] \cong \mathbf{QS} \] of polycategories where $\mathbf{fQC}$ is the category of finite-dimensional quantum channels and $\mathbf{QS}$ the polycategory of quantum supermaps.
\end{theorem}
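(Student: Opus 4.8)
The plan is to handle the two equivalences in turn. The unitary one will follow from Theorem~2 applied to $\mathbf{fU}$ together with the structure theorem for unitary supermaps; the mixed one will then be obtained by a sandwich argument resting on purification and on the prior characterization of locally-applicable transformations. So the first step is to verify that $\mathbf{fU}$, the symmetric monoidal category of finite-dimensional Hilbert spaces and unitaries, is a path-contraction groupoid. Being a groupoid is immediate since the inverse of a unitary is its adjoint, and the path-contraction axioms should reduce for $\mathbf{fU}$ to the fact that two objects are linked by a unitary exactly when they have the same dimension, so that any zig-zag of unitaries with matching endpoint dimensions can be straightened into a single unitary; the monoidal and symmetric structure causes no trouble since tensor products and swap maps of unitaries are again unitary. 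Granting this, Theorem~2 gives $\mathbf{pslot}[\mathbf{fU}] = \mathbf{srep}[\mathbf{fU}]$, so it remains to produce an equivalence of symmetric polycategories $\mathbf{srep}[\mathbf{fU}] \cong \mathbf{uQS}$.

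An element of $\mathbf{srep}[\mathbf{fU}]$ is by construction a consistent family of single-party combs; evaluating such a family on a genuine unitary insertion yields a linear map, and I would show (i) that this map is itself unitary, so that the family defines a unitary-preserving supermap, and (ii) conversely that every unitary-preserving supermap arises this way — which is precisely the content of the comb structure theorem \cite{Chiribella2008TransformingSupermaps}: such a supermap decomposes as a unitary comb and hence, as seen by each single party, acts as a comb, producing an element of $\mathbf{srep}[\mathbf{fU}]$. One then checks that this assignment is a bijection on poly-hom-sets — the comb gauge freedom being quotiented out on both sides because supermaps and polyslots are alike defined extensionally — and that it intertwines units, serial and parallel polycomposition and the symmetric-group actions, hence is an equivalence of polycategories; essential surjectivity on objects is trivial.

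For the mixed equivalence Theorem~2 does not apply directly since $\mathbf{fQC}$ is not a groupoid, so I would argue by sandwiching. On one side, combs assembled from quantum channels are polyslots, since $\mathbf{pslot}$ contains the family of combs out of which it is built as a bi-commutant; on the other side, every polyslot is in particular a locally-applicable transformation, the polyslot (strong-locality) condition being a refinement of locality. Combining these with the prior result that locally-applicable transformations on $\mathbf{fQC}$ coincide with quantum supermaps \cite{Wilson2022QuantumLocality}, and with the fact \cite{Chiribella2008TransformingSupermaps} that every quantum supermap dilates to a unitary one and hence admits a comb form on channels, yields
\[ \{\text{combs on }\mathbf{fQC}\}\ \subseteq\ \mathbf{pslot}[\mathbf{fQC}]\ \subseteq\ \{\text{loc.\ appl.\ transf.\ on }\mathbf{fQC}\}\ =\ \mathbf{QS}\ =\ \{\text{combs on }\mathbf{fQC}\}, \]
forcing $\mathbf{pslot}[\mathbf{fQC}] = \mathbf{QS}$ at the level of transformations; the polycategorical structure is then transported along this identification, using the unitary case just established and the compatibility of supermap dilation with serial and parallel polycomposition.

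The two steps I expect to be hardest are, first, the clean verification of the path-contraction groupoid axioms for $\mathbf{fU}$ — conceptually straightforward but requiring careful bookkeeping of how paths interact with tensor factors and swap maps and of agreement with the precise hypotheses used in Theorem~2 — and, more substantially, upgrading the bijections on morphisms to genuine equivalences of \emph{polycategories}: one must confirm the correspondence respects multi-slot composition, the acyclicity (no time-loop) constraint and the symmetry rather than merely matching morphisms individually, and in the mixed case this forces the cited prior characterizations to be invoked at the polycategorical level, or else reconstructed there directly by purifying both objects and composition operations.
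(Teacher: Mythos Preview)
Your overall architecture matches the paper's: invoke Theorem~2 on $\mathbf{fU}$ to obtain $\mathbf{pslot}[\mathbf{fU}]=\mathbf{srep}[\mathbf{fU}]$, then identify $\mathbf{srep}[\mathbf{fU}]$ with $\mathbf{uQS}$ via the single-party comb decomposition of (unitary) supermaps from \cite{Chiribella2008TransformingSupermaps,Eggeling2002SemicausalSemilocalizable}; and for $\mathbf{fQC}$ bypass Theorem~2 (since $\mathbf{fQC}$ is not a groupoid) and argue directly from the prior characterization of locally-applicable transformations \cite{Wilson2022QuantumLocality}. Two concrete points, however, need fixing.

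First, your description of the path-contraction axioms is not what path-contraction means here. It is not a zig-zag straightening condition on morphisms between objects of equal dimension; it is a partial-trace-like natural transformation $\eta:pc_A(X,X')\to\mathbf{C}(X,X')$ that contracts an input wire with an output wire on morphisms satisfying a no-signalling constraint. For $\mathbf{fU}$ this is already established in an earlier example and is immediate from the embedding $\mathbf{fU}\subseteq\mathbf{fHilb}$: the cup and cap of the compact closed $\mathbf{fHilb}$ supply $\eta$. So there is nothing to verify beyond that example, and your proposed dimension-matching argument would not address the actual axioms.

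Second, your mixed-case sandwich breaks at the final equality: you cannot write $\mathbf{QS}=\{\text{combs on }\mathbf{fQC}\}$ at the multi-party level, since e.g.\ the quantum switch is a supermap that is not a comb. What closes the chain is $\mathbf{QS}\subseteq\mathbf{srep}[\mathbf{fQC}]$ (single-party comb decomposition), not $\mathbf{QS}=\text{combs}$. The paper's argument avoids this pitfall and is simpler: since every single-input locally-applicable transformation on $\mathbf{fQC}$ is already $\mathcal{F}_{\mathbf{fCP}}(S)$ for some quantum supermap $S$ (the cited prior result), and the slot condition asks only for commutation with single-input locally-applicable transformations, that commutation is inherited directly from the interchange law of the compact closed $\mathbf{fCP}$ in which the supermaps live. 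Hence every such transformation is already a slot and no sandwiching is required.
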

In applications to infinite-dimensional unitary quantum theory, we further find that $\mathbf{pslot}[\mathbf{sepU}]$ and equivalently $\mathbf{srep}[\mathbf{sepU}]$ are always implementable by time-loops and unitaries, where $\mathbf{sepU}$ is the category of unitary linear maps between separable Hilbert spaces. Whilst $\mathbf{pslot}[\mathbf{sepU}]$ is strong enough to enforce a polycategorical semantics for infinite-dimensional unitary-preserving supermaps, we find that $\mathbf{pslot}[\mathbf{sepU}]$ is still flexible enough to include generalizations of motivating instances of quantum supermaps such as the quantum switch to infinite dimensions. The applications of this general black-box definition of hole to the growing number of scientific fields in which open diagrams are studied is left for future discovery, as is the extension of the construction to include the more elaborate and iterated type-systems developed for handling higher-order quantum theory in a series of recent works \cite{Bisio2019TheoreticalTheory, Perinotti2017CausalComputations, Kissinger2019AStructure, Simmons2022Higher-orderBV-logic, ApadulaNo-signallingStructure}.

\section{Preliminary Material}
Here we introduce the category-theoretic terms used throughout the paper, category theory is used here purely as an organizing language, and all calculations are written in a way that is aimed to be followable by those who are familiar only with string diagrams for compact closed categories. In general, we will adopt the convention of representing processes that are higher-order in white and processes which are lower order in pink, this choice has no formal significance and is made purely for readability.
\paragraph{Category Theory}
A \textit{category} \cite{Lane1971CategoriesMathematician} consists in a specification of objects $A,B,C, \dots$ and a specification of morphisms which act between them. Formally a category is equipped with, for each pair $A,B$ of objects a set $\mathbf{C}(A,B)$ terms the set of ``morphisms". A category furthermore is equipped with a composition function $\circ : \mathbf{C}(A,B) \times \mathbf{C}(B,C) \rightarrow \mathbf{C}(A,C)$ denoted $\circ$ for each $A,B,C$ such that $f \circ (g \circ h) = (f \circ g) \circ h$. Categories come with unit morphisms $id_{A}:A \rightarrow A$ for each object $A$ such that for each $f:A \rightarrow B$ then $f \circ id_A = f = id_B \circ f$. The defining conditions of a category can be conveniently absorbed into a graphical language which makes clear their suitability for representing processes between systems. An object $A$ of a category can always be represented by a wire, and a morphism $f:A \rightarrow B$ by a box with input wire $A$ and output wire $B$:  \[  \tikzfig{figs/category_2} . \]  Sequential composition $f \circ g$ is denoted \[  \tikzfig{figs/category_3},  \]
with associativity allowing for unambiguous interpretation of the diagram. The identity process can be represented by a wire \[  \tikzfig{figs/category_1} , \] so that again the defining equation $f \circ id_A = id_A$ is absorbed into the graphical notation. 
We describe a morphism $f:A \rightarrow B$ as an \textit{isomorphism} if there exists $\bar{f}:B \rightarrow A$ such that $f \circ \bar{f} = id_{B}$ and $\bar{f} \circ f = id_{B}$. If every morphism of a category $\mathbf{C}$ is an isomorphism then $\mathbf{C}$ is termed a \textit{groupoid}.

\paragraph{Monoidal categories}
In the process-theoretic approach to physics \cite{Coecke2017PicturingReasoning, Coecke2010QuantumPicturalism, Coecke2006KindergartenNotes}, the primary object of study is that of a circuit-theory. Monoidal categories give an algebraic model for circuit theories in terms of sequential and parallel composition operations. Formally, a \textit{monoidal} category is a category $\mathbf{C}$ equipped with a functor $\otimes: \mathbf{C} \times \mathbf{C} \rightarrow \mathbf{C}$ which assigns to each pair $(A,B)$ of objects in $\mathbf{C}$ an object $A \otimes B$ again in $\mathbf{C}$ \[  \tikzfig{figs/category_4}.  \] Similarly to each pair $(f,g)$ or morphisms the functor $\otimes$ assigns a new morphism $f \otimes g$. Functorality of $\otimes$ also implies interchange laws such as $(f \otimes id) \circ (id \otimes g) = (id \otimes g) \circ (f \otimes id)$ which can be represented diagrammatically by box-sliding \[  \tikzfig{figs/category_int1} \quad = \quad  \tikzfig{figs/category_int2} .\] 

Beyond monoidal categories one can define those which are \textit{symmetric}, meaning that they are equipped with a braid $\beta_{B,A}:B \otimes A \rightarrow A \otimes B$ depicted graphically by \[  \tikzfig{figs/category_6}.  \]  when applied twice the condition of symmetry further requires that $\beta_{B,A} \circ \beta_{A,B} = id_{B \otimes A}$, which essentially entails that the spatial position of wires on the page is of relevance only as-so-far as it is useful for book-keeping. If a monoidal category is a groupoid we will term it a \textit{monoidal groupoid}.

\paragraph{Compact closed categories}
A \textit{compact closed} category is one in which arbitrary input and output wires can be plugged together. Formally a compact closed category is a symmetric monoidal category $\mathbf{C}$ equipped with for each object $A$ a ``dual" object $A^{*}$, a state $\cup:I \rightarrow A^{*} \otimes A$ and effect $\cup: A^{*} \otimes A \rightarrow I$ such that \[  \tikzfig{figs/compact_1}  \quad = \quad \tikzfig{figs/compact_2}.  \] often referred to as the snake equation. A key feature of the snake equation is that it equips a monoidal category with an equivalence between inputs and outputs, this is a practically useful graphical property that allows the representation of process/state duality and feedback in monoidal categories in an internalized way. 

\paragraph{Polycategories}
There are non-monoidal algebraic structures within which interchange and associativity laws can be specified. Polycategories \cite{szabo_polycats}, provide an instance of such structures relevant to this paper. A polycategory is given by specification of a class of atomic objects, and then morphisms are defined as going between lists of such atomic objects \[ f:\underline{A} \rightarrow \underline{B} \quad \quad \quad \quad \underline{A} = A_1 \dots A_n \quad , \quad \underline{B} = B_1 \dots B_m . \] Whilst monoidal structure allows to compose along many objects at once, poly-categorical structure allows to compose morphisms along individually specified objects. 
Morphisms of polycategories can be written just as they would be for monoidal categories \[  \tikzfigscale{0.7}{figs/poly_new0} \quad = \quad \tikzfigscale{0.7}{figs/poly_new2},  \] with composition denoted by \[  \tikzfigscale{0.7}{figs/polycomp1} \quad \circ_{M} \quad \tikzfigscale{0.7}{figs/polycomp2} \quad = \quad \tikzfigscale{0.7}{figs/polycomp3} . \]  For the diagrammatic representation to be sound this composition rule should satisfy a variety of conditions.
Formally, following \cite{shulman20202chudialectica}, a \textit{polycategory} comes equipped with
\begin{itemize}
\item A functorial action by permutations, meaning for each pair of lists $\underline{A}, \underline{B}$ of cardinalities $n,m$ respectively and for each morphism $f:\underline{A} \rightarrow \underline{B}$ and pair of permutations $\sigma:[n] \rightarrow [n]$ and $\rho: [m] \rightarrow [m]$ a new morphism denoted $\rho (f) \sigma$ such that $\rho' (\rho (f) \sigma) \sigma' = (\rho \circ \rho')(f)(\sigma' \circ \sigma)$.
\item For each pair $f:\underline{A} \rightarrow \underline{B} X \underline{C}, g: \underline{D} X \underline{E} \rightarrow \underline{F}$  of morphisms a new composed morphism $g \circ_{X} f : \underline{D}\underline{A}\underline{E} \rightarrow \underline{B}\underline{F}\underline{C}$.
\item For each object $A$ and identity morphism $i_A : A \rightarrow A$.
\end{itemize}
Composition is subject to associativity and identity laws alongside:
\begin{itemize}
\item Interchange 1: \[ \tikzfigscale{0.7}{figs/correct_polyslotproof3} \quad = \quad \tikzfigscale{0.7}{figs/correct_polyslotproof4}. \]
\item Interchange 2:  \[ \tikzfigscale{0.7}{figs/correct_polyslotproof3flip} \quad = \quad \tikzfigscale{0.7}{figs/correct_polyslotproof4flip}. \]
\item Equivariance with respect to permutations: \[ \tikzfigscale{0.7}{figs/poly_equi_1} \quad = \quad \tikzfigscale{0.7}{figs/poly_equi_3}. \] More explicitly, $(\sigma g \rho) \circ_{X} (\lambda f \tau) = \lambda_{X \rightarrow \underline{A}} \sigma_{i_{B} (-)i_C} (g \circ_X f) \tau_{i_{D}(-)i_E} \rho_{X \rightarrow \sigma(\underline{A})} $ where for instance $\sigma_{i_{B}(-)i_{C}}$ means $i_{\underline{B} \otimes \sigma \otimes {i_{\underline{C}}}}$ and $\rho_{X \rightarrow \underline{A}}$ represents $\rho$ in which the role of $X$ is replaced by the entire list $\underline{A}$.

\end{itemize}



\paragraph{Quantum Theory}
The category $\mathbf{fHilb}$ can be viewed as the fundamental raw-material category from which a multitude of categories relevant to quantum information processing can be constructed.
\begin{definition}[The category $\mathbf{fHilb}$]
The category $\mathbf{fHilb}$ has objects given by finite dimensional Hilbert spaces and morphisms given by linear maps. Sequential composition in $\mathbf{fHilb}$ is given by the standard composition rule for linear maps, the monodial product is given on objects by the tensor product $H_A \otimes H_B$ of Hilbert spaces. On morphisms the monodial product is given by linear extension of $(f \otimes g) ( \phi \otimes \psi) := f(\phi) \otimes g(\psi)$. The category $\mathbf{fHilb}$ is furthermore compact closed with $\cup := \sum_i \ket{i} \otimes \ket{i}$ and $\cap = \sum_{i}\bra{i} \otimes \bra{i}$. 
\end{definition}
The category $\mathbf{fHilb}$ can be viewed as the fundamental raw-material category from which a multitude of categories relevant to quantum information processing can be constructed. The main category we will be concerned with in this paper is the category that is typically interpreted as representing the time-reversible dynamics of quantum theory, the category $\mathbf{fU}$ of unitaries.
\begin{definition}[The category $\mathbf{fU}$]
The category $\mathbf{fU} \subseteq \mathbf{fHilb}$ has objects given by finite-dimensional Hilbert spaces and morphisms given by unitary linear maps, that is, linear maps $U:H_A \rightarrow H_B$ such that $U^{\dagger} \circ U = id = U \circ U^{\dagger}$. In this sense $U^{\dagger}$ is typically interpreted as the time-reverse of $U$. All sequential and parallel composition rules are inherited from $\mathbf{fHilb}$, however compact closure is not inherited since neither of $\cap,\cup$ or in general unitary.
\end{definition}
To account for noise, the category of (Unitary) linear maps is typically extended to the category of (Trace Preserving) completely positive maps. 
\begin{definition}[The category $\mathbf{fCP}$]
The category $\mathbf{fCP}$ has as objects the spaces $\mathcal{L}(H_A)$ of linear operators on Hilbert spaces. The morphisms of type $\mathcal{L}(H_A) \rightarrow \mathcal{L}(H_B)$ in  $\mathbf{fCP}$ are given by the completely-positive operators \cite{Wilde2011FromTheory}. $\mathbf{fCP}$ is also equipped with bell-states and effects and so is compact closed. The resulting isomorphism between states and processes in $\mathbf{fCP}$ is referred to as the CJ (Choi-Jamiolkowski) \cite{Choi1975CompletelyMatricesc} isomorphism.
\end{definition}
\begin{definition}[The category $\mathbf{fQC}$]
The category $\mathbf{fQC}$ of quantum channels is the sub-category of $\mathbf{fCP}$ containing only those morphisms which are trace-preserving. $\mathbf{fQC}$ is not compact closed since its only effect is the trace. The quantum channels are the processes in quantum information theory most commonly referred to as deterministic.
\end{definition}
In both the pure and mixed cases, we see that the deterministic evolutions arise by picking out a preferred subcategory of a compact closed category. This story carries over into the definition of higher-order deterministic evolutions called supermaps.

\paragraph{Quantum supermaps}
Quantum supermaps are used in quantum information theory and quantum foundations to formalize a notion of higher-order transformation that can be applied to transformations \cite{Chiribella2008TransformingSupermaps}. Intuitively the goal of the definition of quantum supermaps is to formalize the following kind of picture  \[  \tikzfig{figs/informal_poly3},   \] used to represent a higher-order process that accepts as an argument a process of type $A_1 \rightarrow A_1'$ and a process of type $A_2 \rightarrow A_2'$ to produce a process of type $B \rightarrow B'$. Such maps will typically be interpreted as having type $[A_1,A_1'][A_2,A_2'] \rightarrow [B,B']$ within some kind of algebraic structure. 

It is typically required that such maps should be well-defined when acting on parts of bipartite processes \[  \tikzfig{figs/informal_poly4} .  \] However, some care has to be taken in defining how supermaps can be used or composed together due to a key structural feature of supermaps termed \textit{enrichment} \cite{kelly1982basic} which is a mathematical translation of the idea that the basic structural features present in $\mathbf{C}$ (parallel and sequential composition) can be implemented as higher-order transformations \cite{Wilson2022AProcesses, Rennela2018ClassicalTheory}. In other words, we expect there to exist a supermap of type $\circ:[A,B][B,C] \rightarrow [A,C]$ which implements sequential composition viewed intuitively as \[  \tikzfig{figs/informal_poly5}  . \]

This simple observation, and a generally expected feature of theories of supermaps, motivates a further expected polycategorical feature of supermap composition. Polycategorical structure as witnessed by linear distributivity of the $\mathbf{Caus}[\mathbf{C}]$ construction allows us to unambiguously give meaning to
\[  \tikzfig{figs/informal_poly1}  \]
with the following diagram in the graphical language for polycategories
\[  \tikzfigscale{0.7}{figs/polysem1} . \]
Because polycategories can only be connected one leg at a time, there is no composition rule in the definition of a polycategory, which allows to give meaning to the following diagram 
\[  \tikzfig{figs/informal_poly2},  \] which would require the possibility to compose along more than one wire at-once, creating cycles \[  \tikzfigscale{0.7}{figs/polysem2}.  \]
Such a cyclic diagram at the level of supermaps should not be allowed since when combined with the structure of enrichment, it could be used to produce time loops within the underlying category as intuitively represented by the following diagram \[  \tikzfig{figs/informalpoly_loop}.  \] These observations point us towards the following goal for constructions of  higher-order transformations in quantum theory, the construction of a polycategory which enriched the symmetric monoidal structure of the category it is intended to act upon. Constructions for supermaps on abstract symmetric monodial categories which fit within goal may then be composed in complex ways whilst guaranteeing that time-loops never be formed.

In the usual approach to the definition of quantum supermaps, the Choi-Jamilkiowski (CJ) isomorphism \cite{Choi1975CompletelyMatrices} is leveraged, which identifies completely positive maps with positive operators. Here we will review the standard definition of quantum supermaps in a way that allows to briefly point out their polycategorical structure. The definition we use slightly generalizes the construction of a polycategory of second-order causal processes using the $\mathbf{Caus}[\mathbf{C}]$ construction \cite{Kissinger2019AStructure} by never referencing the concept of causality and instead using a definition method provided in \cite{Wilson2022QuantumLocality}. This reference to causality prevents the $\mathbf{Caus}[\mathbf{C}]$ construction from giving a way to construct a unitary higher order quantum theory, however,  a sketch definition for the linearly distributive structure of unitary supermaps has been outlined in \cite{Wilson2022AProcesses}. In category-theoretic terms, the CJ isomorphism is the observation of compact closure of the category $\mathbf{CP}$ and it is compact closure when present which allows for a convenient definition of supermaps.
\begin{definition}[$\mathbf{P}$-Supermaps]
Let $\mathbf{C} \subseteq \mathbf{P}$ be an embedding of a symmetric monodial category $\mathbf{C}$ into a compact closed category $\mathbf{P}$, a morphism \[ \tikzfigscale{0.7}{figs/newsupermap_1}  \] in $\mathbf{P}$ is a $\mathbf{P}$-supermap on $\mathbf{C}$ of type $S:[A,A'] \rightarrow [B,B']$ if and only if for every $\phi \in \mathbf{C}(A \otimes X,A' \otimes X')$ then  \[ \tikzfigscale{0.7}{figs/newsupermap_ext} \quad \in \quad \mathbf{C}(A \otimes X,A' \otimes X'). \]
\end{definition}
When a category has states and effects a meaningful generalization can be given for supermaps of type $K \rightarrow M$ with $K \subseteq \mathbf{C}(A,A')$ and $M \subseteq \mathbf{C}(B,B')$ \cite{Wilson2022QuantumLocality}, however since there are no such states or effects in the category of unitaries we prefer to use the above definition which is less general but avoids pathological edge cases. The definition of supermaps can also be extended to the multi-party setting. 
\begin{definition}
Let $\mathbf{C} \subseteq \mathbf{P}$ be an embedding of a symmetric monodial category $\mathbf{C}$ into a compact closed category $\mathbf{P}$, a morphism \[ \tikzfigscale{0.7}{figs/newmultisupermap1}  \] in $\mathbf{P}$ is a $\mathbf{P}$-supermap on $\mathbf{C}$ of type $S:\Gamma \rightarrow [B,B']$ if and only if for every $\underline{A}_{i}:= [A_i,A_i']$ of $\Gamma$ and family ${\phi}_{i} \in \mathbf{C}(A_i \otimes X_i,A_i' \otimes X_i')$ then  \[ \tikzfigscale{0.7}{figs/newmultisupermap2} \quad \in \quad  \mathbf{C}(B \otimes X_1 \otimes \dots \otimes X_n ,B' \otimes X_1' \dots \otimes X_n') \]
\end{definition}

\begin{lemma}
A symmetric polycategory $\mathbf{polyPsup}[\mathbf{C}]$ can be defined with objects given by pairs $[A,A']$ of objects of $\mathbf{C}$ and morphisms of type $S: \Gamma \rightarrow \Delta$ given by the $\mathbf{P}$-supermaps of type $S:\Gamma \rightarrow \Delta$.
\end{lemma}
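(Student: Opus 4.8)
The plan is to equip $\mathbf{polyPsup}[\mathbf{C}]$ with its polycategorical data directly from the compact-closed structure of $\mathbf{P}$ and then to verify that the class of $\mathbf{P}$-supermaps is closed under that data. First I would make precise the notion of a $\mathbf{P}$-supermap of type $S:\Gamma\to\Delta$ for arbitrary lists $\Gamma=\underline A_1\cdots\underline A_n$ and $\Delta=\underline B_1\cdots\underline B_m$ of pairs, extending the two Definitions above: using compact closure to bend the legs of $\Delta$ around, $S$ is a morphism of $\mathbf{P}$ between the reshuffled tensor products, and it is a supermap exactly when inserting into the legs of $\Gamma$ extended $\mathbf{C}$-morphisms $\phi_i\in\mathbf{C}(A_i\otimes X_i,A_i'\otimes X_i')$, and into the legs of $\Delta$ extended $\mathbf{C}$-morphisms fed in the opposite sense, always contracts to a morphism lying in $\mathbf{C}$ (between tensors of the environment objects). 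One checks this reduces to the stated definition when $\Delta=[B,B']$, and that equivalently it can be packaged as a bi-orthogonal (double-test) closure condition, in the spirit of the $\mathbf{Caus}[\mathbf{C}]$ construction.

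With the objects and hom-collections fixed, I would define the remaining data. The identity on $[A,A']$ is the ``bare wire'' supermap, which inserts a process and returns it unchanged; it is trivially a $\mathbf{P}$-supermap. Given $f:\underline A\to\underline B\,N\,\underline C$ and $g:\underline D\,N\,\underline E\to\underline F$ with $N=[C,C']$, the composite $g\circ_N f:\underline D\underline A\underline E\to\underline B\underline F\underline C$ is defined by the compact-closed contraction of $\mathbf{P}$ along the two wires $C$ and $C'$ carrying the $N$-leg (a cup on one wire and a cap on the other, dictated by orientation); well-definedness, i.e.\ independence of the chosen duals and cups/caps, is the usual compact-closed bookkeeping. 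The symmetries are the actions of the braids of $\mathbf{P}$ permuting the legs of $\Gamma$ and of $\Delta$, which visibly send $\mathbf{P}$-morphisms of one type to $\mathbf{P}$-morphisms of the permuted type and preserve supermap-hood because the testing condition is symmetric in the legs.

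The crux is the closure lemma: if $f$ and $g$ are $\mathbf{P}$-supermaps then so is $g\circ_N f$. The strategy is to test $g\circ_N f$ by inserting extended $\mathbf{C}$-morphisms into every remaining leg, then to peel the argument into two halves. First, fixing all of $f$'s own legs with their test morphisms leaves a morphism of $\mathbf{P}$ which, \emph{as a map on the $N$-wires together with the environment wires}, is itself comb-like: this is exactly the ``apply to part of a bipartite process while commuting with local actions on the environment'' behaviour that $f$ being a supermap encodes (equivalently, the single-party structure packaged by $\mathbf{srep}$), so it is a legitimate object to slot into the $N$-leg of $g$. Second, invoking $g$'s supermap property --- or, in the bi-orthogonal formulation, the fact that doubly-test-closed sets are stable under contraction --- gives that the full contraction lands in $\mathbf{C}$; tracking the environment wires carefully shows the resulting type is the claimed list $\underline D\underline A\underline E\to\underline B\underline F\underline C$. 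This closure step, and in particular the passage from ``$f$ is a supermap'' to ``the open $N$-leg of $f$ presents something compatible with $g$'s testing condition'', is the main obstacle; everything else is routine.

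Finally, the symmetric-polycategory axioms --- the several forms of associativity of $\circ_N$, the interchange of compositions performed along distinct legs, and compatibility with the symmetry actions, as listed in Appendix~C --- I would obtain essentially for free: since each composite is realized as a concrete diagram in the compact-closed category $\mathbf{P}$ built from cups, caps and braids, all of these identities are instances of yanking and box-sliding equalities that already hold in $\mathbf{P}$, and they descend to $\mathbf{polyPsup}[\mathbf{C}]$ once the closure lemma guarantees the composites stay inside the sub-collection of supermaps. Hence $\mathbf{polyPsup}[\mathbf{C}]$ is a symmetric polycategory.
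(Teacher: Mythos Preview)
Your proposal is correct and follows essentially the same route as the paper: composition is defined by wire-contraction in the compact-closed ambient $\mathbf{P}$, closure under composition is argued by first applying the supermap property of $f$ to its $\underline{A}$-inputs to produce a $\mathbf{C}$-morphism which then serves as an extended input on the $N$-leg of $g$, and the polycategory axioms are inherited from the interchange and symmetry laws of $\mathbf{P}$. The paper's closure step is slightly more direct than yours---it does not invoke comb-like structure, $\mathbf{srep}$, or a bi-orthogonal reformulation, only that plugging extended $\mathbf{C}$-morphisms into $S$ yields a single $\mathbf{C}$-morphism $S'(a)$ to feed to $T$---but the skeleton is the same.
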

\begin{proof}
Given in the appendix.
\end{proof}
\begin{definition}[Quantum Supermaps]
For brevity we refer to the $\mathbf{fCP}$-supermaps on $\mathbf{fQC}$ as \textrm{Quantum Supermaps} and the corresponding polycategory is refered to as $\mathbf{QS}$, we furthermore refer to the $\mathbf{fHilb}$-supermaps on $\mathbf{fU}$ as \textrm{Unitary Supermaps} with the corresponding polycategory denoted $\mathbf{uQS}$.
\end{definition}

\paragraph{Locally-Applicable Transformations}
In this section we review a characterization of quantum supermaps as certain types of natural transformations \cite{Wilson2022QuantumLocality} called locally-applicable transformations (recently identified with \textit{strong} natural transformations \cite{hefford_prof_sup}), this removes the need to reference an ambient category such as $\mathbf{P}$ into which the category $\mathbf{C}$ embeds when defining supermaps. The goal of the paper will be to extend this natural transformation definition of supermap so that it is strong enough to (a) recover unitary supermaps when applied to the category of unitaries (b) extend supermaps to infinite dimensions in a satisfactory way (c) construct a (enrichment into a) polycategory for composition-without-time-loops of supermaps on arbitrary symmetric monoidal categories.

To introduce the concept, recall that higher-order transformations are modelled as those transformations that can be applied locally to lower-order transformations \[  \tikzfig{figs/informal_cp_3}.   \] Let us imagine then that from any legitimate definition of supermap we expect to be able to extract at a bare minimum a familly of functions $S_{X,X'}: \mathbf{C}(A \otimes X, A' \otimes X') \rightarrow \mathbf{C}(B \otimes X, B' \otimes X')$ where we will denote graphically the action of $S_{X,X'}$ on some $\phi \in \mathbf{C}(A \otimes X, A' \otimes X')$ as: \[ S_{X,X'}(\phi) \quad :=     \quad \tikzfig{figs/informal_cp_3b}.   \]
\begin{definition}[locally-applicable transformations]
A locally-applicable transformation of type $S:[A,A'] \longrightarrow [B,B']$ is a family of functions $S_{X,X'}$ satisfying \[ \tikzfig{figs/formal_cp_1aa} \quad = \quad \tikzfig{figs/fornal_cp_1bb}.  \]
\end{definition}
The locally applicable transformations define a category $\mathbf{lot}[\mathbf{C}]$ with objects given by pairs $[A,A']$ and morphisms $[A,A'] \rightarrow [B,B']$ given by locally applicable transformations of the same type. $\mathbf{P}$-supermaps on a category $\mathbf{C}$ always define locally-applicable transformations on $\mathbf{C}$, as witnessed by a faithful functor $\mathcal{F}_{\mathbf{P}}: \mathbf{Psup}[\mathbf{C}] \rightarrow    \mathbf{lot}[\mathbf{C}]$. This functor is given explicitly by \[ \mathcal{F}_{\mathbf{P}} \Bigg(  \tikzfigscale{0.7}{figs/final1}   \Bigg)_{XX'} \quad := \quad  \tikzfigscale{0.7}{figs/final9} \] In \cite{Wilson2022QuantumLocality} it is proven that there is an equivalence between the quantum supermaps and the locally-applicable transformations on $\mathbf{QC}$.
\begin{theorem}
There is a one-to-one correspondence between the locally-applicable transformations of type $[A,A'] \rightarrow [B,B']$ on $\mathbf{QC}$ and the quantum supermaps of the same type \cite{Wilson2022QuantumLocality}.
\end{theorem}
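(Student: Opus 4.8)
Since it is already recorded in the excerpt that $\mathcal{F}_{\mathbf{CP}} : \mathbf{Psup}[\mathbf{QC}] \to \mathbf{lot}[\mathbf{QC}]$ is faithful, distinct quantum supermaps yield distinct locally-applicable transformations, and the plan is to prove the reverse implication: that $\mathcal{F}_{\mathbf{CP}}$ is full on the hom-set $\mathbf{Psup}[\mathbf{QC}]\big([A,A'],[B,B']\big)$, so that every locally-applicable transformation $S : [A,A'] \to [B,B']$ on $\mathbf{QC}$ equals $\mathcal{F}_{\mathbf{CP}}(\hat{S})$ for a then necessarily unique quantum supermap $\hat{S}$. I would proceed in three stages: (i) reconstruct a candidate morphism $\hat{S}$ of the compact closed category $\mathbf{CP}$ from the action of $S$; (ii) check that $\hat{S}$ meets the defining condition of a $\mathbf{CP}$-supermap on $\mathbf{QC}$; (iii) check that $\mathcal{F}_{\mathbf{CP}}(\hat{S}) = S$.

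For stage (i) the key preliminary observation is that $\mathbf{QC}$ contains classical registers together with the discarding effect, so the locality axiom in the definition of a locally-applicable transformation forces each component $S_{X,X'}$ to commute with the operation ``prepare a classical distribution, apply a controlled channel, discard the control''. Hence each $S_{X,X'}$ is affine in its argument and extends canonically to a linear map on the linear span of the channels. One then probes $S$ with the channels built from a maximally entangled state shared between the input system and a reference, letting the spectator systems $X, X'$ range, and reads off --- through the Choi--Jamiolkowski isomorphism that exhibits the compact closure of $\mathbf{CP}$ --- an operator $\hat{S}$ of type $[A,A'] \to [B,B']$ in $\mathbf{CP}$. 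Positivity of $\hat{S}$ is then immediate, because the image under $S$ of any such probe is a channel and hence completely positive.

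Stage (ii) requires showing that for every $\phi \in \mathbf{QC}(A\otimes X, A'\otimes X')$, contracting $\hat{S}$ against $\phi$ inside $\mathbf{CP}$ lands in $\mathbf{QC}(B\otimes X, B'\otimes X')$. Unfolding this contraction through the compact structure reduces the claim to the statement that $\hat{S}$ carries the Choi operator of any channel, possibly on an extended system, to the Choi operator of a channel --- equivalently, that $\hat{S}$ satisfies the relevant marginal and normalization conditions. Each such condition follows by applying $S$ to the corresponding extended channel, post-composing with discarding, and invoking the locality axiom: trace preservation of the resulting channel is exactly the normalization demanded of $\hat{S}$. Stage (iii) is then a uniqueness/tomography argument: both $S$ and $\mathcal{F}_{\mathbf{CP}}(\hat{S})$ are objects of $\mathbf{lot}[\mathbf{QC}]$, and I would argue that the locality axiom makes the value of any locally-applicable transformation on an arbitrary $\phi$ computable from its values on the probes of stage (i) together with local channels acting on $X, X'$; since $\hat{S}$ was constructed to match $S$ on those probes, the two transformations coincide, which is precisely fullness.

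The step I expect to be the genuine obstacle is stage (iii): pinning down exactly the sense in which the locality axiom ``saturates'' a locally-applicable transformation on $\mathbf{QC}$. The difficulty is that $\mathbf{QC}$ has no non-trivial effects beyond the trace, so one cannot simply invoke compact closure internally; instead one must verify that preparation of maximally entangled states together with local channels --- all of which \emph{do} live in $\mathbf{QC}$ --- suffices to tomographically determine, and therefore propagate, the transformation's action on all bipartite channels, and that the locality axiom is exactly strong enough to let this reconstruction be commuted through $S$. The remaining Choi-matrix bookkeeping in the ambient category $\mathbf{CP}$ --- including the Chiribella--D'Ariano--Perinotti style normalization conditions and the check that $\hat{S}$ is a well-defined morphism whose image under $\mathcal{F}_{\mathbf{CP}}$ genuinely recovers $S$ on every probe --- is routine but must be carried out with care.
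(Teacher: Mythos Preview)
The paper does not actually contain a proof of this theorem: the statement is quoted as a result from the cited reference \cite{Wilson2022QuantumLocality}, and no argument is supplied in the present text. There is therefore nothing in this paper against which to compare your proposal.

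That said, your outline is broadly the right shape for how the cited result is established. The construction of the candidate $\hat{S}$ by probing $S$ with (half of) a maximally entangled state and reading off the Choi operator is exactly the standard move, and your observation that the locality axiom forces affinity in the argument (via classically controlled mixtures available in $\mathbf{QC}$) is the essential ingredient for passing from a function on channels to a linear map. Your identification of stage~(iii) as the real work is accurate: the point is that any channel $\phi \in \mathbf{QC}(A\otimes X, A'\otimes X')$ can be written, inside $\mathbf{QC}$, as a local post-processing applied to the probe channel (prepare a maximally entangled state on $A$ and a reference, then apply $\phi$ controlled by the reference), and naturality of $S$ in the auxiliary wires then transports the value on the probe to the value on $\phi$. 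Once that is made precise the remaining Choi bookkeeping is indeed routine.
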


As we will observe in the main text of the paper, there is no such correspondence between the locally-applicable transformations on $\mathbf{U}$ and the Unitary supermaps. A stronger notion, that of being a \textit{slot} will be further required. We finish the preliminary material by noting that locally-applicable transformations admit a simple multi-party generalization.
 \begin{definition}
A locally-applicable transformation of type $[A_1,A_1'] \dots [A_n,A_n'] \rightarrow [B,B']$ is a family of functions $S_{E_1 \dots E_n}^{E_1' \dots E_n'}$ satisfying \[ \tikzfig{figs/multi_slot_1a} \quad = \quad \tikzfig{figs/multi_slot_1b}. \]
\end{definition}
These multiple-input locally-applicable transformations do not appear to come with a natural polycategorical structure, instead being equipped with a weaker notion of multi-categorical structure \cite{Leinster2003HigherCategories}. Multicategories allow for multiple inputs but only a single output, allowing to draw only the following kinds of string diagrams  \[  \tikzfigscale{0.7}{figs/multi_comp} \quad  \circ \quad \tikzfigscale{0.7}{figs/multi_comp2} \quad = \quad \tikzfigscale{0.7}{figs/multi_comp3}.  \].
\begin{lemma}[The multicategory of locally-applicable transformations]
A multi-category $\mathbf{lot}[\mathbf{C}]$ can be defined which has as objects pairs $[A,A']$ and as multi-morphisms from $[A_1,A_1'] \dots [A_n,A_n'] $ to $[B,B']$ the locally-applicable transformations of type $[A_1,A_1']  \dots [A_n,A_n'] \longrightarrow [B,B']$. Composition is given graphically by taking $ S \circ (T^1 \dots T^m)(\phi_i^j)$ to be  \[  \tikzfig{figs/multi_slot_comp1} . \]
\end{lemma}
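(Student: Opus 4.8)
The plan is to check the three pieces of data of a multicategory --- identities, composition, and the unit/associativity axioms --- observing that the only genuinely non-formal point is that the composite of locally-applicable transformations is again locally-applicable; everything else is bookkeeping with the coherence isomorphisms of $\mathbf{C}$, which I would suppress by working up to coherence (equivalently, by Mac Lane's theorem, pretending $\mathbf{C}$ strict monoidal). For the identity on an object $[A,A']$ I would take the family sending each $\phi \in \mathbf{C}(A \otimes X, A' \otimes X')$ to itself; the defining locality equation for a locally-applicable transformation holds trivially for this family, so it is a legitimate unary multimorphism, and it will be a two-sided unit for composition immediately from the definition below. On unary inputs the whole construction reduces to the category $\mathbf{lot}[\mathbf{C}]$ already introduced, so that case is covered.

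For composition, suppose $S \colon [B_1,B_1'] \dots [B_m,B_m'] \to [C,C']$ and, for each $j \in \{1,\dots,m\}$, a locally-applicable transformation $T^j$ of type $[A_{1,j},A'_{1,j}] \dots [A_{k_j,j}, A'_{k_j,j}] \to [B_j, B_j']$. Given a family $\phi_{i,j} \in \mathbf{C}(A_{i,j} \otimes E_{i,j},\, A'_{i,j} \otimes E'_{i,j})$, I would first form $\psi_j := T^j(\phi_{1,j},\dots,\phi_{k_j,j}) \in \mathbf{C}\bigl(B_j \otimes \bigotimes_i E_{i,j},\, B'_j \otimes \bigotimes_i E'_{i,j}\bigr)$, and then apply $S$ to the tuple $(\psi_1,\dots,\psi_m)$, where the $j$-th slot of $S$ is given the bundled environment $F_j := \bigotimes_i E_{i,j}$, $F'_j := \bigotimes_i E'_{i,j}$. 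This is well-typed because $S$, being a family over \emph{all} environment objects, is in particular defined at $F_j$; and the output lands in $\mathbf{C}\bigl(C \otimes \bigotimes_j F_j,\, C' \otimes \bigotimes_j F'_j\bigr) = \mathbf{C}\bigl(C \otimes \bigotimes_{i,j} E_{i,j},\, C' \otimes \bigotimes_{i,j} E'_{i,j}\bigr)$, the expected type. I would fix once and for all a bracketing/ordering convention for the double tensor product so that this last identification is literal.

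The crux is showing that $S \circ (T^1,\dots,T^m)$ satisfies the locality equation. Fix one environment wire $E_{i,j}$ and a morphism $a$ acting on it (on the input or the output side); I must show that applying $a$ before the composite equals applying it after. Because $T^j$ is locally-applicable, $a$ slides from before $T^j$ to after $T^j$, where it now acts on the single tensor factor $E_{i,j}$ of the bundled environment $F_j$ carried by $\psi_j$. An action on one tensor factor of $F_j$ is a special case --- pad with identities on the remaining factors --- of an action on $F_j$ as a whole, so the locality equation for $S$ lets $a$ slide past $S$ as well; reconciling the two sides is then a matter of tracking the fixed reassociation. This slide-past argument is essentially the entire mathematical content of the lemma, and the \textbf{main obstacle} is purely organizational: setting up notation for multi-input locally-applicable transformations over nested tensor products of environments under which the two appeals to the locality equation, and the intervening repackaging of environment factors, can be stated cleanly.

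Finally, the axioms. The unit laws are immediate: composing with the identity family leaves each $\phi$ untouched, hence leaves the output of any $S$ untouched. Associativity of the multicomposition reduces to associativity of $\circ$ in $\mathbf{C}$ together with coherence of the environment bundling --- composing a three-layer tree in either order yields, on every input family, the same iterated application of the underlying functions $T,S,\dots$, so the two composites agree slot-by-slot in $\mathbf{C}$. If a symmetric multicategory is wanted, the symmetric-group action permuting the input legs of a multimorphism is defined by using the symmetry of $\mathbf{C}$ to permute the corresponding environment tensor factors, and equivariance follows from naturality of the symmetry; here too the work is coherence bookkeeping rather than new ideas.
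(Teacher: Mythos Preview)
Your proposal is correct. The paper states this lemma without proof, treating it as routine; your sketch fills in exactly the details one would expect, and the key step you identify---sliding a local action first past $T^j$ and then past $S$ by viewing an action on a single factor $E_{i,j}$ as an action on the bundled environment $F_j$---is the right (and only) substantive point. Your handling of identities and associativity (``directly inherited from associativity of function composition'') matches verbatim the style the paper uses in Appendix~C when it proves the analogous, harder result for the polycategory $\mathbf{pslot}[\mathbf{C}]$.
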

Since we have seen motivation for developing a polycategorical semantics for supermaps, the fact that it is only clear how to give locally-applicable transformations a multi-categorical structure is a sign that stronger conditions are required. This, is essentially the same issue as the inability to give a suitable monoidal product for locally-applicable transformations. The above difficulty is discussed in more detail in the next section, after which two strengthenings of locally applicable transformations are developed. Each of these strengthenings characterize the unitary supermaps and on arbitrary symmetric monoidal categories return polycategorical rather than multi-categorical composition rules.

\section{The Need for a Stronger Definition than Locally Applicable Transformation}
In this section we show why locally-applicable transformations are not strong enough to satisfy our two goals, in the course of doing so we introduce a few definitions which will be used throughout the paper.
We will introduce two conceptual and technical issues regarding locally-applicable transformations, and show that both problems can be addressed by strengthening them to \textit{strongly} locally-applicable transformations (slots). A slot will be morphism in the centre, suitably defined, of the locally-applicable transformations. 
\paragraph{Problem 1: Characterising Unitary Supermaps} Let us consider two locally-applicable transformations on the category of unitaries which will play an important role throughout this paper. Both classes are given by conditioning on properties of unitaries which due to the time-reversibility of $\mathbf{U}$ cannot be affected by applying local unitaries to auxiliary systems. The first example of a locally-applicable transformation works by checking the signalling structure of a unitary and applying a time-loop whenever the application of a time loop is permitted by the signalling structure of the unitary. 
\begin{definition}
The locally-applicable transformation $S^{loop}:[A,A'] \rightarrow [B,B']$ is defined by taking $S^{lopp}_{XX'}(\phi)$ to be
\begin{align} 
\tikzfig{figs/unitary_1} \quad  & := \quad   \tikzfig{figs/unitary_loop} \quad \textrm{if} \quad  \tikzfig{figs/unitary_same} \quad = \quad  \tikzfig{figs/phi_upsig3} \\
& := \quad   \tikzfig{figs/unitary_same} \quad \textrm{if else} .
\end{align}
\end{definition}
The second example uses the signalling structure of the input unitary to decide whether to apply a local unitary. 
\begin{definition}
The locally-applicable transformation $S^{V}:[A,A'] \rightarrow [B,B']$ is defined by taking $S^{V}_{XX'}(\phi)$ to be
\begin{align}
\tikzfig{figs/unitary_1} \quad  & := \quad   \tikzfig{figs/unitary_v} \quad \textrm{if} \quad  \tikzfig{figs/unitary_same} \quad =  \tikzfig{figs/phi_upsig3} \\
& := \quad   \tikzfig{figs/unitary_same} \quad \textrm{if else} 
\end{align}
\end{definition}

Each definition indeed gives a locally-applicable transformation on the category of unitaries. Neither of $S^{loop}$ or $S^{V}$ are however implementable by unitary supermaps.
\begin{lemma}
Let $S:[A,A] \rightarrow [A,A]$ be a $\mathbf{P}$-supermap on $\mathbf{U}$ such that $\mathcal{F}_{\mathbf{P}}(S) = S^{loop}$ or $\mathcal{F}_{\mathbf{P}}(S) = S^{V}$, then $A = I \cong \mathbb{C}$.
\end{lemma}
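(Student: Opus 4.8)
\emph{Proof idea.} The plan is to exploit the gap between the \emph{case-distinction} definition of $S^{loop}$ and the fact that a genuine $\mathbf{P}$-supermap acts on a plugged-in morphism by one fixed compact-closed contraction, hence \emph{continuously} (indeed linearly) in that morphism. Throughout I would take $\mathbf{P}=\mathbf{fHilb}$, the ambient compact closed category relevant to unitaries (the argument needs only that the contraction defining $\mathcal{F}_{\mathbf{P}}$ depends continuously on its argument). Writing $S\star\phi := \mathcal{F}_{\mathbf{P}}(S)_{XX'}(\phi)$, the hypothesis $\mathcal{F}_{\mathbf{P}}(S)=S^{loop}$ says $S\star\phi = S^{loop}_{XX'}(\phi)$ for every unitary $\phi\in\mathbf{U}(A\otimes X, A\otimes X')$, while by its construction $\phi\mapsto S\star\phi$ is the restriction to $\mathbf{U}$ of a linear, hence continuous, endomorphism $L_{XX'}$ of $\mathbf{fHilb}(A\otimes X, A\otimes X')$.

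First I would record, for each $X,X'$, the set $\mathcal{C}_{XX'}\subseteq\mathbf{U}(A\otimes X,A\otimes X')$ of unitaries satisfying the signalling side-condition that selects the ``loop'' branch in the definition of $S^{loop}$; off $\mathcal{C}_{XX'}$ one has $S^{loop}_{XX'}(\phi)=\phi$, so $L_{XX'}(\phi)=\phi$ for all $\phi\notin\mathcal{C}_{XX'}$. Now suppose $\dim A\geq 2$, aiming for a contradiction, and note three facts: (i) the signalling side-condition is cut out by polynomial equations in the matrix entries, so $\mathcal{C}_{XX'}$ is a closed algebraic subvariety of the unitary group; (ii) for suitable auxiliary systems it is a \emph{proper} subvariety --- e.g.\ taking $X=X'=A$, the swap $\beta_{A,A}$ signals maximally from the input $A$ to the output environment and so lies outside $\mathcal{C}_{A,A}$ --- hence $\mathcal{C}_{XX'}$ has empty interior and its complement is dense; and (iii) the ``loop'' branch genuinely moves at least one unitary of $\mathcal{C}_{XX'}$, i.e.\ there is a witness $\phi^{*}$ with $S^{loop}_{XX'}(\phi^{*})\neq\phi^{*}$ (otherwise $S^{loop}$ would be the identity transformation, contrary to its role here; the loop-closure acts trivially only when the contracted $A$-wire is one-dimensional).

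Combining these, the contradiction is immediate: since the complement of $\mathcal{C}_{XX'}$ is dense, pick unitaries $\psi_n\to\phi^{*}$ with $\psi_n\notin\mathcal{C}_{XX'}$; then by continuity of $L_{XX'}$ we get $S\star\phi^{*}=\lim_n S\star\psi_n=\lim_n\psi_n=\phi^{*}$, whereas $S\star\phi^{*}=S^{loop}_{XX'}(\phi^{*})\neq\phi^{*}$. Equivalently, using linearity: the non-triggering unitaries span all of $\mathbf{fHilb}(A\otimes X, A\otimes X')$, since they are dense in $\mathbf{U}(A\otimes X,A\otimes X')$ and the unitary group spans the matrix space, so $L_{XX'}=\mathrm{id}$, again contradicting (iii). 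Hence $\dim A = 1$, i.e.\ $A\cong\mathbb{C}=I$.

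I expect the only genuine work to be step (iii): unwinding the diagrammatic definition of $S^{loop}$ to identify the signalling side-condition precisely and to confirm that the loop-closure applied to some unitary satisfying it is not accidentally the identity when $\dim A\geq 2$, together with the bookkeeping in (ii) that this side-condition is non-vacuous for the auxiliary systems chosen. The remaining ingredients --- continuity and linearity of the action of a $\mathbf{P}$-supermap, closedness of $\mathcal{C}_{XX'}$, density of its complement, and the spanning of the matrix space by unitaries --- are routine.
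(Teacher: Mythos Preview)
Your approach is correct in spirit and genuinely different from the paper's, but there is a factual slip in step (ii). You claim that the swap $\beta_{A,A}$ lies \emph{outside} $\mathcal{C}_{A,A}$ because it ``signals maximally from the input $A$ to the output environment''. That is true, but it is not the relevant signalling direction. The side-condition defining the loop branch of $S^{loop}$ is that there be no path from the input $A$-wire to the \emph{output $A$-wire} (so that contracting them is safe); the swap routes the input $A$ entirely to the environment output, hence it \emph{does} satisfy this condition and lies \emph{inside} $\mathcal{C}_{A,A}$. This actually helps you: the swap is the natural witness for (iii), since the loop-closure of the swap is visibly different from the swap when $\dim A\ge 2$. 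For (ii) you instead want a unitary that \emph{does} signal from input $A$ to output $A$; the identity $id_{A\otimes A}$ is the obvious choice, and indeed generic unitaries work, so your density claim goes through once the witnesses are reassigned. With that correction your continuity/linearity argument is sound.

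The paper takes a much more direct route: it simply evaluates the would-be $\mathbf{P}$-supermap on a single input (essentially the identity, reorganised via the swap) and, using only the compact-closed rewrite rules, derives an equation in $\mathbf{U}$ that fails unless $d_A=1$. No topology, no spanning arguments, no case analysis of the triggering set is needed. Your argument buys a more conceptual explanation --- the obstruction is precisely the discontinuity of the case distinction against the linearity of any genuine supermap --- and generalises immediately to any ambient $\mathbf{P}$ in which the contraction is continuous; the paper's computation buys brevity and avoids the bookkeeping of verifying that $\mathcal{C}_{XX'}$ is closed, proper, and that the loop branch is non-trivial.
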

\begin{proof}
Assume that there exists some $S:A^{*} \otimes A' \rightarrow B^{*} \otimes B'$ such that \[      \tikzfig{figs/unitary_def_loop} \quad = \quad \mathcal{F}_{\mathbf{P}} \Bigg(  \tikzfigscale{0.6}{figs/final1}   \Bigg)_{XX'} \quad = \quad   \tikzfigscale{0.6}{figs/newsupermap_ext}. \]
For an arbitrary object $A$ consider the identity $id_A$, then \[      \tikzfig{figs/s_loop_ida} \quad = \quad   \tikzfig{figs/s_loop_id} \quad = \quad   \tikzfigscale{0.6}{figs/loop_standard1} \quad = \quad    \tikzfigscale{0.6}{figs/lopp_standard2}.  \]
Now, returning to function box representation
\[  \quad = \quad   \tikzfig{figs/s_loop_swap} \quad = \quad   \tikzfig{figs/output_swap}   \quad = \quad   \tikzfig{figs/id_d}   \]
It follows that \[ \tikzfigscale{0.8}{figs/s_loop_ida} \quad = \quad  \tikzfig{figs/id_d} ,  \] which in $\mathbf{U}$ is a contradiction unless $d_A = 1$ so that $A \cong \mathbb{C}$, A similar proof applies to the locally applicable transformation $S^{V}$. 
\end{proof}

\paragraph{Problem 2: Parallel Application of Supermaps}
Inuitively we imagine that given access to a bipartite process $\phi:A \otimes B \rightarrow A' \otimes B'$, one could imagine applying some supermap $S \boxtimes T$ which represents acting with $S:[A_1,A_1'] \rightarrow [A_2,A_2']$ on the left hand side and with $T:[B_1,B_1'] \rightarrow [B_2,B_2']$ on the right hand side \[   \tikzfig{figs/informal_par}. \] 
Now, let us imagine defining the application on the right hand side for $T$ by \[   \tikzfig{figs/badcomp1b} \quad = \quad \tikzfig{figs/badcomp1a}   \] One could hope to give meaning to the intuitive picture representing some notion of $(id \boxtimes S^{V}) \circ (S^{loop} \boxtimes id)$ by \[   \tikzfig{figs/informal2}  \quad \cong \quad \tikzfig{figs/badcomp4}. \] Analogously, we can write what we would hope to be the diagram representing $  (S^{loop} \boxtimes id) \circ (id \boxtimes S^{V})$ \[   \tikzfig{figs/informalint1}  \quad \cong \quad \tikzfig{figs/badcomp5} . \]
In a monoidal category these two terms would need to be the same, however, for the specific locally-applicable transformations chosen this is not the case. To seer this let us consider the action of each term on the swap. First note that \[   \tikzfig{figs/badcomp4swapa} \quad = \quad \tikzfig{figs/badcomp4swapb} \quad = \quad \tikzfig{figs/badcomp4swapc},  \]

and yet \[   \tikzfig{figs/badcomp5swapa} \quad = \quad \tikzfig{figs/badcomp5swapb} \quad = \quad \tikzfig{figs/badcomp5swapc}.  \]

Consequently, we observe the following, the locally-applicable transformations on unitaries which are \textit{not} unitary supermaps appear to be those which can be used to fail the interchange law. In the main contributions of this paper, we formalize this observation, showing that those locally-applicable transformations which are guaranteed to satisfy the interchange law, are exactly those which can be implemented as unitary supermaps. We call these (strongly) locally-applicable transformations \textit{slots}.

\section{Slots and Polyslots}
We motivate two constructions $\mathbf{slot}[\mathbf{C}]$ and $\mathbf{pslot}[\mathbf{C}]$ by the attempt to define the parallel composition of locally-applicable transformations. When trying to define such a parallel composition rule we will find that we need to still allow for auxiliary systems on a further third pair of wires, consequently we choose to introduce the following notation \[  \tikzfig{figs/braid_define1} \quad = \quad  \tikzfig{figs/braid_define2}.  \]
To construct from all locally-applicable transformations those which can be composed in parallel we consider only those $S$ which commute with all other locally applicable transformations $T$ in the following sense. 
\begin{definition}
A slot of type $S:[A_1,A_1'] \rightarrow [A_2,A_2']$ is a locally-applicable transformation of the same type such that for every locally-applicable transformation $T:[B_1,B_1'] \rightarrow [B_2,B_2']$ and $\phi \in \mathbf{C}(A_1 \otimes B_1 \otimes X,A_1 ' \otimes B_1' \otimes X')$ then:
 \[ \tikzfig{figs/slot1} \quad = \quad \tikzfig{figs/slot2} \]
 The corresponding category $\mathbf{slot}[\mathbf{C}] \subseteq \mathbf{lot}[\mathbf{C}]$ is defined by keeping all objects and all locally-applicable transformations which are slots.
\end{definition}
So in intuitive terms, slots are those functions that are so local, that they commute not only with combs but with all other functions which commute with combs. Either of these commuting expressions can be used to define the parallel composition of slots. Intuitively, the monoidal product takes two slots $S,T$ and views them as a new single-slot $S \boxtimes T$ which can be used in the following way \[   \tikzfig{figs/informal_paraux}.   \] That both of the expressions in the definition of a slot are required to be equal guarantees unambiguous interpretation of the above picture and the required interchange law for symmetric monoidal categories.
\begin{theorem}
The category $\mathbf{Slot}[\mathbf{C}]$ is symmetric monoidal with: \begin{itemize}
    \item $[A,A']\boxtimes [B,B'] = [A \otimes B,A' \otimes B']$
    \item $(S \boxtimes T)_{X,X'}$ given by:  \[ \tikzfig{figs/slot1} \quad  \textrm{or equivalently} \quad \tikzfig{figs/slot2} \]
\end{itemize}
\end{theorem}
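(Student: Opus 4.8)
The plan is to verify the monoidal category axioms one by one, treating the two displayed expressions in the definition of a slot as the candidate tensor product on morphisms. First I would check that $S \boxtimes T$ is actually well-defined as a slot, which requires two things: that the two displayed diagrams (obtained by sliding $S$ past $T$, and $T$ past $S$) agree — this is immediate by applying the defining commutation equation of a slot with $S$ playing the role of the distinguished slot and $T$ the arbitrary locally-applicable transformation — and that the resulting family of functions is itself a slot of type $[A \otimes B, A' \otimes B'] \to [A_2 \otimes B_2, A_2' \otimes B_2']$. The latter is the first real content: given an arbitrary locally-applicable transformation $U$ on a fourth pair of wires, one must slide $S \boxtimes T$ past $U$. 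I would do this by first expanding $S \boxtimes T$ into the $S$-then-$T$ form, using that $S$ is a slot to commute it past $U$ (treating $T$-composed-with-$U$, suitably reassociated, as an auxiliary locally-applicable transformation, or more carefully, using that the composite of a locally-applicable transformation with $U$ on disjoint wires is still locally applicable), then using that $T$ is a slot to commute it past $U$ as well. Bookkeeping of the auxiliary wires via the braid notation $\tikzfig{figs/braid_define1}$ introduced just before the definition of a slot is what makes this go through; the key point is that the enlarged auxiliary system $X \mapsto B_1 \otimes X$ etc. is exactly what the slot condition already quantifies over.

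Next I would establish functoriality of $\boxtimes$: that $(S' \circ S) \boxtimes (T' \circ T) = (S' \boxtimes T') \circ (S \boxtimes T)$ and that $\mathrm{id}_{[A,A']} \boxtimes \mathrm{id}_{[B,B']} = \mathrm{id}_{[A\otimes B, A'\otimes B']}$. The identity case is immediate since the identity slot acts trivially on every wire. For the composition case I would write out both sides as stacked diagrams in the $S$-then-$T$ convention; the two sides differ only by an interchange of the order in which the $T$-layer and the $S'$-layer are applied on disjoint wires, which is precisely licensed by one application of the slot commutation equation (either $T$ past $S'$, or $S'$ past $T$). This is the step where the "either expression can be used" remark in the text does real work — it is what guarantees the rearrangement is legal regardless of which layer we choose to slide.

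For the coherence data I would take the associator, left/right unitors, and symmetry of $\mathbf{Slot}[\mathbf{C}]$ to be inherited directly from $\mathbf{C}$: since objects of $\mathbf{Slot}[\mathbf{C}]$ are pairs $[A,A']$ and $\boxtimes$ is $\otimes$ componentwise, the structure isomorphisms of $\mathbf{C}$ (applied on both slots of the pair) are the obvious candidates, and one must check that each such isomorphism, viewed as a one-input one-output transformation, is itself a slot — this again reduces to a short diagram computation showing it commutes with everything, since conjugating a wire-permutation past an arbitrary locally-applicable transformation just relabels auxiliary systems. The pentagon, triangle, and hexagon identities, and naturality of all the structure maps, then follow from the corresponding identities in $\mathbf{C}$ together with functoriality of $\boxtimes$ established above; no new computation beyond transporting $\mathbf{C}$'s coherence is needed. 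I expect the main obstacle to be the well-definedness step — verifying that $S \boxtimes T$ remains a slot rather than merely a locally-applicable transformation — because it is the only place where one must genuinely re-invoke the "bi-commutant"-flavoured quantifier (commuting with \emph{all} locally-applicable transformations, not just with combs) and manage nested auxiliary systems carefully; everything after that is essentially inheritance from $\mathbf{C}$ plus routine diagram slides.
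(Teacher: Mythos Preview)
Your proposal is correct and follows the same approach as the paper: verify bifunctoriality of $\boxtimes$ via the slot commutation condition, then inherit the coherence isomorphisms from $\mathbf{C}$. Your treatment is in fact more careful than the paper's own appendix proof, which compresses the interchange law into the algebraic shorthand $S\beta T\beta S'\beta T'\beta = SS'\beta T\beta\beta T'\beta = (SS')\boxtimes(TT')$ and does not explicitly spell out the check that $S\boxtimes T$ is itself a slot---the step you rightly flag as the main content.
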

\begin{proof}
Given in the appendix. This is a special case of taking the centre of a premonoidal category \cite{premonoidal_power}, where in this case the premonoidal category at hand is $\mathbf{lot}[\mathbf{C}]$. 
\end{proof}
The definition of a slot can be generalized to slots with multiple inputs, which we pre-emptively refer to as \textit{polyslots}. From here on, when monoidal products of lists of wires or morphisms need to be expressed, we use doubled wires.
\begin{definition}[Multi-party slots]
Let $\underline{\mathbf{A}}$ be a list with each element of the form $\mathbf{A}_i = [A_i,A_i']$ for some objects $A_i,A_i'$ of $\mathbf{C}$, a polyslot of type $S:\underline{\mathbf{A}} \rightarrow [B,B']$ is a locally-applicable transformation of type $\underline{\mathbf{A}} \rightarrow [B,B']$ such that for every $k$ and every $\underline{\phi}_{1 \dots k-1},\underline{\phi}_{k+1 \dots |\underline{\mathbf{A}}|}$ then the family of functions given by \[ \tikzfig{figs/reduce_slot2} \quad := \quad \tikzfig{figs/reduce_slot} , \] is a slot of type \[ S^{i}(\phi_{(m)}):[A_i,A_i'] \rightarrow [B \otimes \underline{X}_{m<i} \otimes \underline{X}_{m>i}, B' \otimes \underline{X}_{m<i}' \otimes \underline{X}_{m>i}' ] . \] 
\end{definition}
\begin{theorem}
The polyslots on $\mathbf{C}$ define a polycategory $\mathbf{pslot}[\mathbf{C}]$ with:
\begin{itemize}
    \item Objects given by pairs $[A,B]$ with $A,B$ objects of $\mathbf{C}$
    \item Poly-morphisms of type $S:\underline{\mathbf{A}} \rightarrow \Theta$ given by polyslots of type $S:[A_1,A_1'] \dots [A_n,A_n'] \rightarrow [B_1 \otimes \dots \otimes B_m ,B_1' \otimes  \dots \otimes B_m']$
    \item Composition $T \circ_{M} S$ of $S:\underline{\mathbf{A}} \rightarrow \underline{\mathbf{B}} \mathbf{M} \underline{\mathbf{C}} $ and $S:\underline{\mathbf{D}} \mathbf{M} \underline{\mathbf{E}} \rightarrow \underline{\mathbf{F}}$ given by taking $T \circ_{M} S(d_{(i)},a_{(j)},e_{(k)})$ to be  \[ \tikzfig{figs/locrepoly2} \]
\end{itemize}
\end{theorem}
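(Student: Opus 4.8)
The plan is to verify each piece of the polycategorical structure in turn, reusing the monoidal structure of $\mathbf{slot}[\mathbf{C}]$ from the preceding theorem as much as possible. First I would check that the proposed composition $T \circ_M S$ is well-defined, i.e. that the family of functions displayed in the statement is again a polyslot. The key observation is the recursive/reductive characterization of polyslots: $S$ is a polyslot iff, after fixing all but one of its input legs to arbitrary morphisms $\underline{\phi}$ (with their auxiliary wires), the resulting single-input family is a \emph{slot}. So to show $T\circ_M S$ is a polyslot, I would fix all of its inputs except one, split into cases according to whether the free input belongs to the $S$-block or the $T$-block, and in each case exhibit the reduced family as a composite built from a reduced slot of $S$ and a reduced slot of $T$ — the point being that "plugging in morphisms" and "composing along $M$" commute, which is a diagrammatic rearrangement using naturality/local-applicability. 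Since $\mathbf{slot}[\mathbf{C}]$ is closed under the relevant sequential composition (this is essentially what the monoidal theorem already gives us, together with composition in $\mathbf{lot}[\mathbf{C}]$), the reduced family is a slot, hence $T\circ_M S$ is a polyslot.

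Next I would identify the identity polymorphisms: the identity on $[A,A']$ should be the polyslot whose single-input family is just $S_{X,X'}(\phi)=\phi$ (the identity locally-applicable transformation, which is trivially a slot), and check the unit laws $\mathrm{id}\circ_{[A,A']} S = S = S \circ_{[A,A']} \mathrm{id}$ — these hold immediately from the diagrammatic definition of $\circ_M$ since composing with an identity wire does nothing. Then comes the bulk of the work: the associativity/interchange axioms of a symmetric polycategory (those deferred to Appendix C). These come in the standard families — associativity of two composites sharing no legs ("commuting" composites $(T\circ_N(-))\circ_M(-)$ in either order), associativity of nested composites along the same object, and the two-sided associativity where a middle morphism is glued to a left one and a right one. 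In every case the strategy is the same: both sides, when evaluated on a family of test morphisms $\phi_i$ with auxiliary wires, unfold via the definition of $\circ_M$ into the \emph{same} diagram in $\mathbf{C}$ — the only thing to check is that the order in which the plugging-in and gluing operations are performed does not matter, which follows from associativity of $\circ$ and the interchange law in $\mathbf{C}$ itself, possibly after re-routing auxiliary wires using the braiding (this is exactly the role of the $\boxtimes$/braid notation introduced just before). Finally I would check the symmetric-group actions on inputs and outputs are compatible with composition, which again is a wire-permutation bookkeeping argument using that $\mathbf{C}$ is symmetric.

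The main obstacle I expect is well-definedness of the composite, specifically the case analysis showing the reduced family is a \emph{slot} (not merely a locally-applicable transformation). The subtlety is that "being a slot" is defined by commutation against \emph{all} other locally-applicable transformations $T'$, so after reducing $T\circ_M S$ along one free input I must show the resulting single-input family commutes with an arbitrary $T'$ — and this commutation has to be inherited from the fact that the \emph{constituent} reduced families of $S$ and $T$ are slots. This requires carefully absorbing the "frozen" input morphisms and the internal gluing along $M$ into an auxiliary system that $T'$ acts trivially on, so that the slot property of (a reduction of) $S$ or $T$ can be invoked with $T'$ tensored against the appropriate identities; the braiding notation and the observation that a slot tensored with an arbitrary locally-applicable transformation still enjoys the commutation property (which is implicitly what makes $\mathbf{slot}[\mathbf{C}]$ monoidal) are the tools that make this go through. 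Once well-definedness is secured, the remaining axioms are, as is typical for polycategories, "diagram-chasing that always reduces to associativity and interchange in the base category," and I would relegate the exhaustive enumeration to the appendix.
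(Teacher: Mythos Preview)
Your overall plan is sound and, in fact, your treatment of well-definedness (that $T\circ_M S$ is again a polyslot) is more careful than the paper's appendix, which does not address this point explicitly. The case-splitting on whether the free input lives in the $S$-block or the $T$-block, and the reduction to showing the resulting single-input family is a slot, is exactly the right shape of argument.

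There is, however, a genuine gap in your treatment of the interchange axioms. You write that after unfolding both sides on test morphisms one obtains ``the same diagram in $\mathbf{C}$'', with the only issue being the order of plugging and gluing, and that this ``follows from associativity of $\circ$ and the interchange law in $\mathbf{C}$ itself''. For the interchange law in which two polyslots $T$ and $Q$ are composed into distinct \emph{outputs} of a common $S$, this is not the case: unfolding the two sides gives, up to braidings, expressions of the form $\hat{Q}\big(\hat{T}(\cdots)\big)$ and $\hat{T}\big(\hat{Q}(\cdots)\big)$, where $\hat{T}$ and $\hat{Q}$ are the reduced single-input families (locally-applicable transformations) obtained from $T$ and $Q$. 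These are higher-order functions on hom-sets, not morphisms of $\mathbf{C}$, so the interchange law of $\mathbf{C}$ says nothing about their commutation. What forces them to commute is precisely the defining property of a \emph{slot}: a slot commutes with every locally-applicable transformation acting on the auxiliary wires. The paper's proof makes this step explicit --- the crux of the long diagrammatic calculation is the line ``after which the slot equation can finally be used'' --- and it is not a consequence of monoidal structure on $\mathbf{C}$ alone. So the slot condition is doing double duty: it is needed both for well-definedness (which you correctly flag) \emph{and} for at least one of the interchange laws (which you have under-counted). The second interchange law, with two inputs feeding a common target, is as you expect a matter of unwinding definitions and re-associating; it is the output-side law that needs the extra ingredient.
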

\begin{proof}
Given in the appendix.
\end{proof}

\subsection{Single-Party Representable Supermaps}
Here we give a minimal construction that generalizes the multiparty unitary and CPTP supermaps to arbitrary categories, the construction works by leveraging a structural theorem for unitary and CPTP supermaps, that they always decompose \textit{locally} as combs. We will find that this construction is a special case of the definition of polyslots.
\begin{definition}
A single-party representable supermap of type \[ S: [A_1,A_1'] \dots [A_N,A_N'] \rightarrow [B,B']  \] is a family of functions \[S_{X_1 \dots X_N,X_1 ' \dots X_N'}: \mathbf{C}(A_1 X_1,A_1 ' X_1 ') \dots \mathbf{C}(A_N X_N,A_N ' X_N ') \rightarrow \mathbf{C}(B X_1 \dots X_N, B_1 ' X_1'  \dots X_N')\] such that for every $i$ and family of morphisms $\phi_{(m)}$ with $m \in \{  1 \dots (i-1)(i+1) \dots n \}$ there exists $S(\phi_{(m)})_i^{u}$ and $S(\phi_{(m)})_i^{d}$ satisfying \[S_{X_1 \dots X_N,X_1 \dots X_N'}(\phi_1 \dots \phi_i \dots \phi_N) \quad  = \quad  \tikzfig{figs/locrep2} . \]
\end{definition}
\begin{lemma}
Single-party representable supermaps of type $ S: [A_1,A_1'] \dots [A_N,A_N'] \rightarrow [B,B']$ are locally applicable transformations of the same type.
\end{lemma}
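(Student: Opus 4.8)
The plan is to verify directly the defining equation of a (multi-party) locally-applicable transformation --- that $S$ commutes with transformations applied to the auxiliary environment systems --- using only the comb decomposition in each open slot that single-party representability supplies. Note first that a single-party representable supermap is, by definition, already a family of functions of exactly the type required of a locally-applicable transformation, so the only thing to check is the naturality equation. The strategy has two ingredients: (i) a single-slot statement, that the comb form makes $S$ behave naturally in the environment of the slot that is kept open; and (ii) a bookkeeping step, that the full multi-party naturality condition is assembled by peeling off environment transformations one party at a time.

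For (i), fix an index $i$ and a family $\phi_{(m)}$ of morphisms on the remaining slots. By hypothesis there are maps $S(\phi_{(m)})_i^{u}$ and $S(\phi_{(m)})_i^{d}$ such that, for \emph{every} $\phi_i \in \mathbf{C}(A_i X_i, A_i' X_i')$,
\[
S_{X_1 \dots X_N, X_1' \dots X_N'}(\phi_1 \dots \phi_i \dots \phi_N)\;=\;S(\phi_{(m)})_i^{d}\circ\bigl(\phi_i\otimes \mathrm{id}\bigr)\circ S(\phi_{(m)})_i^{u},
\]
where the identity is taken on the remaining auxiliary systems; that is, the environment wires $X_i\to X_i'$ carried by $\phi_i$ thread straight through the expression, untouched by the fixed outer boxes $S(\phi_{(m)})_i^{u}$ and $S(\phi_{(m)})_i^{d}$. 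Consequently, for any transformation $g$ supported on $X_i$ (or on $X_i'$), replacing $\phi_i$ by the environment-modified morphism only changes the middle block, and $g$ can be slid from inside $S$ to outside $S$ purely by the interchange law of $\mathbf{C}$. The point to emphasise is that the outer boxes $S(\phi_{(m)})_i^{u},S(\phi_{(m)})_i^{d}$ do not change when we do this, precisely because single-party representability asserts the same decomposition for all choices of the open argument. This is exactly the locally-applicable transformation equation restricted to an environment transformation supported on party $i$.

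For (ii), a general instance of the multi-party locally-applicable transformation equation involves environment transformations on several (possibly all) of the parties. We remove them one at a time: push the party-$j$ environment transformation through $S$ via step (i) applied with index $j$ and the other arguments held fixed, then proceed to party $j'$ with the (now possibly updated) party-$j$ argument held fixed, and so on. This is legitimate precisely because the comb decomposition for slot $j$ is valid for \emph{every} value of the morphisms plugged into the other slots, so updating one argument never invalidates a later step. Iterating over all parties yields the required equation, and in particular nothing about $\mathbf{C}$ beyond being symmetric monoidal is used.

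The main obstacle is not conceptual but notational: one must carefully match the pictures defining single-party representability and the multi-party locally-applicable transformation, keep scrupulous track of which auxiliary systems are merely threaded through versus acted upon, and confirm that the box-sliding in step (i) is literally an instance of the monoidal interchange law rather than something requiring compact closure. Once the figures are unpacked in this way, the argument is routine.
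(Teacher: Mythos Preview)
Your proposal is correct and follows essentially the same route as the paper: both use the comb decomposition at a chosen slot to slide the environment transformation out via the interchange law, then iterate over all parties to discharge the full multi-party naturality equation. Your step (ii) is slightly more explicit than the paper's ``going through the same steps for every $i$'' in noting that updating one argument does not spoil the comb decomposition available at the next, but this is the same argument.
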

\begin{proof}
We define \[ \tikzfig{figs/locreppsi1} \quad = \quad \tikzfig{figs/locreppsi2}, \]
and then use locally representability to say that
\[ \tikzfig{figs/locrep4} \quad = \quad \tikzfig{figs/locrep5}, \] where finally, using the interchange law for symmetric monoidal categories, we can write: \[  = \quad \tikzfig{figs/locrep6} \quad = \quad \tikzfig{figs/locrep7} . \] Going through the same steps for every $i$ completes the proof.
\end{proof}
We now note that single-party representable supermaps on $\mathbf{C}$ form a polycategory.
\begin{theorem}
The single-party representable supermaps on $\mathbf{C}$ define a polycategory $\mathbf{srep}[\mathbf{C}]$ with:
\begin{itemize}
    \item Objects given by pairs $[A,B]$ with $A,B$ objects of $\mathbf{C}$
    \item Poly-morphisms of type $S:\Gamma \rightarrow \Theta$ with $\Gamma = [A_1,A_1'] \dots [A_n,A_n']$ and $\Theta = [B_1,B_1'] \dots [B_n,B_n']$ given by single-party representable supermaps of type $S:[A_1,A_1'] \dots [A_n,A_n'] \rightarrow [B_1 \otimes \dots \otimes B_m ,B_1' \otimes  \dots \otimes B_m']$
    \item Composition defined in the same way as for $\mathbf{pslot}[\mathbf{C}]$
\end{itemize}
\end{theorem}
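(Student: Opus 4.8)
\emph{Proof plan.} Since the preceding theorem already equips $\mathbf{pslot}[\mathbf{C}]$ with the structure of a symmetric polycategory, the cleanest route is to exhibit $\mathbf{srep}[\mathbf{C}]$ as a sub-polycategory of $\mathbf{pslot}[\mathbf{C}]$. Three facts then need to be checked: that every single-party representable supermap is in particular a polyslot; that the identities lie in $\mathbf{srep}[\mathbf{C}]$; and that the class of single-party representable supermaps is closed under the composition of $\mathbf{pslot}[\mathbf{C}]$. Granting these, the objects, poly-morphisms and composition of $\mathbf{srep}[\mathbf{C}]$ are literally a restriction of those of $\mathbf{pslot}[\mathbf{C}]$, so associativity, the unit laws, the symmetry and interchange axioms, and the enrichment are all inherited with no further work.

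For the first fact, fix a single-party representable supermap $S:[A_1,A_1']\dots[A_N,A_N']\to[B,B']$, a leg $i$, and morphisms $\underline{\phi}_{(m)}$ on all other legs. By the lemma above $S$ is a locally-applicable transformation, and its reduction $S^{i}(\underline{\phi}_{(m)})$ is again one; by hypothesis this reduction is a comb, with fixed ``up'' and ``down'' teeth. It then remains to note that a comb is always a slot: given any locally-applicable transformation $T$ and a tripartite $\phi$, acting with the comb on the first party is, as seen from the party on which $T$ acts, nothing more than pre- and post-composing the environment by the two fixed teeth and tensoring in the internal tooth wire. The defining naturality of locally-applicable transformations --- which in particular lets the environment be pre/post-composed and extended by idle wires --- carries $T$ through these operations, so the two sides of the slot equation agree. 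This is the same box-sliding already used in the proof that single-party representable supermaps are locally applicable, now specialised to a single leg.

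The second fact is immediate: the identity family $\phi\mapsto\phi$ on $[A,A']$ is a comb with identity teeth, hence single-party representable.

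The third fact is the one that requires work. Let $S:\underline{\mathbf A}\to\underline{\mathbf B}\,\mathbf M\,\underline{\mathbf C}$ and $T:\underline{\mathbf D}\,\mathbf M\,\underline{\mathbf E}\to\underline{\mathbf F}$ be single-party representable and form the $\mathbf{pslot}[\mathbf{C}]$-composite $T\circ_{\mathbf M}S$. Pick an input leg of the composite and plug concrete morphisms into all the others. If the chosen leg comes from $T$, then $S$ becomes a concrete morphism whose $\mathbf M$-output is plugged into $T$, and $T$ reduced on the chosen leg is by hypothesis a comb, so we are done. If the chosen leg $[A_i,A_i']$ comes from $S$, then plugging the remaining legs of $S$ makes $S$ a single-input supermap on $[A_i,A_i']$, hence a comb whose output carries the contraction wire $\mathbf M$; plugging the $\underline{\mathbf D},\underline{\mathbf E}$-legs of $T$ makes $T$ a single-input supermap on $\mathbf M$, hence a comb. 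The composition formula nests the $S$-comb into the hole of the $T$-comb along $\mathbf M$, and the functoriality of $\otimes$ together with the interchange law re-brackets the nested picture as a single comb around $[A_i,A_i']$: its up-part is built from the up-part of the $S$-comb, the up-part of the $T$-comb and the strand of $T$ feeding $\mathbf M$, and dually for its down-part, with the auxiliary systems $\underline X$ routed through. Ranging over all legs shows $T\circ_{\mathbf M}S$ is single-party representable. The main obstacle is precisely this last bookkeeping step: one must keep careful track of where the internal tooth wires of the two combs and the auxiliary systems sit relative to the nesting and apply the interchange law enough times to collapse the nested diagram --- there is no conceptual difficulty, only diagrammatic care. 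Assembling the three facts, $\mathbf{srep}[\mathbf{C}]$ is a sub-polycategory of $\mathbf{pslot}[\mathbf{C}]$, and in particular a symmetric polycategory with the stated objects, poly-morphisms and composition.
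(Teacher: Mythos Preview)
Your proposal is correct and follows essentially the same strategy as the paper: the paper's proof (both the brief main-text version and the appendix corollary) also reduces the problem to showing that single-party representable supermaps are polyslots so that associativity and interchange are inherited from $\mathbf{pslot}[\mathbf{C}]$, and then that single-party representability is preserved under composition because combs nest to form combs. Your treatment is in fact more careful than the paper's, which simply asserts closure as ``a direct consequence of the fact that combs are closed under composition'' without distinguishing the two cases (chosen leg from $S$ versus from $T$) or tracking the auxiliary wires.
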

\begin{proof}
The composition rule is the same as that of $\mathbf{pslot}[\mathbf{C}]$ and so is associative/unital. What must be checked is that the composition is still single-party representable. A careful proof is omitted but is a direct consequence of the fact that combs are closed under composition \cite{Hefford2022CoendCombs}.
\end{proof}

\begin{lemma}
For any symmetric monoidal category $\mathbf{C}$ then $\mathbf{srep}[\mathbf{C}] \subseteq \mathbf{pslot}[\mathbf{C}]$, meaning that every single-party representable supermap of type $S:[A_1,A_1'] \dots [A_n,A_n'] \rightarrow [B_1 \otimes \dots \otimes B_m ,B_1' \otimes  \dots \otimes B_m']$ is a polyslot of the same type.
\end{lemma}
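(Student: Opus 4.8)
The plan is to reduce the statement to the single observation that \emph{combs are slots}. I would start from the lemma already proved above, that every single-party representable supermap $S:[A_1,A_1']\dots[A_n,A_n']\to[B,B']$ is a locally-applicable transformation; so, to recognise $S$ as a polyslot, it remains only to fix an index $k$ together with morphisms $\underline{\phi}_{1\dots k-1},\underline{\phi}_{k+1\dots n}$ on the other slots and to check that the induced family $S^k(\phi_{(m)})$ is a slot. But the defining property of $\mathbf{srep}[\mathbf{C}]$, specialised to $i=k$, says precisely that $S^k(\phi_{(m)})$ acts by wrapping the morphism placed in the $k$-th slot between a fixed ``up'' map $S(\phi_{(m)})^u_k$ and a fixed ``down'' map $S(\phi_{(m)})^d_k$ joined by a memory wire (along which the remaining auxiliary wires $\underline{X}_m$, $m\neq k$, are threaded). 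In other words $S^k(\phi_{(m)})$ is, for each fixed $\phi_{(m)}$, a comb of type $[A_k,A_k']\to[B\otimes\underline{X}_{m<k}\otimes\underline{X}_{m>k},\,B'\otimes\underline{X}'_{m<k}\otimes\underline{X}'_{m>k}]$; in particular it is automatically a locally-applicable transformation. Since $\mathbf{srep}[\mathbf{C}]$ and $\mathbf{pslot}[\mathbf{C}]$ have the same objects and, by construction, the same composition, the inclusion of polycategories follows once every such comb is shown to be a slot.

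The core step is therefore a lemma: any locally-applicable transformation $C:[A,A']\to[\tilde B,\tilde B']$ that decomposes as a comb — with morphisms $U:\tilde B\to A\otimes M$ and $D:A'\otimes M\to\tilde B'$ so that $C_{X,X'}(\psi)$ is $\psi$ plugged between $U$ and $D$ with $M$ routed past the environment wire — commutes, in the sense of the slot condition, with an arbitrary locally-applicable transformation $T:[B_1,B_1']\to[B_2,B_2']$. I would prove this by direct graphical computation. Given $\phi\in\mathbf{C}(A\otimes B_1\otimes X,\,A'\otimes B_1'\otimes X')$, expand $C$ on the $A$-leg treating $B_1\otimes X$ as its auxiliary system, so that $C_{B_1\otimes X,\,B_1'\otimes X'}(\phi)=(D\otimes 1)\circ\sigma\circ(\phi\otimes 1_M)\circ\sigma'\circ(U\otimes 1)$ for the evident braids $\sigma,\sigma'$, where $U,D$ touch only the $A,A',M$ wires while $B_1,B_1',X,X'$ pass through untouched. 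Applying $T$ on the $B_1$-leg (with auxiliary system $\tilde B\otimes X$), each piece $U\otimes 1$, $D\otimes 1$, $\sigma$, $\sigma'$ is a morphism on $T$'s environment, and tensoring with the idle wire $1_M$ merely enlarges that environment; so by the naturality axiom for $T$ one may slide $T$ inward to get $(D\otimes 1)\circ\sigma\circ(T_{A\otimes X,\,A'\otimes X'}(\phi)\otimes 1_M)\circ\sigma'\circ(U\otimes 1)$, now carrying $B_2,B_2'$ in place of $B_1,B_1'$. This last expression is, by the definition of the comb, exactly $C_{B_2\otimes X,\,B_2'\otimes X'}(T(\phi))$, i.e.\ $C$ applied on the $A$-leg to $T(\phi)$; and that is what the reversed composition order produces on the nose, so the two agree and $C$ is a slot. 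Feeding this back: each $S^k(\phi_{(m)})$ is a slot, hence $S$ is a polyslot, and since the assignment is identity-on-objects and composition in $\mathbf{srep}[\mathbf{C}]$ is by definition that of $\mathbf{pslot}[\mathbf{C}]$, we conclude $\mathbf{srep}[\mathbf{C}]\subseteq\mathbf{pslot}[\mathbf{C}]$.

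The main obstacle I anticipate is the bookkeeping in that ``slide $T$ past the comb wrapper'' step: one has to verify carefully that the braids moving the memory wire $M$ across the $B_1$-leg, and the passive $M$-wire itself, are handled correctly — that $T$ applied to $(-)\otimes 1_M$ returns $T(-)\otimes 1_M$, and that $T$ commutes with pre- and post-composition by environment morphisms and with environment braids. All of this is exactly what the defining naturality condition of a locally-applicable transformation (here applied to $T$) supplies, so the difficulty is careful diagram manipulation rather than a new idea; the remaining checks (identity-on-objects, preservation of composition) are immediate from the two polycategories' definitions.
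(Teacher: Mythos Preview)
Your proposal is correct and follows the same line as the paper's own proof. The paper compresses the argument to a single sentence---single-party representability means that fixing all but one input yields a comb, and combs commute with every locally-applicable transformation---whereas you unpack exactly why that commutation holds by sliding $T$ through the wrapper maps $U$, $D$ and the accompanying braids via the naturality axiom for $T$.
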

\begin{proof}
This follows from noting that each single-party representable supermap, when acting on its part of any of its input bipartite processes acts as a comb, which implies that it commutes with any other locally-applicable transformation.
\end{proof}
So, single-party representable supermaps, are a special case of the polyslots. We will find that when applied to unitaries of arbitrary dimension, however, the strong locality property of polyslots is strong enough to enforce single-party representability. To frame this result we will require a generalization of traced monoidal categories to path-contraction categories. 

\section{Path Contraction Categories}
We now consider pathing constraints, using relations between a choice of input and output decomposition to specify the ways in which a morphism decomposes. A more detailed discussion is given in \cite{Wilson2022QuantumLocality}, however, for the purposes of this paper we will only need to address a primitive form of pathing constraint of interest in the foundations of quantum information processing \cite{Eggeling2002SemicausalSemilocalizable}. We say that
 \[ \tikzfig{figs/phismall} \quad \in \quad \mathcal{E}_{path}\Big(\tikzfig{figs/no_path_label_norm}\Big)  \] if and only if there exist processes $1,2$ such that  \[ \tikzfig{figs/phismall} \quad = \quad \tikzfig{figs/phi_upsig2} . \] In this sense, the processes $1$ and $2$ serve as a witness for the satisfaction of the pathing constraint by $\phi$. 
 Whilst the above form is the most common considered in quantum information processing, we will more often be concerned with pathing constraints of the following form \[ \tikzfig{figs/phismall} \quad \in \quad \mathcal{E}_{path}\Big(\tikzfig{figs/phi_upsig1}\Big)  \] which entails the following decomposition \[ \tikzfig{figs/phismall} \quad = \quad \tikzfig{figs/phi_upsig3}.  \]

A key step in our characterisation of slots on unitaries as unitary supermaps, will be to observe that all unitary slots preserve non-pathing constraints of the above form. To allow us to phrase our results in a general form we define a generalization of compact closed (or trace monoidal) categories which allow for contraction of input and output wires \textit{only} when the contraction is such that it returns a morphism in $\mathbf{C}$.
\begin{definition}
The no-pathing functor $np_{A \nrightarrow B}(-,=): \mathbf{C}^{op} \times \mathbf{C} \rightarrow \mathbf{Set}$ is defined by \[np_{A \nrightarrow B}(X,X') \quad :=  \quad \mathcal{E}_{path}\Bigg( \textrm{ }\tikzfig{figs/no_path_label} \textrm{ }\Bigg) . \]
\end{definition}
Path contraction categories are then taken to be those symmetric monoidal categories in which at-least the no-pathing morphisms can be contracted. 
\begin{definition}
A path-contraction category is a symmetric monoidal category $\mathbf{C}$ equipped with a functor $pc_{A}(-,=): \mathbf{C}^{op} \times \mathbf{C} \rightarrow \mathbf{Set}$ satisfying \[ np_{A \nrightarrow A}(X,X') \subseteq pc_{A}(X,X') \subseteq \mathbf{C}(AX,AX'),  \] and equipped with for each $A$ a natural transformation $\eta_{X,X'}:pc_{A}(X,X') \rightarrow \mathbf{C}(X,X')$ denoted in function-box notation as \[ \eta(\phi \in \mathcal{E}(\tau)) \quad := \quad  \tikzfig{figs/path_contraction1}    ,    \] satisfying \[ \tikzfig{figs/path_contraction2a} \quad =  \quad  \tikzfig{figs/path_contraction2b}        \quad \quad \textrm{and} \quad \quad  \tikzfig{figs/path_contraction3} \quad =  \quad  \tikzfig{figs/path_contraction4} .       \] \end{definition}
The above properties along with naturality are enough to ensure that contraction along any no-pathing process evaluates in an intuitive way, namely that \[ \tikzfig{figs/path_intuit1} \quad =  \quad \tikzfig{figs/path_intuit2}       \quad =  \quad  \tikzfig{figs/path_intuit3} \quad = \quad \tikzfig{figs/path_intuit4}    .    \]
Note that whenever a category can be equipped with a path-contraction structure for some functors $pc_{A}(X,X')$ then it can always be equipped with a path-contraction structure for the functors $np_{A \nrightarrow A}(X,X')$.


\begin{example}
Any compact closed category $\mathbf{P}$ is a path-contraction category with the required natural transformations given by using the cup and cap \[ \tikzfig{figs/path_contraction1}  \quad = \quad  \tikzfig{figs/path_contraction_cup}   , .    \] 
where we take \[  pc_{A}(X,X')   =   \mathbf{C}(AX,AX').   \]
Furthermore, for any symmetric monoidal subcategory $\mathbf{C} \subseteq \mathbf{P}$ with $\mathbf{P}$ compact closed we can instead inherit a path-contraction structure from the path-contraction $\eta$ of $\mathbf{P}$ by defining $pc_{A}(X,X') = \{ \phi \in \mathbf{C}(AX,AX') :  \ \eta_{X,X'}(\phi) \in  \mathbf{C}(X,X') \}$.
\end{example}
Consequently, the category $\mathbf{fU}$ of finite dimensional unitaries is a path contraction category via its embedding into $\mathbf{fHilb}$, as is the category $\mathbf{fQC}$ of finite dimensional quantum channels via its embedding into $\mathbf{fCP}$.
Our motivation for working with path-contraction categories as opposed to for instance categories that embed into compact closed categories is the ease with which they allow us to simultaneously discuss categories that include infinite-dimensional quantum systems. We take $\mathbf{sepHilb}$ to be the category of bounded linear maps between separable Hilbert spaces, and furthermore take $\mathbf{sepU} \subseteq \mathbf{sepHilb}$ to be the subcategory of unitary linear maps.
\begin{lemma}
The category $\mathbf{sepU}$ of unitaries between seperable Hilbert spaces is a path-contraction category.
\end{lemma}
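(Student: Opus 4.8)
Since the maximally entangled vector $\sum_i \ket{i}\otimes\ket{i}$ fails to be norm-convergent in an infinite-dimensional separable Hilbert space, $\mathbf{sepU}$ does not obviously embed into a compact closed category, and the categorical trace of a bounded operator over an infinite-dimensional factor need not be bounded, so $\mathbf{sepHilb}$ is not traced either; hence neither the Example above nor its traced corollary applies directly. The plan is instead to equip $\mathbf{sepU}$ with the \emph{minimal} path-contraction structure, taking $pc_A(X,X') := np_{A\nrightarrow A}(X,X')$, and to build the contraction $\eta$ by hand, exploiting the fact that a no-pathing constraint is precisely what prevents loop-closure from ever being a genuine partial trace.

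First I would verify that $np_{A\nrightarrow A}(-,=)$ is a subfunctor of $\mathbf{sepU}(A\otimes(-),A\otimes(=))$: since every morphism of $\mathbf{sepU}$ is unitary, pre- and post-composing a unitary with no directed $A$-input-to-$A$-output path by unitaries acting only on the $X$- or $X'$-wires creates no such path, so the functorial action restricts, and $np_{A\nrightarrow A}\subseteq pc_A\subseteq\mathbf{C}(AX,AX')$ holds with equality on the left. Next, the crux: given $\phi\in np_{A\nrightarrow A}(A\otimes X,A\otimes X')$, I would invoke the structure theorem for no-signalling unitaries on separable Hilbert spaces — the separable analogue of the decomposition recalled in \cite{Wilson2022QuantumLocality} — to obtain a witness $(\phi^u,\phi^d)$ exhibiting $\phi$ in the upper-signalling form, in which the $A$-system passes through \emph{local} unitaries threaded by an intermediate wire. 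Closing the $A$-loop then amounts to composing the two witnesses along that wire, which is a composite of unitaries and so lies in $\mathbf{sepU}(X,X')$; this defines $\eta_{X,X'}(\phi)$. Independence of the witness — hence well-definedness — follows because any two witnesses differ by a unitary on the intermediate wire that cancels upon loop-closure, exactly as in the finite-dimensional compact-closed case.

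With $\eta$ in hand, naturality and the two unit/coherence equations displayed in the definition of a path-contraction category are verified by diagrammatic box-sliding: boxes on the $X$- and $X'$-wires slide past the loop-closure point, and the normalisations are fixed by evaluating $\eta$ on the identity-threaded elements, yielding the intuitive evaluation of contraction along a no-pathing process. The one axiom needing genuine care is the gluing condition: given $\phi\in pc_Z(YX,YX')\cap pc_Y(ZX,ZX')$ whose $Y$-contraction lies in $pc_Y(X,X')$, I must show its $Z$-contraction lies in $pc_Z(X,X')$ and that the two iterated contractions agree. Here the hypothesis that the $Y$-contracted morphism is itself a (no-$Z$-to-$Z$-path) unitary is exactly what makes the second loop-closure defined at all; granting that, equality of the two orders of contraction is a yanking argument on the witnesses.

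The step I expect to be the main obstacle is the infinite-dimensional input: establishing that every no-pathing unitary between separable Hilbert spaces admits a witness through which the contracted system passes only via local unitaries, and consequently that each loop-closure — including the \emph{iterated} ones demanded by the gluing axiom — lands inside $\mathbf{sepU}$ rather than merely inside $\mathbf{sepHilb}$, with only separable intermediate systems. In finite dimensions this is immediate from compact closure; in the separable case it rests on the structure theory of no-signalling unitaries (or, alternatively, on transporting the finite-dimensional argument through a $*\mathbf{Hilb}$-style nonstandard model) together with careful separability bookkeeping of the auxiliary wires introduced by the witnesses.
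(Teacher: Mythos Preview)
Your approach is genuinely different from the paper's, and the comparison is instructive. You define the contraction $\eta$ by \emph{choosing witnesses} for the no-pathing decomposition and composing them, then argue well-definedness separately. The paper instead defines $\eta$ via a witness-free formula: in $\mathbf{sepHilb}$ one has the resolution of the identity $id_A = \lim_n \sum_{i=1}^n \ket{i}\bra{i}$, and $\eta(\phi)$ is defined as the limit $\lim_n \sum_{i=1}^n (\bra{i}\otimes id)\,\phi\,(\ket{i}\otimes id)$. This is manifestly independent of any witness; the only thing to check is that the limit converges for $\phi \in np_{A\nrightarrow A}$, and that follows by plugging in \emph{any} witness and watching the finite sums telescope. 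Naturality and the gluing/coherence axioms then come essentially for free from the fact that these limits commute with sequential and parallel composition in $\mathbf{sepHilb}$. The paper also notes your $\star\mathbf{Hilb}$ route as an alternative.

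Your route has two gaps worth flagging. First, a minor confusion: you invoke a ``structure theorem for no-signalling unitaries'' to obtain a witness, but membership in $np_{A\nrightarrow A}(X,X')$ is \emph{defined} by the existence of such a witness, so no theorem is needed there. Second, and more substantively, your well-definedness argument --- ``any two witnesses differ by a unitary on the intermediate wire that cancels upon loop-closure'' --- is asserted by analogy with the compact-closed case but not proven; in the separable setting the intermediate wires of two witnesses need not even be isomorphic, so the claim as stated is not literally correct. One can repair this, but the cleanest repair is precisely the paper's limit formula, which shows that loop-closure on any witness computes the same, witness-free quantity. So your strategy can be made to work, but the step you identify as the ``main obstacle'' (existence of witnesses, separability bookkeeping) is not where the difficulty lies; the real work is witness-independence, and the paper's analytic definition dissolves that problem rather than confronting it.
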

\begin{proof}
In $\mathbf{sepHilb}$ one can write the identity processes as the result of a limit called \textit{resolution of the identity} \[ \tikzfig{figs/limit1} \quad =  \quad \texttt{Lim}_{n \rightarrow \infty} \Sigma_{i = 1 }^{n} \textrm{ } \tikzfig{figs/limit2}. \]
Furthermore, $\mathbf{sepHilb}$ has the property that limits commute with sequential and parallel composition, this is sufficient for us to define path contraction by \[ \tikzfig{figs/path_intuit1} \quad =  \quad \tikzfig{figs/limit3}. \] This is well defined since 
\[ \tikzfig{figs/limit3} \quad = \quad \texttt{Lim}_{n \rightarrow \infty} \Sigma_{i = 1 }^{n} \textrm{ } \tikzfig{figs/limit8} \quad = \quad \texttt{Lim}_{n \rightarrow \infty} \Sigma_{i = 1 }^{n} \textrm{ } \tikzfig{figs/limit6}, \] and so when \[ \tikzfig{figs/limit4} \quad = \quad \tikzfig{figs/limit5}, \] we can say that \[  \tikzfig{figs/limit3} \quad = \quad \texttt{Lim}_{n \rightarrow \infty} \Sigma_{i = 1 }^{n} \textrm{ } \tikzfig{figs/limit7} \quad = \quad \texttt{Lim}_{n \rightarrow \infty} \Sigma_{i = 1 }^{n} \textrm{ } \tikzfig{figs/limit9} \quad = \quad  \tikzfig{figs/limit3b}. \]
An alternative way to observe path contraction for $\mathbf{sepHilb}$ is to note that the weak pseudo-functorial embedding $\texttt{trunc}[-]_{w}: \mathbf{sepHilb} \rightarrow \mathbf{Hilb}^{*}$ of $\mathbf{sepHilb}$ into the compact closed $2$-category $\mathbf{Hilb}^{*}$ is sufficiently well-behaved to define path-contraction by using cups and caps of $\mathbf{Hilb}^{*}$ \cite{Gogioso2019QuantumMechanics}. We instead give the construction in terms of limits explicitly since we expect such tools to be more familiar to the wider physics community.
\end{proof}
We do not ask that whenever $\phi \in pc_{Z}(YX,YX') \cap pc_{Y}(ZX,ZX')$ (up to swaps) and furthermore \[ \tikzfig{figs/path_contraction5} \quad \in \quad pc_{Y}(X,X') , \] then
\[ \tikzfig{figs/path_contraction6} \quad \in \quad pc_{Z}(X,X'),  \] and furthermore \[ \tikzfig{figs/path_contraction7} \quad = \quad \tikzfig{figs/path_contraction8}, \] where again swaps have been used to define the contraction of wires which are not on the left-hand side. This is because it is not obvious in the infinite-dimensional case whether the taking of such limits ought to commute. When a path-contraction category also satisfies this property we will refer to it as a commuting path contraction category, examples include those above which are constructed from symmetric monoidal subcategories $\mathbf{C} \subseteq \mathbf{P}$ of compact closed categories. 

Generally, path-contraction structure when present, can itself be used to construct a definition (or at least internal representation ansatz) for supermaps.
\begin{definition}
Let $\mathbf{C}$ be a path-contraction category with functor $pc_{A}(X,X')$, then a path contraction supermap of type $S:\Gamma \rightarrow [B,B']$ is any locally-applicable transformation of the same type which takes the form \[ S_{X_i,X_i {'}} \quad = \quad \tikzfigscale{0.7}{figs/pc_supermap_2} , \] for any order of application of contractions along the $X_i$\footnote{One could instead ask that there exists some order of contractions which implements the associated locally applicable transformation, we will find however that slots characterise on path contraction groupoids to representations in which the result is independent of the order in which contractions are taken.}.
\end{definition}
For any commuting path contraction category, the path-contraction supermaps define a polycategory $\mathbf{pathcon}[\mathbf{C}]$. This is a straightforward generalization of the proof for $\mathbf{P}$-supermaps with $\mathbf{P}$ a compact closed category. We note without proof from now on that given an embedding of a symmetric monoidal category $\mathbf{C}$ into a compact closed category $\mathbf{P}$ then the $\mathbf{P}$-supermaps are in one-to-one correspondence with the path-contraction-supermaps where the path-contraction functor is taken to be given by specifying (up to cups and caps) the set of all $\mathbf{P}$-supermaps.

Note that each of $\mathbf{fU}$ and $\mathbf{sepU}$ are groupoids. 
\begin{definition}
A path-contraction groupoid is a path-contraction category in which every morphism is an isomorphism.
\end{definition}
\begin{example}
$\mathbf{fU}$ and $\mathbf{sepU}$ are path-contraction groupoids.
\end{example}
Consequently the language of path-contraction groupoids will allow us to prove theorems simultaneously for unitaries on finite-dimensional, and seperable Hilbert spaces. We finish by noting the following property, which we already implicitly used to conlude that $S^{loop}$ and $S^{V}$ are well-formed locally-applicable transformations on the symmetric monoidal category of unitary linear maps.  
\begin{lemma}
In a groupoid, for every $V:Y \rightarrow X$ and $W:X' \rightarrow Y'$ then:
\[ \tikzfig{figs/unitary_same} \in \quad \mathcal{E}_{path}\Big(\tikzfig{figs/phi_upsig1flip}\Big) \iff  \tikzfig{figs/unitary_v_w} \in \quad \mathcal{E}_{path}\Big(\tikzfig{figs/phi_upsig1flip}\Big)  \]
\end{lemma}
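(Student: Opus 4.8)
The plan is to work directly from the definition of $\mathcal{E}_{path}$ as the set of morphisms admitting a witnessing decomposition through a one-way memory channel, and to observe that pre- and post-composing with the maps $V$ and $W$ is simply absorbed into the two blocks of such a decomposition. Write $\phi'$ for the morphism on the right-hand side of the stated equivalence, obtained from $\phi$ by plugging $V$ into the appropriate input leg and $W$ onto the appropriate output leg. By definition $\phi \in \mathcal{E}_{path}(\tau)$, with $\tau$ the relation in the statement, means there exist processes $1$ and $2$ witnessing the factorization of $\phi$ into the shape dictated by $\tau$, i.e. with a memory wire that flows only one way between the two blocks.

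First I would prove the forward implication for \emph{arbitrary} $V$ and $W$, using no invertibility. Every external leg of $\phi$ touches exactly one of the two blocks of a witnessing decomposition, so the leg carrying $V$ lies inside one block and the leg carrying $W$ lies inside one block (possibly the same block, possibly different ones). Associativity and the interchange law then let me rewrite $\phi'$ with new witness blocks obtained by sequentially composing $V$, respectively $W$, into the relevant block; this is still a decomposition of shape $\tau$, since the one-way memory wire is untouched and no new directed path is created, so $\phi' \in \mathcal{E}_{path}(\tau)$. Hence $\phi \mapsto \phi'$ preserves $\mathcal{E}_{path}(\tau)$ for all $V,W$.

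The converse is then immediate from the groupoid hypothesis: $V$ and $W$ are isomorphisms, and plugging $V^{-1}$ and $W^{-1}$ back into $\phi'$ recovers $\phi$, so applying the forward implication — already established for arbitrary morphisms — to $\phi'$ together with $V^{-1}:X\to Y$ and $W^{-1}:Y'\to X'$ yields $\phi\in\mathcal{E}_{path}(\tau)$ whenever $\phi'\in\mathcal{E}_{path}(\tau)$. This closes the equivalence.

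I do not expect a serious obstacle: the content is diagram bookkeeping together with the one structural observation that the groupoid hypothesis is exactly what promotes the always-valid one-directional preservation of $\mathcal{E}_{path}(\tau)$ into a two-sided equivalence. The only thing to be mildly careful about is confirming that the legs carrying $V$ and $W$ are genuinely external legs of the decomposition and not the internal memory channel — which holds automatically, since the memory channel has no external endpoints — so that the absorption step in the forward direction can never reroute a forbidden path.
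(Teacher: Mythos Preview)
Your proposal is correct and follows the same approach as the paper, which simply writes ``Given by invertibility of $V,W$.'' You have merely unpacked the one-line argument: the forward direction absorbs $V$ and $W$ into the witness blocks, and invertibility in the groupoid gives the converse by applying the forward direction to $V^{-1}$ and $W^{-1}$.
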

\begin{proof}
Given by invertibility of $V,W$.
\end{proof}
This lemma allows to generalize the definitions of $S^V$ and $S^{loop}$ to arbitrary path-contraction groupoids.

\section{Characterisation of Polyslots on Path Contraction Groupoids}
Here we show that slots on path contraction groupoids can always be implemented by combs. We begin by showing that their action on swap morphisms always decomposes into a no-pathing morphism.

\begin{lemma}[Slots Preserve Signalling Constraints]
Let $S:[A_1,A_1'] \rightarrow [A_2,A_2']$ be a slot on a path-contraction groupoid $\mathbf{G}$ then \[  \tikzfig{figs/s_on_swap} \quad \in \quad  \mathcal{E}_{path}\Big(\tikzfig{figs/phi_upsig1_rev}\Big) . \]
\end{lemma}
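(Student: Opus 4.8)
The plan is to leverage the one genuine handle on a slot $S$: by definition it commutes, in the bipartite sense of the slot condition, with \emph{every} locally-applicable transformation $T$, and in particular with the transformations $S^{V}$ and $S^{loop}$, which were engineered precisely so as to be sensitive to whether their input lies in $\mathcal{E}_{path}\big(\tikzfig{figs/phi_upsig1}\big)$. The invertibility lemma established for groupoids in the previous section ensures that $S^{V}$ and $S^{loop}$ really are locally-applicable transformations on any path-contraction groupoid $\mathbf{G}$, hence admissible as the test transformation in the slot equation.

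First I would feed $S$ and the test transformation $T$ a pure ``crossing'' as joint input: the braiding $\beta$, with $S$ acting through its $[A_1,A_1']$-legs and $T$ acting through a second pair of legs, the wires routed so that $S$'s input/output systems are carried by $\beta$ into $T$'s auxiliary wires and back, with all remaining auxiliary wires the identity. A bare crossing trivially satisfies the no-pathing constraint on $T$'s legs, so when $T := S^{V}$ is applied \emph{first} it fires and deposits the isomorphism $V$ locally on the relevant wire; the two sides of the slot equation then read, schematically, ``apply $V$ locally, then $S$'' against ``form $S(\beta)$, then $S^{V}$''. On the second side $S^{V}$ fires \emph{if and only if} its input $S(\beta)$ --- which is exactly the morphism $\tikzfig{figs/s_on_swap}$ of the statement --- lies in $\mathcal{E}_{path}\big(\tikzfig{figs/phi_upsig1}\big)$. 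Since the slot equation must hold for \emph{all} isomorphisms $V$, this forces $S(\beta)$ to be insensitive to every such local $V$, and in a groupoid that is possible only if $S(\beta)$ admits the decomposition $\tikzfig{figs/phi_upsig3}$, i.e.\ only if $S(\beta)\in\mathcal{E}_{path}\big(\tikzfig{figs/phi_upsig1}\big)$. A cleaner contradiction variant uses $T := S^{loop}$ instead: were $S(\beta)$ not in $\mathcal{E}_{path}$, one side of the slot equation would carry a nontrivial time-loop and the other would not, which in a groupoid genuinely changes the morphism --- the same mechanism as in the earlier lemma forcing $A\cong\mathbb{C}$.

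The main obstacle is the bookkeeping in that middle step: arranging the crossing and the wire assignments so that the firing condition of $S^{V}$ (resp.\ $S^{loop}$) on each side of the slot equation is correlated with membership of \emph{precisely} $S(\beta)$ --- and not of some larger, permuted, or dressed morphism --- in $\mathcal{E}_{path}\big(\tikzfig{figs/phi_upsig1}\big)$, and then verifying that ``$S^{V}$ fires for every $V$ yet always returns the same output'' genuinely forces the no-pathing decomposition rather than something strictly weaker. This is where the invertibility lemma (needed to pre- and post-compose $\beta$ with the $V$ and $W$ that make the types match) and the explicit shape of the no-pathing decomposition in a groupoid carry the weight; everything else is routine diagram manipulation using the locality of $S$ and invertibility in $\mathbf{G}$.
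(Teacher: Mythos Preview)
Your approach is essentially the paper's: test the slot equation on the swap with $T=S^{V}$, use that the swap lies in $\mathcal{E}_{path}$ so $S^{V}$ fires on one side, and derive a contradiction (via invertibility in the groupoid) from the assumption that $S(\beta)\notin\mathcal{E}_{path}$ so that $S^{V}$ fails to fire on the other side. The paper uses only $S^{V}$ and runs the contradiction directly (post-composing with $V$ on an invertible $S(\beta)$ forces $V=\mathrm{id}$); your $S^{loop}$ variant is not needed here, and your phrasing ``insensitive to every local $V$ $\Rightarrow$ admits the decomposition'' slightly inverts the logic --- the correct reading is that insensitivity is impossible for $V\neq\mathrm{id}$, which is the contradiction.
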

\begin{proof}
Assume that \[  \tikzfig{figs/s_on_swap} \quad \not\in \quad  \mathcal{E}_{path}\Big(\tikzfig{figs/phi_upsig1_rev}\Big) ,  \] then using commutativity of $S$ with any $S^{V}$ with $V \neq id$ gives \[  \tikzfig{figs/lemma_sig_1} \quad = \quad  \tikzfig{figs/lemma_sig_2}  \quad = \quad  \tikzfig{figs/lemma_sig_3}  \quad = \quad  \tikzfig{figs/lemma_sig_4} .  \] Using the fact that every morphism in $\mathbf{G}$ is an isomorphism we then find that
\[ \implies \tikzfig{figs/lemma_sig_5} \quad = \quad  \tikzfig{figs/lemma_sig_6},   \]
and furthermore any path-contraction groupoid $\mathbf{G}$ we have $i \otimes U = i \otimes W \implies U = W$.
\end{proof}
Note that $S^{loop}$ cannot be a slot, since it fails to satisfy the above condition, of preserving non-pathing constraints. Whilst the swap satisfies a non-pathing constraint \[ \tikzfig{figs/swap_only} \quad  \in \quad \mathcal{E}_{path}\Big(\tikzfig{figs/phi_upsig1_rev}\Big) ,   \]
The action of $S^{loop}$ on the swap gives a signalling channel
\[ \tikzfig{figs/s_loop_swap_2} \quad = \quad  \tikzfig{figs/s_loop_ida} \quad \tikzfig{figs/s_loop_ida} \quad  \not\in \quad   \mathcal{E}_{path}\Big(\tikzfig{figs/phi_upsig1_rev}\Big) . \]
We now give our main theorem, that slots on path-contraction groupoids are always combs, meaning that polyslots are always single-party representable.
\begin{theorem}For any path-contraction groupoid $\mathbf{G}$ then $\mathbf{pslot}[\mathbf{G}] = \mathbf{srep}[\mathbf{G}]$.
\end{theorem}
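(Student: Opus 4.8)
Since the inclusion $\mathbf{srep}[\mathbf{G}] \subseteq \mathbf{pslot}[\mathbf{G}]$ is already in hand, the plan is to prove the reverse inclusion: every polyslot on $\mathbf{G}$ is single-party representable. The first move is to notice that both properties are imposed one input at a time --- being a polyslot requires each single-input restriction $S^{i}(\phi_{(m)})$ to be a slot, and being single-party representable requires each such restriction to be a comb (the data $S(\phi_{(m)})_i^{u}, S(\phi_{(m)})_i^{d}$) --- so it is enough to prove the single-party statement that \emph{every slot $S:[A_1,A_1'] \to [A_2,A_2']$ on a path-contraction groupoid is a comb}, after which the return to the polycategorical statement is immediate (composition in $\mathbf{pslot}$ and $\mathbf{srep}$ is literally the same rule, and no compatibility between the $N$ single-input decompositions is demanded by the definition of $\mathbf{srep}$). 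To extract the candidate comb I would apply $S$ to the swap morphism $\beta_{A_1,A_1'} \in \mathbf{G}(A_1 \otimes A_1', A_1' \otimes A_1)$; by the lemma that slots preserve signalling constraints, $S(\beta_{A_1,A_1'})$ lies in $\mathcal{E}_{path}$ of the relevant no-signalling relation, hence factors as a comb through honest morphisms $S^{u}, S^{d}$ of $\mathbf{G}$ and a memory object $M$. Writing $\widehat{S}$ for the single-party representable supermap these data induce, $\widehat{S}$ agrees with $S$ on $\beta_{A_1,A_1'}$ by construction, and --- invoking $\mathbf{srep}[\mathbf{G}] \subseteq \mathbf{pathcon}[\mathbf{G}]$ --- the value of $\widehat{S}$ on any $\phi$ is computed by an internal process of $\mathbf{G}$ together with path contractions.

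It then remains to show $S = \widehat{S}$, i.e.\ that a slot on a path-contraction groupoid is determined by its value on swap morphisms. Fix $\phi \in \mathbf{C}(A_1 \otimes X, A_1' \otimes X')$. The idea is to rebuild $\phi$ out of $\beta_{A_1,A_1'}$: route the $A_1$-leg through the swap onto an auxiliary copy, act with $\phi$ there, and then path-contract that auxiliary wire, checking --- using invertibility of the swap and the compatible-double-contraction axiom of a path-contraction category --- that the resulting configuration lies in the appropriate $pc$-set so that the contraction is legal. On one hand $\widehat{S}$ respects this rebuilding because it is itself built from an internal process and path contractions. On the other hand the operation ``insert $\phi$ on the auxiliary system and path-contract'' acts on a tensor factor disjoint from the $[A_1,A_1']$-leg carrying the swap and has comb shape, so it is a morphism of $\mathbf{srep}[\mathbf{G}] \subseteq \mathbf{pslot}[\mathbf{G}]$, and the defining commutation property of a slot lets $S$ be moved past it. Comparing the two computations yields $S_{X,X'}(\phi) = \widehat{S}_{X,X'}(\phi)$, completing the single-party case and hence the theorem.

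The main obstacle is this last step. Two points need care. First, the path-contraction ``reconstruction'' of a general $\phi$ from a swap, together with the verification that it lands in the correct $pc$-sets, must be carried out purely from the path-contraction axioms and the groupoid hypothesis; this is the analogue, one categorical level up, of the manipulations already used to prove the signalling lemma, and the real labour is bookkeeping swaps and $pc$-memberships correctly. Second, one must ensure that $S$, which is a priori only a locally-applicable transformation, genuinely commutes with the environment-side path-contraction and not merely with environment pre- and post-composition; the proposed resolution is to repackage ``insert $\phi$ and contract'' as a bona fide polyslot on a disjoint factor so that the slot axiom applies verbatim, and the cost is checking that this repackaging is well-typed and comb-shaped. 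The remaining ingredients --- the reduction to one input, the extraction of the candidate comb from the action on a swap, and the passage back to polycategories --- are essentially routine.
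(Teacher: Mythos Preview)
Your overall architecture matches the paper's: reduce to the single-input case, apply $S$ to the swap, invoke the signalling lemma to extract candidate comb data $S^{u},S^{d}$, then show that $S$ agrees with the comb on every $\phi$ by rebuilding $\phi$ from the swap via path-contraction and commuting $S$ past the auxiliary-side operation. The gap is in your proposed justification for that last commutation.

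You plan to package ``insert $\phi$ on the auxiliary and path-contract'' as a comb, i.e.\ as a morphism of $\mathbf{srep}[\mathbf{G}]$ of type $[A_1',A_1]\to[X,X']$, and then appeal to the slot axiom. This repackaging cannot succeed. A comb of that type is specified by morphisms $d:X\to A_1'\otimes M$ and $u:A_1\otimes M\to X'$; applying it to the swap (and then using local applicability to pull it inside $S$) produces $S$ evaluated on $(id_{A_1'}\otimes u)\circ(\beta\otimes id_M)\circ(id_{A_1}\otimes d)$, which is a morphism whose $A_1'$-output depends only on the $X$-input. In other words, combs on the auxiliary side only recover those $\phi$ that are non-signalling from $A_1$ to $A_1'$, not arbitrary $\phi$. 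The contraction you need is genuinely loop-forming, and a well-defined locally-applicable transformation implementing it must behave conditionally on the no-pathing property of its input---precisely the behaviour that the paper earlier showed \emph{cannot} be realised by any $\mathbf{P}$-supermap (hence not by any comb) unless the system is trivial.

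The paper's resolution is exactly the one your obstacle paragraph gestures at but then steers away from: the auxiliary-side operation is not a comb but it \emph{is} the locally-applicable transformation $S^{loop}$ (defined conditionally, applying the loop when the input satisfies the requisite no-pathing constraint and acting trivially otherwise). The slot axiom demands commutation with \emph{all} locally-applicable transformations, not merely with combs or polyslots, so one may commute $S$ past $S^{loop}$ directly. After that swap the remaining manipulations are straightforward path-contraction algebra and local applicability. So your plan becomes correct once you replace ``show the auxiliary operation is comb-shaped'' with ``recognise the auxiliary operation as $S^{loop}$ and invoke the full strength of the slot axiom''; everything else you wrote stands.
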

\begin{proof}
We use the fact that the action of $S$ on the swap must be non-pathing. Let $U_1,U_2$ be morphisms which witness this non-pathing constraint, then using the fact that $\mathbf{G}$ is a path-contraction category we can say that \[\tikzfig{figs/comb_1} \quad = \quad   \tikzfig{figs/comb_2path}  \quad = \quad   \tikzfig{figs/comb_3path} . \] Now using the diagrammatic rules for locally-applicable transformations this in turn in equal to
\[    \quad \tikzfig{figs/comb_4path}  \quad = \quad \tikzfig{figs/comb_5path}.   \] Then, using the definition of $S^{loop}$ and the fact that $S$ is a slot, the above in turn is equal to
\[  \quad \tikzfig{figs/comb_6path}  \quad = \quad \tikzfig{figs/comb_7path}  .  \]
Then unpacking the definition of $S^{loop}$ and using the laws for path-contraction categories and locally-applicable transformations gives
\[ = \quad   \tikzfig{figs/comb_8path} \quad = \quad \tikzfig{figs/comb_9path}   \quad = \quad \tikzfig{figs/comb_10path}. \] Finally, since $\mathbf{G}$ is a path contraction category this entails that \[\tikzfig{figs/comb_1a} \quad = \quad   \tikzfig{figs/comb_11} .  \] So far, we have proven that any slot is given by a comb, now we consider the case of a general multi-input polyslot. Focusing on some $\phi_i$, we examine the family of functions $S^{i}(\phi_i) := S(\phi_1 \dots \phi_i \dots \phi_n)$ where since $S$ is a polyslot each $S^i$ is by definition a slot and so by the above must decompose as a comb. Since this is true for each $i$, the slot $S$ is in-fact single-party representable.
\end{proof}

\begin{theorem}
For any path contraction category $\mathbf{C}$, every single-party representable supermap can be represented by a path-contraction supermap.
Concretely, any single-party representable supermap $S:[A_1,A_1'] \dots [A_n,A_n'] \rightarrow [B,B']$ on a path contraction category $\mathbf{C}$ can be implemented in terms of a process $S^{int}:A_1' \dots A_n' B \rightarrow A_1 \dots A_n B'$ of $\mathbf{C}$ and path-contractions in the following way: \[ \tikzfig{figs/multiloc} \quad = \quad \tikzfig{figs/internalproof0}  .\]
\end{theorem}
\begin{proof}
We give the proof for $N=2$, the extension to general $N$ is conceptually identical only heavier in notation. Define the required internal process by \[ \tikzfig{figs/internalproof0b} \quad := \quad  \tikzfig{figs/internalproof1} . \] Now, we evaluate the expression \[  \tikzfig{figs/internalproof2} . \] Without loss of generality let us imagine that that contraction along party $1$ is taken first and for simplicity study the $2$-input case, by the single-party representability property we can see that the above is equal to \[  \tikzfig{figs/internalproof3},  \] and using the fact that wlg we took the contraction along party $1$ first this is in turn equal to \[  \tikzfig{figs/internalproof4},  \] and hence \[  \tikzfig{figs/internalproof5}.  \] Finally, undoing local-representability gives \[ \tikzfig{figs/internalproof6},  \] and using analogous steps for $\phi_2$ gives the result.
\end{proof}

In general then, observing that consequently in any commuting path-contraction category $\mathbf{srep}[\mathbf{C}] \subseteq \mathbf{pathcon}[\mathbf{C}]$ we have that for any commuting path-contraction groupoid $\mathbf{pslot}[\mathbf{G}] = \mathbf{srep}[\mathbf{G}] \subseteq \mathbf{pathcon}[\mathbf{G}]$. As we will now see, each of these constructions generalizes the finite-dimensional unitary supermaps. 
\begin{theorem}Polyslots generalize quantum supermaps on the quantum channels and on the unitaries to arbitrary symmetric monoidal categories. Formally, there is an equivalence \[ \mathbf{pslot}[\mathbf{fU}] \cong \mathbf{uQS} \] of polycategories for the unitary case and an equivalence \[ \mathbf{pslot}[\mathbf{fQC}] \cong \mathbf{QS} \] of polycategories for the mixed case.
\end{theorem}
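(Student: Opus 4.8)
The plan is to deduce both equivalences from the chain of results already established, bridged by the two classical finite-dimensional structural (``realization'') theorems for supermaps. The unifying idea is to present $\mathbf{uQS}$ and $\mathbf{QS}$ as the $\mathbf{P}$-supermap polycategories $\mathbf{polyPsup}[\mathbf{fU}]$ (with $\mathbf{P}=\mathbf{fHilb}$) and $\mathbf{polyPsup}[\mathbf{fQC}]$ (with $\mathbf{P}=\mathbf{fCP}$); identify each of these, via the remark that path-contraction supermaps coincide with $\mathbf{P}$-supermaps, with $\mathbf{pathcon}[\mathbf{fU}]$ and $\mathbf{pathcon}[\mathbf{fQC}]$ for the path-contraction structures induced by the ambient compact closed categories; then show that $\mathbf{pslot}=\mathbf{srep}$ sits faithfully inside $\mathbf{pathcon}$ and in fact exhausts it. Since every construction in sight is identity-on-objects and built from the same graphical composition rule, the resulting hom-set bijections will assemble into a polycategory isomorphism, which is stronger than the asserted equivalence.

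\emph{Unitary case.} First, $\mathbf{fU}$ is a path-contraction groupoid, so the main theorem gives $\mathbf{pslot}[\mathbf{fU}]=\mathbf{srep}[\mathbf{fU}]$, and the inclusion $\mathbf{srep}[\mathbf{C}]\subseteq\mathbf{pathcon}[\mathbf{C}]$ applied to $\mathbf{fU}\subseteq\mathbf{fHilb}$ yields a faithful, identity-on-objects polycategory functor $\mathbf{pslot}[\mathbf{fU}]=\mathbf{srep}[\mathbf{fU}]\hookrightarrow\mathbf{pathcon}[\mathbf{fU}]\cong\mathbf{uQS}$. For fullness, take a unitary supermap $S:\underline{\mathbf{A}}\rightarrow[B,B']$; fixing all inputs except the $i$-th produces a map on $\mathbf{fU}(A_iX_i,A_i'X_i')$ that is again a unitary supermap with a single open slot, hence by the finite-dimensional structural theorem (every unitary-preserving $\mathbf{fHilb}$-supermap on $\mathbf{fU}$ decomposes as a one-hole comb) it factors through up/down parts $S^{u}_i$, $S^{d}_i$ of exactly the shape required in the definition of $\mathbf{srep}$. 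Running this for each $i$ shows $S\in\mathbf{srep}[\mathbf{fU}]$, so the functor is full as well as faithful, i.e.\ an equivalence, and identity-on-objects, giving $\mathbf{pslot}[\mathbf{fU}]\cong\mathbf{uQS}$.

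\emph{Mixed case.} Here $\mathbf{fQC}$ is not a groupoid, so the main theorem does not apply directly; instead I would establish $\mathbf{pslot}[\mathbf{fQC}]=\mathbf{srep}[\mathbf{fQC}]$ by hand. On one side, $\mathbf{srep}[\mathbf{C}]\subseteq\mathbf{pslot}[\mathbf{C}]$ holds for all $\mathbf{C}$. On the other, a polyslot on $\mathbf{fQC}$ restricted to a single party (all other inputs fixed) is a slot, hence a locally-applicable transformation, hence a quantum supermap by the cited correspondence $\mathbf{lot}[\mathbf{fQC}]\cong\mathbf{QS}$, hence a quantum comb by the classical structural theorem for quantum supermaps (Choi isomorphism plus the hierarchy of positivity/normalisation conditions); so every polyslot is single-party representable, giving $\mathbf{pslot}[\mathbf{fQC}]=\mathbf{srep}[\mathbf{fQC}]$. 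Then $\mathbf{srep}[\mathbf{fQC}]\subseteq\mathbf{pathcon}[\mathbf{fQC}]\cong\mathbf{QS}$ is a faithful, identity-on-objects polycategory functor, and it is full because, as in the unitary case, a quantum supermap restricted to one party is a quantum supermap and hence a comb by the structural theorem, so it lands in $\mathbf{srep}[\mathbf{fQC}]$. Thus $\mathbf{pslot}[\mathbf{fQC}]\cong\mathbf{QS}$.

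\emph{Main obstacle.} The genuinely substantive ingredients are the two finite-dimensional structural theorems — that a one-hole unitary supermap is a unitary comb and that a one-hole quantum supermap is a quantum comb — which is exactly where finite-dimensionality is essential and which I would import from the existing literature rather than reprove. The bookkeeping point I expect to spend the most care on is verifying that the three composition operations in play — the graphical composition of $\mathbf{pslot}$/$\mathbf{srep}$, the path-contraction composition of $\mathbf{pathcon}$, and the cup/cap ``plugging'' composition of $\mathbf{polyPsup}$ (i.e.\ of $\mathbf{uQS}$ and $\mathbf{QS}$) — literally agree under the identifications above, so that the hom-set bijections are functors of polycategories rather than mere object-indexed bijections. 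This amounts to unwinding the path-contraction axioms and the definition of $\eta$, and while routine it is notation-heavy and is the place where an error could hide.
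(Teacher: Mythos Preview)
Your proposal is correct and, for the unitary case, follows the paper's proof essentially verbatim: the chain $\mathbf{pslot}[\mathbf{fU}]=\mathbf{srep}[\mathbf{fU}]\subseteq\mathbf{pathcon}[\mathbf{fU}]\cong\mathbf{uQS}$ together with fullness via the single-party comb decomposition of unitary supermaps is exactly what the paper does.

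For the mixed case there is a minor but genuine difference in route. You establish $\mathbf{pslot}[\mathbf{fQC}]=\mathbf{srep}[\mathbf{fQC}]$ by restricting a polyslot to one party, invoking $\mathbf{lot}[\mathbf{fQC}]\cong\mathbf{QS}$, and then the comb structural theorem. The paper instead observes that because every locally-applicable transformation on $\mathbf{fQC}$ is already (the image of) a $\mathbf{fCP}$-supermap, the slot commutation condition is automatically inherited from the interchange law of the compact closed category $\mathbf{fCP}$; this short-circuits the need to pass through $\mathbf{srep}$ and the comb decomposition on the $\mathbf{pslot}\subseteq\mathbf{QS}$ side. Both arguments rest on the same imported theorem $\mathbf{lot}[\mathbf{fQC}]\cong\mathbf{QS}$; yours is more uniform with the unitary case (everything goes through $\mathbf{srep}$), while the paper's is shorter for $\mathbf{fQC}$ specifically. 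Your flagged bookkeeping concern about compatibility of the three composition rules is legitimate and is not addressed explicitly in the paper either.
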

\begin{proof}
Based on the equivalence between path-contraction supermaps and $\mathbf{P}$-supermaps with $\mathbf{P}$ compact closed we have that $\mathbf{uQS} \cong \mathbf{pathcon}[\mathbf{fu}]$ and so $\mathbf{pslot}[\mathbf{fU}] = \mathbf{srep}[\mathbf{fU}] \subseteq \mathbf{pathcon}[\mathbf{fU}] \cong \mathbf{uQS}$. What remains is to show that $\mathbf{uQS} \subseteq \mathbf{srep}[\mathbf{fU}]$. In short, we must show that every unitary-preserving quantum supermap decomposes at the single-party level as a comb. First, every quantum supermap of type $[A,A'] \rightarrow [B,B']$ decomposes as a comb, which in graphical terms means that any $\mathbf{CP}$-supermap on $\mathbf{QC}$ decomposes as \[ \tikzfigscale{0.6}{figs/newsupermap_ext} \quad  =  \quad \tikzfig{figs/locrepcomb},  \] where $S^u$ and $S^d$ are quantum channels $\in \mathbf{fQC}$. A proof of this fact can be found in  \cite{Chiribella2008TransformingSupermaps} which at its core relies on the causal decomposition theorem for no-signalling channels \cite{Eggeling2002SemicausalSemilocalizable}. The fact that furthermore every single-party unitary supermap decomposes as a unitary comb is given in \cite{Yokojima2021ConsequencesSuperchannels}. In graphical terms this means that any $\mathbf{fHilb}$-supermap on $\mathbf{fU}$ decomposes as \[ \tikzfigscale{0.6}{figs/newsupermap_ext} \quad  =  \quad \tikzfig{figs/locrepcomb}  \] where $S^u$ and $S^d$ are unitaries $\in \mathbf{fU}$. As a consequence of the former decomposition theorem every quantum supermap of type $[A_1,A_1'] \dots [A_n,A_n'] \rightarrow [B,B']$ satisfies: \[ \tikzfigscale{0.6}{figs/locrep_sup} \quad  = \quad  \tikzfig{figs/locrep2}, \]
where the $S(\phi_{(m)})_i^u$ and $S(\phi_{(m)})_i^u$ are quantum channels. Furthermore, the same may be said for unitary supermaps, which can be shown to be realised in the same way by unitary linear maps. This can be shown by noting that fixing all but $\phi_i$, the resulting map $S(\phi_1, \dots \phi_{i-1} (-) \phi_{i+1} \dots \phi_N)$ defines up to braiding a single party supermap, so by the previous lemma must decompose as a comb. Consequently, we see that from any multiparty unitary supermap we can construct a single-party representable locally-applicable transformation.

Finally, the equivalence $\mathbf{pslot}[\mathbf{fQC}] \cong \mathbf{QS}$ follows from noting that since the locally-applicable transformations of type $\hat{S}:[A,A'] \rightarrow [B,B']$ are always given by $\hat{S} = \mathcal{F}_{\mathbf{QC}}(S)$ for some quantum supermap of the same type \cite{Wilson2022QuantumLocality}, then the slot condition for $\hat{S}$ (commutation) is inherited by the interchange law of $\mathbf{fCP}$. 
\end{proof}
Regarding the case of infinite dimensional supermaps, since $\mathbf{sepU}$ is a path contraction groupoid we already know that $\mathbf{pslot}[\mathbf{sepU}] = \mathbf{srep}[\mathbf{sepU}]$ and that every single-party representable supermap is a path contraction supermap. What is not so clear is whether there exists an infinite-dimensional analog of the canonical decomposition theorem for supermaps, that all possible path-contraction supermaps decompose at the single-party level as combs.

\section{Application: Quantum Switch for Hilbert Spaces of Arbitrary Dimension}
On the category $\mathbf{U}$ of unitaries between arbitrary Hilbert spaces, even beyond those which are separable, we can show that $\mathbf{pslot}[\mathbf{U}]$ and $\mathbf{srep}[\mathbf{U}]$ are broad enough to include generalisations of the quantum switch. We call a set $\{ \pi_k \} \subseteq \mathbf{Hilb}(Q,Q)$ a \textit{control} if $\pi_k \circ \pi_l = \delta_{k,l}$.
\begin{definition}[The Quantum Switch for Arbitrary Hilbert Spaces]
The quantum switch on $\mathbf{U}$ with control $\{ \pi_0,\pi_1 \}$ is defined as a polyslot of type $\texttt{Switch}:[A,A][A,A] \rightarrow [Q \otimes A,Q \otimes A]$ given by: \[   \tikzfig{figs/exampleswitch1} \quad = \quad\tikzfig{figs/exampleswitch2} \quad + \quad \tikzfig{figs/exampleswitch3} \] Where $\pi_0 = \ket{0} \bra{0}$ and $\pi_1 = \ket{1} \bra{1}$.
\end{definition}
$\texttt{Switch}$ is a single-party representable polyslot since its action on $\phi_2$ can be written as: \[ \tikzfig{figs/exampleswitchcomb} \]

Where \[   \tikzfig{figs/exampleswitch4pre} \quad = \quad\tikzfig{figs/exampleswitch4} \quad + \quad \tikzfig{figs/exampleswitch5} \] and \[   \tikzfig{figs/exampleswitch4post} \quad = \quad\tikzfig{figs/exampleswitch6} \quad + \quad \tikzfig{figs/exampleswitch7} \] and similarly for the action on $\phi_1$. This definition naturally extends to N-party switches of type $[A,A] \dots [A,A] \rightarrow [Q \otimes A, Q \otimes A]$, it is the conjecture of the authors that all unitary preserving supermaps including those with break causal inequalities admit indefinite dimensional analogues which are polyslots and so single-party representable.

\section{Summary}
The construction $\mathbf{pslot}[\mathbf{C}]$ satisfies a series of conditions which makes it a suitable generalization of the construction of quantum supermaps to arbitrary symmetric monoidal categories.
\begin{itemize}
    \item The definition of $\mathbf{pslot}[\mathbf{C}]$ only references the symmetric monoidal structure of $\mathbf{C}$,
    \item The definition of $\mathbf{pslot}[\mathbf{C}]$ does not assume the decomposition of supermaps into combs when viewed by individual parties, instead, this property is derived by the principle of locality,
    \item $\mathbf{pslot}[\mathbf{C}]$ is a symmetric polycategory into which $\mathbf{C}$ is enriched, which allows for sequential and parallel composition without allowing the formation of time-loops.
    \item $\mathbf{pslot}[\mathbf{C}]$ generalises the construction of unitary and standard quantum supermaps to arbitrary symmetric monoidal categories in the sense that $\mathbf{pslot}[\mathbf{fU}] = \mathbf{uQS}$ and $\mathbf{pslot}[\mathbf{fQC}] = \mathbf{QS}$.
\end{itemize}
Consequently, polyslots have a variety of properties making them suitable for the analysis and definition of supermaps for infinite-dimensional systems. A series of structural theorems guarantee the local realisability of polyslots as combs and the global realisability of polyslots by general internal processes with path-contraction, along with their inherited linearity. Left open is the question of whether in the case of $\mathbf{C} = \mathbf{sepU}$ the polyslots include all possible supermaps that could be defined by applying time-loops to unitaries on Hilbert spaces of separable dimension. Finally, polyslots are broad enough to include infinite-dimensional generalizations of canonical processes of interest such as the quantum switch, consequently polyslots provide a theory-independent definition of supermap with nice enough properties in the quantum realm to provide a potentially handy toolbox in the extension of the study of indefinite causal structure to infinite dimensions.
There are a variety of natural ways in which the work of this paper could be built upon
\begin{itemize}
    \item Whilst the language used in this paper is that of category theory, the theorems proven use the technology of string diagram rewriting. It is an open question as to whether the results of this paper can be viewed as consequences of more higher-level categorical arguments. A partial route to an answer might be the identification of supermaps which locally-decompose as combs and their polycategorical structure as arising from the structure of the preduals in the strong Hyland envelope of the Yoenda embedding of Coend Optics into the category of strong profunctors \cite{hefford2025bvcategoryspacetimeinterventions}. It is an open question as to whether the black-box definition of polyslots can arise in a similar way, and whether the equivalences between black-box and concrete definitions on path-contraction groupoids can also arise from more abstract reasoning regarding the categorical properties of the strong Hyland envelope. 
    \item There are important compositional features of supermaps beyond those inherited by polycategorical semantics, as discovered in \cite{ApadulaNo-signallingStructure}. It is again an open question as to whether such rich compositional semantics is available to the abstract constructions developed here, or whether instead, those compositional features are specific to the structure of quantum theory.
    \item Now that we have a well-behaved definition of supermaps for arbitrary OPTs including infinite-dimensional quantum theory, there is the question of whether the multitude of information processing advantages of supermaps with indefinite causal structure \cite{Ebler2018EnhancedOrder, Araujo2014QuantumOperations, Chiribella2009QuantumStructure, Chiribella2012PerfectStructures, Salek2018QuantumOrders, Chiribella2018IndefiniteChannel, Wilson2020ASwitches, Chiribella2020QuantumOrders, Sazim2020ClassicalChannels} extend past the finite-dimensional quantum-theoretic setting. This question will allow us to develop our understanding of the information processing advantages afforded by theories of quantum gravity.
    \item Further to the above point, it will be important to discover whether the construction of unitary-preserving supermaps from routed graphs \cite{Vanrietvelde2022ConsistentOrder} extends to the construction of polyslots in $\mathbf{sepU}$, so that canonical processes studied in quantum foundations can be lifted to the infinite-dimensional setting. This will require a generalization of polyslots to those which act on compositionally constrained spaces \cite{VanrietveldeRoutedCircuits, Wilson2021ComposableConstraints}
    \item It is unclear in the infinite-dimensional case whether one can find further physically reasonable supermaps by the generalization of the definition of supermaps in compact closed category to a definition of path-contraction supermaps. A proof of the conjecture that path-contraction supermaps in unitary quantum theory are equivalent to polyslots would suggest that a stable, circuit theoretic definition of supermap has been found. 
    \item Another open question is whether the relationship between the causal box framework \cite{Portmann2015CausalComposition} and the process matrix framework used to establish the possibility of embedding of processes with indefinite causal structure into a definitely ordered spacetime \cite{Vilasini2022EmbeddingMatrices}, extends to infinite-dimensional polyslots. The causal box framework, being phrased in terms of Fock space is indeed already expressed in a form suitable for the consideration of infinite dimensions.  
    \item Whilst polyslots freely reconstruct supermaps, they cannot be used in the current form to freely construct all iterated layers of higher order quantum theory \cite{Bisio2019TheoreticalTheory,Kissinger2019AStructure}. A generalization of polyslots to those which in-fact act on polycategories appears to be required for such an iteration. 
\end{itemize}
More broadly, a circuit-theoretic black-box approach to holes in diagrams along with appropriate compositional rules has been proposed. Concrete holes (combs) appear outside of physics to as outlined in the introduction, leading naturally to the question of whether these less concrete black box-holes might also find application outside of the foundations of physics. 

\subsubsection*{Acknowledgements}
MW is grateful to A Vanreitvelde for suggesting use of single-party representability as an axiom from which supermaps could be reconstructed, J Hefford for noting that slots are the centre of the premonoidal category of locally-applicable transformations, and to A Kissinger for useful conversations regarding the linearity of locally-applicable transformations. GC is supported by the Chinese Ministry of Science and Technology (MOST) through grant 2023ZD0300600.. The opinions expressed in this publication are those of the authors and do not necessarily reflect the views of the John Templeton Foundation. GC was supported by the Croucher Foundation and by the Hong Kong Research Grant Council (RGC) though the Senior Research Fellowship Scheme SRFS2021-7S02. MW was supported by University College London and the EPSRC Doctoral Training Centre for Delivering Quantum Technologies. 


\bibliographystyle{utphys.bst}
\bibliography{ref_local_q}
\appendix

\section{Polycategory of $\mathbf{P}$-supermaps}
We will find that when dealing with listed data naive diagrammatic representations become cumbersome, so for readability, we adopt a convention analogous to the convention used for genuine lists in multi/polycategories, choosing for instance to represent the following diagram  \[ \tikzfigscale{0.7}{figs/newmultisupermap2}, \] by: \[ \tikzfigscale{0.7}{figs/newsupermaplist}.  \] Such a language is not formalized but is used to convey the essence of proofs, with the unpacking of details left to the interested reader with access to larger pieces of paper.
\begin{lemma}
A symmetric polycategory $\mathbf{polysup}[\mathbf{P},\mathbf{C}]$ can be defined with objects given by pairs $[A,A']$ of objects of $\mathbf{C}$ and morphisms of type $S: \Gamma \rightarrow \Delta$ given by the $\mathbf{P}$-supermaps of type $S:\Gamma \rightarrow \Delta$, the composition rule is given by taking: \[ \tikzfigscale{0.7}{figs/polypsup1} \quad \circ_{\mathbf{M}} \quad \tikzfigscale{0.7}{figs/polypsup2}  \] to be \[ \tikzfigscale{0.7}{figs/polypsup3}  \] and with symmetric action by permutations given by: \[ \tikzfigscale{0.7}{figs/polysym2} \quad = \quad  \tikzfigscale{0.7}{figs/polysym3} \] \end{lemma}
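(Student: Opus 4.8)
The plan is to recognize $\mathbf{polysup}[\mathbf{P},\mathbf{C}]$ as a structure living entirely inside the compact closed category $\mathbf{P}$: an object $[A,A']$ names the internal wire $A^{*}\otimes A'$, a poly-morphism $S:\Gamma\to\Delta$ is literally a morphism of $\mathbf{P}$ (one satisfying the test condition in the definition of $\mathbf{P}$-supermap, where a list $\Delta=[B_1,B_1']\dots[B_m,B_m']$ is read through the identification with $[B_1\otimes\dots\otimes B_m,\ B_1'\otimes\dots\otimes B_m']$), and the polycomposition $T\circ_{\mathbf{M}}S$ is exactly the composite in $\mathbf{P}$ that joins the $\mathbf{M}$-wire of $S$ to the $\mathbf{M}$-wire of $T$ while carrying every other wire straight through. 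Once this identification is in place, the associativity and interchange equations of a polycategory, together with the unit laws, come for free: each such equation is an equality between two wiring diagrams built from the same boxes $S,T,\dots$ glued in the same pattern, and such equalities already hold in any symmetric monoidal category, hence in $\mathbf{P}$, and hence between the corresponding morphisms of $\mathbf{polysup}[\mathbf{P},\mathbf{C}]$ since its composition \emph{is} that of $\mathbf{P}$. So the genuine content reduces to three points: (i) $\circ_{\mathbf{M}}$ stays inside the class of $\mathbf{P}$-supermaps; (ii) there is an identity $\mathbf{P}$-supermap on each $[A,A']$; (iii) the permutation actions are well defined and equivariant.

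For (i), take $S:\underline{\mathbf{A}}\to\underline{\mathbf{B}}\,\mathbf{M}\,\underline{\mathbf{C}}$ and $T:\underline{\mathbf{D}}\,\mathbf{M}\,\underline{\mathbf{E}}\to\underline{\mathbf{F}}$, and feed an arbitrary family of test morphisms into all the input legs $\underline{\mathbf{D}}\,\underline{\mathbf{A}}\,\underline{\mathbf{E}}$ of $T\circ_{\mathbf{M}}S$. First contract $S$ against the sub-family sitting on $\underline{\mathbf{A}}$; since this sub-family exhausts every input leg of $S$, the $\mathbf{P}$-supermap property of $S$ yields a morphism $\psi$ of $\mathbf{C}$ whose type is the flattening of $\underline{\mathbf{B}}\,\mathbf{M}\,\underline{\mathbf{C}}$, tensored on each side with the extension systems of the $\underline{\mathbf{A}}$-tests. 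Now reinterpret $\psi$ as the test morphism plugged into the $\mathbf{M}$-leg of $T$: its extension systems are precisely the legs $\underline{\mathbf{B}}$ and $\underline{\mathbf{C}}$ together with those $\underline{\mathbf{A}}$-extensions, so $\psi$ is a legitimate test morphism for that leg of $T$. Applying the $\mathbf{P}$-supermap property of $T$ to the family built from the $\underline{\mathbf{D}}$-tests, this $\psi$ on the $\mathbf{M}$-leg, and the $\underline{\mathbf{E}}$-tests, produces a morphism of $\mathbf{C}$; reading off how the various extension systems accumulate on the output side, it has exactly the type that $T\circ_{\mathbf{M}}S$ contracted against the original family is required to have, so $T\circ_{\mathbf{M}}S$ is a $\mathbf{P}$-supermap of type $\underline{\mathbf{D}}\,\underline{\mathbf{A}}\,\underline{\mathbf{E}}\to\underline{\mathbf{B}}\,\underline{\mathbf{F}}\,\underline{\mathbf{C}}$.

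For (ii), the unit on $[A,A']$ is the $\mathbf{P}$-supermap whose internal form is the bare wire $A^{*}\otimes A'$ (it sends a process $A\to A'$ to itself); it trivially satisfies the test condition, and the snake equation of $\mathbf{P}$ makes it a two-sided identity for $\circ_{\mathbf{M}}$. For (iii), the left and right actions of permutations on $\mathbf{polysup}[\mathbf{P},\mathbf{C}](\Gamma,\Delta)$ are given by pre- and post-composing with the symmetry isomorphisms of $\mathbf{P}$ that shuffle the corresponding blocks of wires; permuting input legs simply permutes the test morphisms, so the $\mathbf{P}$-supermap property is preserved, the $\Sigma_n\times\Sigma_m$-action axioms are inherited from the symmetric monoidal coherence of $\mathbf{P}$, and naturality of the braiding gives the required compatibility of the actions with $\circ_{\mathbf{M}}$.

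I expect the only real obstacle to be step (i), and within it the bookkeeping of auxiliary systems: one must check that the partial contraction of a $\mathbf{P}$-supermap against test morphisms on only part of its legs is again a bona fide test morphism for the next supermap — which works precisely because the definition of $\mathbf{P}$-supermap quantifies over \emph{all} ancillary systems $X_i,X_i'$ — and that the way the input list is split between $S$ and $T$, together with the permutations involved, does not spoil the typing. This is a slightly delicate but ultimately routine diagram chase, and it is the reason for the list-level diagrammatic convention introduced at the start of the appendix; the remaining axioms are, as noted, simply imported from the symmetric monoidal (indeed compact closed) structure of $\mathbf{P}$.
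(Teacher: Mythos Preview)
Your proposal is correct and follows essentially the same approach as the paper: you verify closure under $\circ_{\mathbf{M}}$ by first contracting $S$ against its full family of test morphisms to obtain a morphism of $\mathbf{C}$, then feeding that as the test morphism on the $\mathbf{M}$-leg of $T$, and you inherit the remaining polycategory axioms from the symmetric monoidal structure of $\mathbf{P}$. The paper's proof is in fact briefer than yours, omitting the explicit treatment of identities and the permutation action that you spell out in (ii) and (iii).
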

\begin{proof}
This composition rule returns a new $\mathbf{P}$-supermap since the application of $T \circ_M S$ can be written \[ \tikzfigscale{0.7}{figs/polysym_welldf}\] which by the interchange law for symmetric monoidal categories can be converted to
\[ \tikzfigscale{0.7}{figs/polysym_welldf2}\]
where since $S$ is a $\mathbf{P}$-supermap we can replace the action of $S$ by a new morphism $S'(a)$ of $\mathbf{C}$ to give
\[ \tikzfigscale{0.7}{figs/polysym_welldf3}\] what remains is the actions of $T$ on a series of channels with $B,C$ considered as extensions of the morphism $S'(a)$, consequently, the entire global diagram is a morphism of $\mathbf{C}$. 
The requires interchange laws for symmetric polycategories are satisfied as they are inherited directly from the interchange laws and symmetry of the symmetric monoidal structure of $\mathbf{P}$.
\end{proof}
It is noted in the main text that composition along multiple wires ought not to be allowed, so as to avoid the creation of time-loops, this point can be made at a more technical level now an explicit definition of supermap has been given.
A simple example demonstrates why two-wire composition rules are in general forbidden. Since $\mathbf{C}$ is a symmetric monoidal category, for any $\mathbf{C} \subseteq \mathbf{P}$ with $\mathbf{P}$ compact closed then there exists a $\mathbf{P}$-supermap of type $S:[A,A][A,A] \rightarrow [A,A]$ which performs sequential composition:  \[ \tikzfigscale{0.7}{figs/polywhy1form} \quad = \quad  \tikzfigscale{0.7}{figs/polywhy2form} \]
This is indeed a supermap since for all $\phi_1,\phi_2$ then: \[ \tikzfigscale{0.7}{figs/polywhy1b} \quad = \quad  \tikzfigscale{0.7}{figs/polywhy3new} \quad = \quad  \tikzfigscale{0.7}{figs/polywhy4}  \] which since $\mathbf{C}$ is a symmetric monoidal category must be in $\mathbf{C}$. Next note that there exists a $\mathbf{P}$-supermap of type $\phi:\emptyset \rightarrow [A,A][A,A]$ given by: \[ \tikzfigscale{0.7}{figs/polywhystateform} \] Indeed note that it is a supermap since the following \[  \tikzfigscale{0.7}{figs/polywhystate2} \quad = \quad  \tikzfigscale{0.7}{figs/polywhystate3}  \] is a member of $\mathbf{C}$ given that $\mathbf{C}$ is symmetric monoidal. However, if we were to try to compose $\phi$ and $S$ along both of their output/input wires, to give meaning to the following diagram \[ \tikzfigscale{0.7}{figs/polyfailform}  \] then a loop would be formed:  \[ \tikzfigscale{0.7}{figs/polywhy5} \quad = \quad  \tikzfigscale{0.7}{figs/polywhy6} \] There is no guarantee that this re-normalisation by a scalar preserves membership of $\mathbf{C}$, indeed in the study of quantum causal structure such loops are often interpreted as time-loops, and in the category $\mathbf{U}$ we find that such a re-normalisation does not preserve membership of $\mathbf{U}$. In the above sense we can see that the natural emergence of a polycategorical semantics can be understood as a compositional semantics which prevents the forming of time-loops.


\section{Monoidal category of Slots}
To express the slot condition algebraically and proove symmetric monodial structure, we will find it easier to talk about for each $T$ the induced transformation $ (\beta T \beta )_{A_1 X}^{A_1^{'} X'} :=  \beta_{B_2 A_1}^{B_2^{'}A_1^{'}} T_{A_1 \otimes X}^{A_1^{'} \otimes X'} \beta_{ A_1 B_1}^{A_1^{'} B_1^{'}}$ defined by taking $\beta_{ABX'}^{A'B'X} := \mathbf{C}(\beta_{A B} \otimes X, \beta_{A^{'} B^{'}} \otimes X)$ and so then: \[  \begin{tikzcd}
{\mathbf{C}(A_1 \otimes B_1 \otimes X,A_1' \otimes B_1' \otimes X')} \arrow[rr, "{\mathbf{C}(\beta_{A_1 B_1} \otimes X, \beta_{A_1^{'} B_{1}^{'}} \otimes X)}", bend left] \arrow[d, "(\beta T \beta )_{A_1 X}^{A_1^{'}  X'}"', bend right] &  & {\mathbf{C}(B_1 \otimes A_1 \otimes X,B_1' \otimes A_1' \otimes X')} \arrow[d, "T_{A_1 \otimes X}^{A_1^{'} \otimes X'}", bend left]                                        \\
{\mathbf{C}(A_1 \otimes B_2 \otimes X,A_1' \otimes B_2' \otimes X')}                                                                                                                                                                        &  & {\mathbf{C}(B_2 \otimes A_1 \otimes X,B_2' \otimes A_1' \otimes X')} \arrow[ll, "{\mathbf{C}(\beta_{B_2 A_1} \otimes X, \beta_{B_2^{'} A_{1}^{'}} \otimes X)}", bend left]
\end{tikzcd}  \]

Note that for now we assume our underlying monoidal category is strict so that we do not have to keep track of associators and unitors.
\begin{theorem}
A monoidal category $\mathbf{slot}[\mathbf{C}]$ can be defined by taking morphisms $(A_1,A_1') \rightarrow (A_2,A_2')$ to be natural transformations $S:\mathbf{C}(A_1 \otimes -,A_1' \otimes =) \rightarrow \mathbf{C}(A_2 \otimes -,A_2' \otimes =)$ such that for every $T: \mathbf{C}(B_1 \otimes -  , B_1' \otimes =) \Rightarrow   \mathbf{C}(B_2 \otimes - , B_2' = ) $ then \[ \begin{tikzcd}
{\mathbf{C}(A_1 \otimes B_1 \otimes X,A_1' \otimes B_1' \otimes X')} \arrow[rr, "{S_{B_1 \otimes X,B_1' \otimes X'}}"] \arrow[d, "{\beta T \beta _{A_1,X,A_1',X'}}"'] &  & {\mathbf{C}(A_2 \otimes B_1 \otimes X,A_2' \otimes B_1' \otimes X')} \arrow[d, "{\beta T \beta_{A_2,X,A_2',X'}}"] \\
{\mathbf{C}(A_1 \otimes B_2 \otimes X,A_1' \otimes B_2' \otimes X')} \arrow[rr, "{S_{B_2 \otimes X,B_2' \otimes X'}}"']                                  &  & {\mathbf{C}(A_2 \otimes B_2 \otimes X,A_2' \otimes B_2' \otimes X')}                                 
\end{tikzcd} \]
\end{theorem}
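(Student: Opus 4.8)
The plan is to check the following in order: that the stated data form a category $\mathbf{slot}[\mathbf{C}]$; that the proposed product $\boxtimes$ is well defined on morphisms; that $S\boxtimes T$ is again a slot whenever $S$ and $T$ are; that $\boxtimes$ is a bifunctor satisfying the interchange law; and that associators, unitors and braiding transport from $\mathbf{C}$, so that $\mathbf{slot}[\mathbf{C}]$ becomes symmetric monoidal. Throughout, the identification of a slot as a natural transformation satisfying the displayed commuting square is used to turn the diagrammatic slot condition of the main text into the algebraic statement above.

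For the underlying category: the identity natural transformation on $\mathbf{C}(A\otimes-,A'\otimes=)$ satisfies the slot condition trivially, since the square in the statement degenerates to the naturality square of $\beta T\beta$. If $S\colon(A_1,A_1')\to(A_2,A_2')$ and $S'\colon(A_2,A_2')\to(A_3,A_3')$ are slots, then for each $T$ the square witnessing that $S'\circ S$ is a slot is obtained by stacking the squares for $S$ and for $S'$ vertically; hence $\mathbf{slot}[\mathbf{C}]$ is closed under vertical composition of natural transformations, and associativity and unitality are inherited. This gives the category $\mathbf{slot}[\mathbf{C}]\subseteq\mathbf{lot}[\mathbf{C}]$.

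The product on objects is $[A,A']\boxtimes[B,B']:=[A\otimes B,A'\otimes B']$, and on morphisms $S\boxtimes T$ is defined by either composite around the square in the statement. These two composites agree precisely because $S$ is a slot applied against $T$, so $(S\boxtimes T)_{X,X'}$ is unambiguous; it is natural in $X,X'$ by naturality of $S$, of $T$ and of the conjugating braids. The crux is to check that $S\boxtimes T$ is itself a slot, i.e.\ that it commutes with an arbitrary locally-applicable transformation $R\colon(C_1,C_1')\to(C_2,C_2')$ on a further auxiliary pair of wires. The strategy is to unfold $S\boxtimes T$ as ``$S$ applied with $T$ braided onto the environment'', then to slide $R$ across in two stages: first past the $T$-part, using that $T$ is a slot and hence commutes with $R$ on the disjoint system, and then past the $S$-part, using that $S$ is a slot and hence commutes with the locally-applicable transformation obtained by running $T$ (suitably conjugated) in parallel with $R$ on the combined environment. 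Running the argument for both composites in the definition of $S\boxtimes T$ shows it satisfies the slot condition. I expect this closure step to be the main obstacle: because ``slot'' quantifies over \emph{all} locally-applicable transformations, the delicate point is to verify that the parallel composite of $T$'s conjugate with $R$ is again a bona fide locally-applicable transformation of the correct type, so that the slot property of $S$ may legitimately be invoked against it.

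It then remains to check the structural axioms, all of which reduce to the corresponding facts in $\mathbf{C}$. Bifunctoriality, $(S'\boxtimes T')\circ(S\boxtimes T)=(S'\circ S)\boxtimes(T'\circ T)$ and $\mathrm{id}\boxtimes\mathrm{id}=\mathrm{id}$, follows by unwinding the definition and using the slot condition to reorder the nested applications; the special cases $(S\boxtimes\mathrm{id})\circ(\mathrm{id}\boxtimes T)=(\mathrm{id}\boxtimes T)\circ(S\boxtimes\mathrm{id})$ are exactly the interchange law. The associators and left/right unitors of $\mathbf{slot}[\mathbf{C}]$ are induced objectwise by conjugating the hom-set families with those of $\mathbf{C}$, and the pentagon and triangle identities are inherited. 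Likewise the braiding $[A,A']\boxtimes[B,B']\to[B,B']\boxtimes[A,A']$ is induced by $\beta_{A,B}$ and $\beta_{A',B'}$ acting by conjugation; it is automatically a slot morphism, since conjugation by a fixed morphism of $\mathbf{C}$ commutes with every locally-applicable transformation by naturality, and the hexagon axioms together with $\beta\circ\beta=\mathrm{id}$ descend from $\mathbf{C}$.
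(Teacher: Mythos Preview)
Your proposal is correct and follows essentially the same route as the paper, though it is considerably more thorough. The paper's own proof simply writes $(S\boxtimes T):=S\circ\beta\circ T\circ\beta$ and then verifies the interchange law in one algebraic line,
\[
(S\boxtimes T)(S'\boxtimes T')=S\beta T\beta S'\beta T'\beta=SS'\beta T\beta\beta T'\beta=(SS')\boxtimes(TT'),
\]
using the slot property of $S'$ to commute $\beta T\beta$ past it and then $\beta\beta=\mathrm{id}$; it handles the unit similarly and gestures at associators/unitors being inherited from $\mathbf{C}$. In particular the paper does \emph{not} explicitly verify that $S\boxtimes T$ is again a slot, nor does it check closure of composition in $\mathbf{slot}[\mathbf{C}]$ or spell out the braiding. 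Your identification of the closure step as the crux, and your two-stage sliding strategy for it, are genuine additions; the delicate point you flag dissolves once you note that you never actually need the composite of $\beta T\beta$ with $R$ to be a locally-applicable transformation: since $R$ by itself is a locally-applicable transformation on a subsystem of $S$'s environment, and separately on a subsystem of $T$'s environment, each of the two commutations can be carried out against $R$ alone.
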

\begin{proof}
From now on we omit indices on natural transformations. The assignment $\boxtimes$ given by
\begin{itemize}
    \item $[A,A']\boxtimes [B,B'] = [A \otimes B,A' \otimes B']$
    \item $(S \boxtimes T) = S \circ \beta \circ T \circ \beta$
\end{itemize}
defines a bifunctor $\boxtimes : \mathbf{slot}[\mathbf{C}] \times \mathbf{slot}[\mathbf{C}] \rightarrow \mathbf{slot}[\mathbf{C}]$. The interchange law is satisfied by the following 
\begin{align}
(S \boxtimes T)  (S' \boxtimes T') & = S\beta T \beta S' \beta T' \beta \\
& =    S S' \beta T \beta \beta T' \beta \\
& = (SS') \boxtimes (TT')
\end{align}
On the identity note that $i \boxtimes i = i \beta i \beta = \beta \beta = i$. The unit object is taken to be $(I,I)$,in the non-strict case one could define associators and unitors by inheriting them from $\mathbf{C}$. We assign a bifunctor $[-,-]$ by $[A,A'] := (A,A')$, with $[f,g]_{EE'}(\phi) := (g \otimes E') \circ \phi \circ (f \otimes E)$. The natural isomorphism is given by $\kappa(f)_{EE'}(\phi) = f \otimes \phi$ and $\kappa^{-1}(S) = S_{II}(id)$, where again we assume our underlying category is strict. The required morphism $p$ is given by the identity, which as a result immediately satisfies all of the relevant coherence conditions. 
\end{proof}


\section{Polycategory of polyslots}
To prove the following results algebraically is possible but extremely unreadable due to the need to keep track of symmetries, for readability we prefer to present our proofs in graphical form.
\begin{theorem}
The polyslots on $\mathbf{C}$ define a polycategory $\mathbf{pslot}[\mathbf{C}]$ with:
\begin{itemize}
    \item Objects given by pairs $[A,B]$ with $A,B$ objects of $\mathbf{C}$
    \item Poly-morphisms of type $S:\Gamma \rightarrow \Theta$ given by polyslots of type $S:[A_1,A_1'] \dots [A_n,A_n'] \rightarrow [B_1 \otimes \dots \otimes B_m ,B_1' \otimes  \dots \otimes B_m']$
    \item Composition given by  \[ \tikzfig{figs/locrepoly2} \]
    \item Symmetric action by permutations given by taking \[ \tikzfigscale{0.6}{figs/polyslotproof1} \] to be \[ \tikzfigscale{0.6}{figs/polyslotproof2} \]
\end{itemize}
\end{theorem}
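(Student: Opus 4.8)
The plan is to verify the three ingredients of a symmetric polycategory in turn: that the displayed composition of two polyslots is again a polyslot, that it satisfies the polycategorical associativity/interchange and unit laws, and that the permutation action is well defined and equivariant. Only the first point requires genuine work; the rest is inherited from the symmetric monoidal structure of $\mathbf{C}$ in essentially the same way as for $\mathbf{polysup}[\mathbf{P},\mathbf{C}]$.

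\emph{Closure.} Write $T\circ_M S$ for the composite of $S:\underline{\mathbf{A}}\rightarrow\underline{\mathbf{B}}\,\mathbf{M}\,\underline{\mathbf{C}}$ and $T:\underline{\mathbf{D}}\,\mathbf{M}\,\underline{\mathbf{E}}\rightarrow\underline{\mathbf{F}}$. First, $T\circ_M S$ is a multi-party locally-applicable transformation: the composition diagram is assembled from the l.a.t.\ shells of $S$ and $T$ together with braids of $\mathbf{C}$, so auxiliary actions on the extension wires slide past it exactly as in the lemma that single-party representable supermaps are locally applicable. It remains to verify the polyslot condition at each surviving input leg. Fixing all inputs except one, there are two cases. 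If the surviving leg is an input $a_j$ of $S$, then the polyslot axiom for $S$ makes the restriction $S^{j}$ of $S$ a \emph{slot} with (lumped) codomain $\underline{\mathbf{B}}\otimes M\otimes\underline{\mathbf{C}}$ tensored with the extensions of the fixed legs, while the polyslot axiom for $T$ (all its non-$\mathbf{M}$ inputs fixed) makes the restriction $T^{M}$ of $T$ a slot on its $\mathbf{M}$ leg; the restriction of $T\circ_M S$ to $a_j$ is then, up to braidings — which lie in $\mathbf{slot}[\mathbf{C}]$ since they have the form $[\beta,\beta]$ — the sequential composite in $\mathbf{lot}[\mathbf{C}]$ of $S^{j}$ with $T^{M}\boxtimes\mathrm{id}$. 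Because $\mathbf{slot}[\mathbf{C}]$ is a symmetric monoidal subcategory of $\mathbf{lot}[\mathbf{C}]$, it is closed under sequential composition and under $\boxtimes$ with identities, so this restriction is a slot. If instead the surviving leg is an input of $T$ (lying in $\underline{\mathbf{D}}$ or $\underline{\mathbf{E}}$), then fixing all inputs of $S$ collapses $S$ to a fixed morphism $\sigma$ of $\mathbf{C}$ whose $M$-part is plugged into the corresponding input of $T$, and the restriction of $T\circ_M S$ to the surviving leg is the restriction of $T$ to that leg (a slot, by the polyslot axiom for $T$) tensored with the identity on the spectator wires of $\sigma$ (its $\underline{\mathbf{B}},\underline{\mathbf{C}}$ legs and auxiliary wires) — again a slot. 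Since this holds for every leg, $T\circ_M S$ is a polyslot, and the stated identity polyslots make composition unital.

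\emph{Associativity, units, symmetry.} The composition is realized concretely by feeding the $\mathbf{M}$-output wire of one polyslot into the $\mathbf{M}$-input wire of the other inside the graphical calculus for locally-applicable transformations, so the polycategorical associativity and interchange laws (contracting along disjoint $\mathbf{M}$ and $\mathbf{N}$ in either order, and nested contractions along the same leg) follow from associativity of function composition together with the interchange law of $\mathbf{C}$, by the same bookkeeping used for $\mathbf{polysup}[\mathbf{P},\mathbf{C}]$. The identity polyslot on $[A,A']$ is the l.a.t.\ $\phi\mapsto\phi$, which is trivially a slot with respect to every l.a.t. Finally, the permutation action on the input and output lists is relabeling of legs together with insertion of the corresponding braid of $\mathbf{C}$; it is a group action, and naturality of $\beta$ together with the interchange law give its equivariance with respect to $\circ_M$, completing the symmetric-polycategory axioms.

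\emph{Main obstacle.} The substance is the closure step, and within it the wire-bookkeeping: one must keep careful track of which systems are genuine domains/codomains and which are auxiliary extensions, insert the braids needed to align the $\mathbf{M}$ wire being contracted, and invoke the monoidal closure of $\mathbf{slot}[\mathbf{C}]$ at the right moment so that the leg-restricted composite is recognized as an honest arrow of $\mathbf{slot}[\mathbf{C}]$. Once the correctly reshuffled form is identified, each case reduces to closure of $\mathbf{slot}[\mathbf{C}]$ under composition and $\boxtimes$; the remaining axioms are routine consequences of working inside the symmetric monoidal category $\mathbf{C}$, exactly as for the $\mathbf{P}$-supermap polycategory.
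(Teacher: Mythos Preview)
Your closure argument is correct and in fact more carefully spelled out than the paper's own treatment. The gap is in your ``Associativity, units, symmetry'' paragraph: the parallel-interchange law for polyslots does \emph{not} follow from associativity of function composition together with the interchange law of $\mathbf{C}$, and the analogy with $\mathbf{polysup}[\mathbf{P},\mathbf{C}]$ is misleading. In the $\mathbf{P}$-supermap case the supermaps are genuine morphisms of a compact closed category and interchange is inherited from the interchange of $\mathbf{P}$. Here there is no ambient $\mathbf{P}$: polyslots are families of set-functions, and the composition rule is nested function application, not tensoring in $\mathbf{C}$.

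Concretely, take $S$ with two output legs $\mathbf{M},\mathbf{N}$ and polyslots $T,Q$ plugging into $\mathbf{M},\mathbf{N}$ respectively. Unwinding the composition, $T\circ_{\mathbf{M}}(Q\circ_{\mathbf{N}}S)$ first applies $S$, then applies the induced slot $Q^{\mathbf{N}}$ to the $\mathbf{N}$-part of the resulting morphism (with everything else as extension), then applies $T^{\mathbf{M}}$ to the $\mathbf{M}$-part; the other bracketing reverses the order of $Q^{\mathbf{N}}$ and $T^{\mathbf{M}}$. Their equality is precisely an instance of the slot commutation axiom (a slot commutes with every locally-applicable transformation acting on an auxiliary wire), not of the interchange law of $\mathbf{C}$. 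This is exactly the step the paper isolates: its appendix proof spends most of its effort on this interchange law, manoeuvring with braids until the defining slot equation can be invoked, and only then declares the second interchange law ``more straightforward''. So the substance you flagged as ``main obstacle'' (bookkeeping for closure) is real, but the slot equation must be used a second time, for interchange; without it your argument would go through equally well for arbitrary locally-applicable transformations, and we already know those do not satisfy interchange.
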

\begin{proof}
We  confirm interchange laws for composition, that is, that: \[ \tikzfigscale{0.6}{figs/polyslotproof3} \quad = \quad \tikzfigscale{0.6}{figs/polyslotproof4} \] Indeed, consider \[ \tikzfigscale{0.6}{figs/polyslotproof5}, \] applying the symmetric action gives: \[ \tikzfigscale{0.6}{figs/polyslotproof6}, \] using the compsoition rule gives \[ \tikzfigscale{0.6}{figs/polyslotproof7}, \] or equivalently using the definition of slot induced by a polyslot \[ \tikzfigscale{0.6}{figs/polyslotproof8b}, \] We then use the composition rule again to give \[ \tikzfigscale{0.6}{figs/polyslotproof9}, \] Again converting into slot form gives: \[ \tikzfigscale{0.6}{figs/polyslotproof10}, \] using a series of swaps to set up the defining condition for slots gives \[ \tikzfigscale{0.6}{figs/polyslotproof11}, \] after-which the slot equation can finally be used to return \[ \tikzfigscale{0.6}{figs/polyslotproof12}, \]
Unpacking the definition of $\hat{T}$ gives
\[ \tikzfigscale{0.6}{figs/polyslotproof16}, \]
re-packaging the composition between $T$ and $S$ gives \[ \tikzfigscale{0.6}{figs/polyslotproof15}, \]
Unpacking the definition of $\hat{Q}$ gives
\[ \tikzfigscale{0.6}{figs/polyslotproof14}, \]
and finally repackaging the composition rule gives \[ \tikzfigscale{0.6}{figs/polyslotproof13}, \] and so indeed the interchange law is satisfied. The other interchange law which needs to be checked is more straightforward: \[ \tikzfigscale{0.6}{figs/polyslotproof3flip} \quad = \quad \tikzfigscale{0.6}{figs/polyslotproof4flip} \] We first consider the latter term,
\[ \tikzfigscale{0.6}{figs/polyslotproof17}, \] and then use the definition of the symmetric action
\[ \tikzfigscale{0.6}{figs/polyslotproof18}, \] then we use the definition of composition along $T$
\[ \tikzfigscale{0.6}{figs/polyslotproof19b}, \] and then use the definition of composition along $Q$
\[ \tikzfigscale{0.6}{figs/polyslotproof20}, \] then using the definition of composoition along $T$,
\[ \tikzfigscale{0.6}{figs/polyslotproof21}, \] and finally the definition of composition along $Q$ gives the result
\[ \tikzfigscale{0.6}{figs/polyslotproof22}. \] the unit polymorphism of type $[A,A'] \rightarrow [A,A']$ is given by the slot with each $X,X'$ component given by the identity function of type $id: \mathbf{C}(AX,A'X')$. The associativity of sequential compositions is directly inherited from associativity of sequential composition for functions composition.
\end{proof}



\end{document}